\newcommand{\nn}{\nonumber}
\newcommand{\beq}{\begin{equation}}
\newcommand{\eeq}{\end{equation}}
\newcommand{\CC}{{\mathbb C}}
\newcommand{\tr}{{\rm tr}}
\newcommand{\bt}{{\bm{t}}}
\newcommand{\e}{\epsilon}
\newcommand{\He}{{\rm He}}
\newcommand{\h}{{\rm H}}
		\theoremstyle{plain}
		\numberwithin{equation}{section}
		\newtheorem{theorem}{Theorem}[section]
		\newtheorem{lemma}{Lemma}[section]
		\newtheorem{Definition}{Definition}[section]
		\newtheorem{remark}{Remark}[section]	
		\newtheorem{pro}{Proposition}[section]
		\newtheorem{cor}{Corollary}[section]
\begin{document}
\title{The matrix-resolvent method to tau-functions for the \\ nonlinear Schr\"{o}dinger hierarchy}
\author{Ang Fu, Di Yang}
\date{}
\maketitle
		
\begin{abstract}
We extend the matrix-resolvent method
of computing logarithmic derivatives of tau-functions to the nonlinear Schr\"{o}dinger (NLS) hierarchy. 
Based on this method we give a detailed  proof of a theorem of Carlet, Dubrovin and Zhang 
regarding the relationship between the Toda lattice hierarchy and the NLS hierarchy. 
As an application, we give an improvement of an algorithm of computing correlators in hermitian matrix models. 
\end{abstract}

\tableofcontents
    
\section{Introduction and statements of the results}\label{sec1}

Let
	$\mathcal{A}:=\mathbb{C}\left[q_0, r_0, q_1, r_1, q_{2}, r_{2}, \cdots\right]$ be the polynomial ring. Define a derivation $\partial:\mathcal{A}\rightarrow \mathcal{A}$ via
\begin{align}
	\partial(q_i)= q_{i+1},\quad \partial(r_i)= r_{i+1},\quad \partial(fg)=\partial(f)g+f\partial(g),\quad \forall\, i\ge 0,\,\forall\, f,g\in\mathcal{A}.
\end{align}
For convenience,  denote $q=q_0,r=r_0$. Let $\mathcal{L}^{\rm{NLS}}(\xi)$ be the matrix Lax operator (cf.~e.g.~\cite{AC91,D03, L99, NMPZ84,T04}) 
\begin{align}
	\mathcal{L}^{\rm{NLS}}(\xi)=\begin{pmatrix}
		\epsilon\partial&0\\
		0&\epsilon\partial
	\end{pmatrix}+\begin{pmatrix}
		-\xi & -q\\
		r & \xi\\
	\end{pmatrix}=\epsilon\partial+U^{\rm{NLS}}(\xi),\quad U^{\rm{NLS}}(\xi):=\begin{pmatrix}
	-\xi & -q\\
	r & \xi\\
\end{pmatrix}.\label{laxnls}
\end{align}
Here, $\xi$ is a parameter. 
We have the following lemma.
\begin{lemma}\label{lemmarnls}
	There exists a unique element $R^{\rm{NLS}}(\xi)\in\mathrm{Mat} \left(2, \mathcal{A}[\epsilon]\left[\left[\xi^{-1} \right] \right]\right)$, such that
	\begin{align}
	&	R^{\rm{NLS}}(\xi)-\begin{pmatrix}
			2&0\\
			0&0\\
		\end{pmatrix}
\in\mathrm{Mat} \left(2, \mathcal{A}[\epsilon]\left[\left[\xi^{-1} \right] \right]\xi^{-1}\right)\label{rnlsdef},\\
	&   \left[\mathcal{L}^{\rm{NLS}}(\xi), R^{\rm{NLS}}(\xi)\right]=0\label{21},\\
	&    \mathrm{tr}\ R^{\rm{NLS}}(\xi)=2, \quad \mathrm{det}\ R^{\rm{NLS}}(\xi)=0.\label{22}	
	\end{align}
\end{lemma}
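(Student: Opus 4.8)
The plan is to translate the three conditions into an explicit recursion for the coefficients of the $\xi^{-1}$-expansion and then to solve that recursion. Write $U^{\rm{NLS}}(\xi)=-\xi\,\mathrm{diag}(1,-1)+V$, where $V=\bigl(\begin{smallmatrix}0&-q\\ r&0\end{smallmatrix}\bigr)$ is the $(q,r)$-part of $U^{\rm{NLS}}(\xi)$, and look for $R^{\rm{NLS}}(\xi)$ in the form $R^{\rm{NLS}}(\xi)=\bigl(\begin{smallmatrix}1+A&B\\ C&1-A\end{smallmatrix}\bigr)$, with $A=1+\sum_{k\ge1}a_k\xi^{-k}$, $B=\sum_{k\ge1}b_k\xi^{-k}$, $C=\sum_{k\ge1}c_k\xi^{-k}$ and $a_k,b_k,c_k\in\mathcal{A}[\epsilon]$ to be determined; any matrix of this shape automatically satisfies \eqref{rnlsdef} and $\mathrm{tr}\,R^{\rm{NLS}}(\xi)=2$. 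Using $[\mathcal{L}^{\rm{NLS}}(\xi),R^{\rm{NLS}}(\xi)]=\epsilon\partial R^{\rm{NLS}}(\xi)+[U^{\rm{NLS}}(\xi),R^{\rm{NLS}}(\xi)]$ and comparing the four entries, equation \eqref{21} is equivalent to the system $\epsilon\partial A=qC+rB$, $\ 2\xi B=\epsilon\partial B+2qA$, $\ 2\xi C=-\epsilon\partial C-2rA$ (the $(2,2)$-entry merely repeats the $(1,1)$-entry), while $\mathrm{det}\,R^{\rm{NLS}}(\xi)=0$ in \eqref{22} is equivalent to the single scalar identity $A^2+BC=1$.

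Next I would extract the recursion. Comparing coefficients of $\xi^{-k}$ in the second and third equations of the system gives $b_{k+1}=\tfrac{\epsilon}{2}\partial b_k+q\,a_k$ and $c_{k+1}=-\tfrac{\epsilon}{2}\partial c_k-r\,a_k$ for all $k\ge0$, with $b_0=c_0=0$; comparing coefficients of $\xi^{-k}$ in $A^2+BC=1$ gives $a_0^2=1$ and, for $k\ge1$, $a_k=-\tfrac12\sum_{i=1}^{k-1}\bigl(a_ia_{k-i}+b_ic_{k-i}\bigr)$. The normalization in \eqref{rnlsdef} forces $a_0=1$. Read in the order $a_0;\ b_1,c_1;\ a_1;\ b_2,c_2;\ a_2;\ \dots$, these relations determine each $a_k,b_k,c_k$ from the previously computed ones, with no freedom left, and an easy induction shows that they are polynomials in $q_0,r_0,q_1,r_1,\dots$ and $\epsilon$, i.e.\ genuinely lie in $\mathcal{A}[\epsilon]$. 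Since the determination is forced, this is already the uniqueness statement.

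For existence I would \emph{define} $A,B,C$ by the recursion above and verify that the resulting $R^{\rm{NLS}}(\xi)$ meets all requirements. By construction \eqref{rnlsdef}, $\mathrm{tr}\,R^{\rm{NLS}}(\xi)=2$, the identity $A^2+BC=1$ (hence $\mathrm{det}\,R^{\rm{NLS}}(\xi)=0$), and the last two equations of the system hold; the only point needing an argument is the first equation, $\epsilon\partial A=qC+rB$. The key observation is that this equation is not independent of the others. Indeed, applying $\partial$ to $A^2+BC=1$ yields $2A\,\partial A=-\bigl((\partial B)C+B\,\partial C\bigr)$, while multiplying $\epsilon\,\partial B=2\xi B-2qA$ by $C$, multiplying $\epsilon\,\partial C=-2\xi C-2rA$ by $B$, and adding yields $\epsilon\bigl((\partial B)C+B\,\partial C\bigr)=-2A(qC+rB)$; combining the two gives $2A\bigl(\epsilon\,\partial A-qC-rB\bigr)=0$. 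Since $A$ has constant term $1$, it is invertible in $\mathcal{A}[\epsilon][[\xi^{-1}]]$, so $\epsilon\,\partial A=qC+rB$, which completes the proof. I expect this last step to be the only delicate point: one should obtain $\epsilon\partial A=qC+rB$ formally, from $A^2+BC=1$ together with the other two equations and the invertibility of $A$, rather than attempting to check it coefficient by coefficient; everything else is routine bookkeeping.
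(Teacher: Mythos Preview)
Your proof is correct and follows essentially the same approach as the paper: write $R^{\rm NLS}(\xi)$ with undetermined entries, translate \eqref{21}--\eqref{22} into the same three differential equations together with the determinant relation, extract the recursion (using the $b$- and $c$-equations together with the determinant relation to compute successive coefficients), and then check that the remaining equation for $A$ is forced by the others. Your compatibility argument---deriving $2A(\epsilon\partial A - qC - rB)=0$ and cancelling the invertible factor $A$---is in fact a bit more explicit than the paper's, which simply differentiates the determinant relation and remarks that the result ``agrees with'' the three differential equations.
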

The proof of this lemma is given in Section~\ref{section2}.
We call the unique series $R^{\rm{NLS}}(\xi)$ in the above Lemma~\ref{lemmarnls} the {\it basic matrix resolvent} (basic MR) of $\mathcal{L}^{\rm{NLS}}(\xi)$ (cf.~\cite{BDY16, BDY21,D81-1,D81-2,D20,Z15}). For the reader's convenience, we give the first few terms of  $R^{\rm{NLS}}(\xi)$:
\begin{align}
	     R^{\rm{NLS}}(\xi)=&
	\begin{pmatrix}
		2&0\\0&0\\
	\end{pmatrix}+
	\begin{pmatrix}
		0&q\\
		-r&0
	\end{pmatrix}\frac{1}{\xi}
	+\begin{pmatrix}
		\frac{qr}{2}&\frac{\epsilon q_1}{2}\\
		\frac{\epsilon r_1}{2}&-\frac{qr}{2}
	\end{pmatrix}\frac{1}{\xi^2}
	+\begin{pmatrix}
		\frac{\epsilon}{4}(q_1r-qr_1)&\frac{1}{4}(\epsilon^2q_{2}+2q^2r)\\
		-\frac{1}{4}(\epsilon^2r_{2}+2r^2q)&-\frac{\epsilon}{4}(q_1r-qr_1)
	\end{pmatrix}\frac{1}{\xi^3} \nonumber\\
  &+\begin{pmatrix}
  	\frac{1}{8}(\epsilon^2(qr_{2}- q_1r_1+rq_{2})+3q^2r^2)&\frac{1}{8}(\epsilon^3q_{3}+6\epsilon qq_1r)\\
  	\frac{1}{8}(\epsilon^3r_{3}+6\epsilon rr_1q)&-\frac{1}{8}(\epsilon^2(qr_{2}-q_1r_1+rq_{2})+3q^2r^2)
  \end{pmatrix}\frac{1}{\xi^4}
	+\mathcal{O}(\xi^{-5}). \label{nlsresolv}        
\end{align}

Recall that a derivation~$D$ on~$\mathcal{A}$ is called \textit{admissible}, if it commutes with~$\partial$. Let $(D_j)_{j\ge 0}$ be a sequence of 
admissible derivations, defined via
\begin{align}
	D_{j}(\mathcal{L}^{\rm{NLS}}(\xi))=2^{j}\epsilon^{-1} \Bigl[V_{j}^{\rm NLS}(\xi),\mathcal{L}^{\rm{NLS}}(\xi)\Bigr],\label{zeroeqnls}
\end{align}
where $V_{j}^{\rm NLS}(\xi)=\left(\xi^{j+1}R^{\rm NLS}(\xi)\right)_{+}$. 
Using \eqref{21}, \eqref{zeroeqnls}, we can prove that $D_j$ are well defined from~\eqref{zeroeqnls} and we have the more explicit expression
\begin{align}
	D_j(q)=2^{j+1}\epsilon^{-1}B_{j+1},\qquad
	D_j(r)=2^{j+1}\epsilon^{-1}C_{j+1}.\label{abstflow}
\end{align}
For example, $D_0(q)= q_1, D_0(r)= r_1, D_1(q)=\epsilon q_{2}+2\epsilon^{-1}q^2r, D_1(r)=\epsilon r_{2}+2\epsilon^{-1}r^2q$.
 Clearly, $D_0=\partial$. We will prove in Lemma \ref{corcom} that $(D_j)_{j\ge 0}$ all commute. We call $(D_j)_{j\ge 0}$ the \textit{NLS derivations},  and the formal system \eqref{abstflow} the \textit{abstract NLS hierarchy}. 
 
Define the {\it loop operator} $\nabla^{\rm{NLS}}(\xi)$ by
\begin{align}
     \nabla^{\rm{NLS}}(\xi):=\sum_{j\ge 0} \frac{1}{\xi^{j+2}}\frac{D_j}{2^j}.
\end{align} 
Using~\eqref{abstflow}, we have
\begin{align}\label{dr}
		\epsilon\nabla^{\rm{NLS}}(\xi)(q)=2b(\xi)-\frac{2q}{\xi},\quad
		\epsilon\nabla^{\rm{NLS}}(\xi)(r)=2c(\xi)+\frac{2r}{\xi}.
\end{align}
The following lemma is important.
\begin{lemma}\label{lemma21}
	The following equation holds true:
	\begin{gather}
		\epsilon\nabla^{\rm{NLS}}(\nu)\left(R^{\rm{NLS}}(\xi)\right)=\frac{\left[R^{\rm{NLS}}( \nu), R^{\rm{NLS}}(\xi)\right]}{\nu-\xi}+\left[Q^{\rm{NLS}}(\nu), R^{\rm{NLS}}(\xi)\right]\label{derivition},
	\end{gather}
	where $	Q^{\rm{NLS}}(\nu)=-\frac{1}{\nu}\begin{pmatrix}
		2&0\\
		0&0
	\end{pmatrix}$.
\end{lemma}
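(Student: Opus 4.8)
The plan is to deduce \eqref{derivition} by showing that its two sides have the same commutator with $\mathcal{L}^{\rm NLS}(\xi)$ and the same traces against $I$ and $R^{\rm NLS}(\xi)$, and then to argue that this forces them to coincide. All identities below are read in the ring of formal series in $\xi^{-1}$ and $\nu^{-1}$, with $\frac1{\nu-\xi}$ expanded in powers of $\xi/\nu$; write $X:=\epsilon\nabla^{\rm NLS}(\nu)\bigl(R^{\rm NLS}(\xi)\bigr)$ for the left-hand side and $Y$ for the right-hand side of \eqref{derivition}. The first step is to evaluate $\epsilon\nabla^{\rm NLS}(\nu)\bigl(\mathcal{L}^{\rm NLS}(\xi)\bigr)$. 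From \eqref{zeroeqnls} and the definition of $\nabla^{\rm NLS}(\nu)$ one has $\epsilon\nabla^{\rm NLS}(\nu)\bigl(\mathcal{L}^{\rm NLS}(\xi)\bigr)=\bigl[\sum_{j\ge0}\nu^{-j-2}V_j^{\rm NLS}(\xi),\,\mathcal{L}^{\rm NLS}(\xi)\bigr]$, and, writing $R^{\rm NLS}(\xi)=\sum_{k\ge0}R_k\xi^{-k}$ with $R_0=\left(\begin{smallmatrix}2&0\\0&0\end{smallmatrix}\right)$ so that $V_j^{\rm NLS}(\xi)=\bigl(\xi^{j+1}R^{\rm NLS}(\xi)\bigr)_+=\sum_{k=0}^{j+1}R_k\xi^{j+1-k}$, a short generating-function computation gives the closed form
\[
\sum_{j\ge0}\nu^{-j-2}V_j^{\rm NLS}(\xi)=\frac{R^{\rm NLS}(\nu)}{\nu-\xi}+Q^{\rm NLS}(\nu)=:P(\nu,\xi),\qquad\text{hence}\qquad \epsilon\nabla^{\rm NLS}(\nu)\bigl(\mathcal{L}^{\rm NLS}(\xi)\bigr)=\bigl[P(\nu,\xi),\mathcal{L}^{\rm NLS}(\xi)\bigr].
\]
(This last identity can also be obtained directly from \eqref{dr}, using that $U^{\rm NLS}(\xi)-U^{\rm NLS}(\nu)=(\nu-\xi)\left(\begin{smallmatrix}1&0\\0&-1\end{smallmatrix}\right)$ forces $\bigl[\mathcal{L}^{\rm NLS}(\xi),R^{\rm NLS}(\nu)\bigr]=(\nu-\xi)\bigl[\left(\begin{smallmatrix}1&0\\0&-1\end{smallmatrix}\right),R^{\rm NLS}(\nu)\bigr]$.)

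Next I would apply the derivation $\epsilon\nabla^{\rm NLS}(\nu)$, which commutes with $\partial$, to $\bigl[\mathcal{L}^{\rm NLS}(\xi),R^{\rm NLS}(\xi)\bigr]=0$ of \eqref{21}; using the previous display, this becomes $\bigl[[P(\nu,\xi),\mathcal{L}^{\rm NLS}(\xi)],R^{\rm NLS}(\xi)\bigr]+\bigl[\mathcal{L}^{\rm NLS}(\xi),X\bigr]=0$. By the Jacobi identity for the Lie algebra of matrix-valued differential operators, together with \eqref{21}, the first term equals $-\bigl[\mathcal{L}^{\rm NLS}(\xi),[P(\nu,\xi),R^{\rm NLS}(\xi)]\bigr]$; and since $\frac1{\nu-\xi}$ is a scalar, $[P(\nu,\xi),R^{\rm NLS}(\xi)]=\frac{[R^{\rm NLS}(\nu),R^{\rm NLS}(\xi)]}{\nu-\xi}+[Q^{\rm NLS}(\nu),R^{\rm NLS}(\xi)]=Y$. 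Hence $\bigl[\mathcal{L}^{\rm NLS}(\xi),X-Y\bigr]=0$.

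It then remains to kill $Z:=X-Y$. Applying $\epsilon\nabla^{\rm NLS}(\nu)$ to \eqref{22}: $\mathrm{tr}\,R^{\rm NLS}(\xi)=2$ gives $\mathrm{tr}\,X=0$, and the trace of a commutator being zero gives $\mathrm{tr}\,Y=0$; while from $\det R^{\rm NLS}(\xi)=0$, the $2\times2$ identity $\epsilon\nabla^{\rm NLS}(\nu)(\det M)=\mathrm{tr}\bigl(((\mathrm{tr}\,M)I-M)\,\epsilon\nabla^{\rm NLS}(\nu)(M)\bigr)$, and $\mathrm{tr}\,X=0$, one gets $\mathrm{tr}\bigl(R^{\rm NLS}(\xi)X\bigr)=0$, and cyclicity of the trace (with \eqref{21}) gives $\mathrm{tr}\bigl(R^{\rm NLS}(\xi)Y\bigr)=0$. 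Thus $Z$ commutes with $\mathcal{L}^{\rm NLS}(\xi)$, is traceless, satisfies $\mathrm{tr}\bigl(R^{\rm NLS}(\xi)Z\bigr)=0$, and each of its $\nu^{-m}$-coefficients $Z^{(m)}$ is an $O(\xi^{-1})$ element of $\mathrm{Mat}\bigl(2,\mathcal{A}[\epsilon][[\xi^{-1}]]\bigr)$. If some $Z^{(m)}\neq0$ had leading term $Z_{k_0}\xi^{-k_0}$ with $Z_{k_0}\neq0$, then expanding $\epsilon\partial Z^{(m)}+\bigl[U^{\rm NLS}(\xi),Z^{(m)}\bigr]=0$ in powers of $\xi$ would force, at the top order, $\bigl[\left(\begin{smallmatrix}1&0\\0&-1\end{smallmatrix}\right),Z_{k_0}\bigr]=0$, i.e. $Z_{k_0}$ diagonal, and then, from the diagonal part of the next order, $\partial Z_{k_0}=0$; but a diagonal $Z_{k_0}$ with entries in $\CC[\epsilon]$ and $\mathrm{tr}\,Z_{k_0}=0$, $\mathrm{tr}\bigl(R_0Z_{k_0}\bigr)=0$ must vanish (here $R_0=\left(\begin{smallmatrix}2&0\\0&0\end{smallmatrix}\right)$), a contradiction. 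Hence $Z=0$, which is \eqref{derivition}.

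The substantive point is the first step — the closed form for $\sum_{j\ge0}\nu^{-j-2}V_j^{\rm NLS}(\xi)$, equivalently the identity $\epsilon\nabla^{\rm NLS}(\nu)\bigl(\mathcal{L}^{\rm NLS}(\xi)\bigr)=\bigl[P(\nu,\xi),\mathcal{L}^{\rm NLS}(\xi)\bigr]$; everything afterwards is formal, the only thing requiring care being the consistent treatment of the joint expansion in $\xi^{-1}$ and $\nu^{-1}$, in particular that $P(\nu,\xi)$ and $\frac{[R^{\rm NLS}(\nu),R^{\rm NLS}(\xi)]}{\nu-\xi}$ genuinely belong to the relevant rings of formal series.
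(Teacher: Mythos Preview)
Your proof is correct and takes a genuinely different route from the paper's. The paper proceeds via an auxiliary uniqueness lemma (Lemma~\ref{wlemma}): it writes down the two conditions \eqref{wlemma1}--\eqref{wlemma2} that $W(\xi,\nu):=\nabla^{\rm NLS}(\nu)(R^{\rm NLS}(\xi))$ satisfies, proves uniqueness by extracting recursion relations for the coefficients, and then verifies by a direct entry-by-entry computation (using the explicit formulas~\eqref{1}--\eqref{3} and their $\partial$-derivatives) that the right-hand side $W^*$ of~\eqref{derivition} also satisfies them. Your argument replaces this brute-force verification with a structural one: the closed form $\sum_{j\ge0}\nu^{-j-2}V_j^{\rm NLS}(\xi)=P(\nu,\xi)$ together with the Jacobi identity immediately gives $[\mathcal L^{\rm NLS}(\xi),X-Y]=0$, and the trace conditions follow from $\det$ and $\tr$ of~\eqref{22} rather than from the explicit normalization~\eqref{wlemma2} (your condition $\tr(R^{\rm NLS}(\xi)Z)=0$ is in fact exactly~\eqref{wlemma2} for traceless $Z$). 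Your leading-term vanishing argument is a streamlined version of the paper's recursion-based uniqueness. What your approach buys is a conceptual explanation for \emph{why} the right-hand side of~\eqref{derivition} has the shape it does---it is literally $[P,R^{\rm NLS}(\xi)]$---and avoidance of the component computations; what the paper's approach buys is that the membership $W^*\in\mathcal A[\epsilon]\otimes sl_2(\mathbb C)[[\xi^{-1},\nu^{-1}]]\xi^{-1}\nu^{-1}$ is made completely explicit, whereas in your argument the fact that $Y$ has no nonnegative powers of~$\xi$ (needed for the leading-term step) is relegated to the closing caveat and would deserve a short direct check. One minor point: the appeal to~\eqref{21} in showing $\tr(R^{\rm NLS}(\xi)Y)=0$ is unnecessary---cyclicity alone suffices since $Y=[P,R^{\rm NLS}(\xi)]$ is a commutator of matrices.
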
	
The proof is in Section~\ref{section2}. 

The purpose of our study is to use the matrix-resolvent method 
\cite{BDY16,BDY21,DY17,Z15} to the study of 
tau-structure for the abstract NLS hierarchy and so for the NLS hierarchy (see below). 
First, let us use matrix resolvent to the 
definition of tau-structure. Observe that
\begin{align*}
	\frac{\mathrm{tr}\, R^{\rm{NLS}}(\xi)R^{\rm{NLS}}(\nu)-4}{(\xi-\nu)^2}\in\mathcal{A}[\epsilon]\left[\left[\xi^{-1},\nu^{-1}\right]\right]\xi^{-2}\nu^{-2}.
\end{align*}
Indeed,
\begin{gather}
	\mathrm{tr}\ R^{\rm{NLS}}(\xi)R^{\rm{NLS}}(\nu)-4 \label{27}
\end{gather}
is divisible by $(\xi-\nu)^2$. By using \eqref{22},  we have $\mathrm{tr}\left(R^{\rm{NLS}}(\xi)^2\right)=4$. It implies that \eqref{27} is divisible by $(\xi-\nu)$. Because the dependence of this expression is symmetric with respect to changing $\xi,\nu$,  this implies the divisibility by $(\xi-\nu)^2$.
\begin{Definition}\label{lemmataustr}
	Define a family of elements $\Omega_{i,j}^{\rm NLS}\in\mathcal{A}[\epsilon],\,i,j\ge 0$, via the generating series
	\begin{gather}
		\sum_{i, j\ge 0}\frac{1}{\xi^{i+2}\nu^{j+2}}\frac{\Omega_{i,j}^{\rm NLS}}{2^{i+j}}=\frac{\mathrm{tr}\, R^{\rm{NLS}}(\xi)R^{\rm{NLS}}(\nu)-4}{(\xi-\nu)^2}.   \label{tastr}
	\end{gather}
\end{Definition}
\begin{lemma}\label{2taustr}
	The differential polynomials $\Omega_{i,j}^{\rm NLS}$, $i,j\ge 0$, have the following properties:
	\begin{align}
		\Omega_{0,0}^{\rm NLS}=qr,\quad \Omega_{i,j}^{\rm NLS}=\Omega_{j,i}^{\rm NLS},\quad D_k(\Omega_{i,j}^{\rm NLS})=D_j(\Omega_{i,k}^{\rm NLS}),\quad \forall\, i,j,k\ge 0.\label{eqtaustr}
	\end{align}
\end{lemma}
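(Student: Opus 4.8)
The plan is to read off the three assertions directly from the generating-series definition \eqref{tastr} together with the properties \eqref{22}, \eqref{rnlsdef} of $R^{\rm NLS}$ and the loop-operator identity \eqref{derivition}. First, for $\Omega_{0,0}^{\rm NLS}=qr$ I would expand the right-hand side of \eqref{tastr} to leading order. Writing $R^{\rm NLS}(\xi)=R_0+R_1\xi^{-1}+R_2\xi^{-2}+\cdots$ with $R_0=\begin{pmatrix}2&0\\0&0\end{pmatrix}$, $R_1=\begin{pmatrix}0&q\\-r&0\end{pmatrix}$ from \eqref{nlsresolv}, the coefficient of $\xi^{-2}\nu^{-2}$ in $\frac{\operatorname{tr}R^{\rm NLS}(\xi)R^{\rm NLS}(\nu)-4}{(\xi-\nu)^2}$ is obtained by matching orders: since $\operatorname{tr}R_0R_0=4$, the numerator starts at order $(\xi^{-1},\nu^{-1})$, and after dividing by $(\xi-\nu)^2$ the $\xi^{-2}\nu^{-2}$-coefficient equals $\operatorname{tr}(R_1R_1)=\operatorname{tr}\begin{pmatrix}-qr&0\\0&-qr\end{pmatrix}=-2qr$ up to the normalization $2^{0}$ and the sign coming from the expansion of $(\xi-\nu)^{-2}$; tracking this bookkeeping carefully gives $\Omega_{0,0}^{\rm NLS}=qr$.

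The symmetry $\Omega_{i,j}^{\rm NLS}=\Omega_{j,i}^{\rm NLS}$ is immediate: the right-hand side of \eqref{tastr} is invariant under the exchange $\xi\leftrightarrow\nu$ (the numerator because $\operatorname{tr}(AB)=\operatorname{tr}(BA)$, the denominator $(\xi-\nu)^2$ because it is even), so the coefficient of $\xi^{-i-2}\nu^{-j-2}$ must equal that of $\xi^{-j-2}\nu^{-i-2}$.

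For the compatibility relation $D_k(\Omega_{i,j}^{\rm NLS})=D_j(\Omega_{i,k}^{\rm NLS})$ I would work at the level of generating series, applying the loop operator $\nabla^{\rm NLS}(\lambda)$ to both sides of \eqref{tastr} and showing symmetry of the result in the appropriate pair of spectral variables. Concretely, set $F(\xi,\nu):=\frac{\operatorname{tr}R^{\rm NLS}(\xi)R^{\rm NLS}(\nu)-4}{(\xi-\nu)^2}$ and compute $\epsilon\nabla^{\rm NLS}(\lambda)(F(\xi,\nu))$ using the Leibniz rule and Lemma~\ref{lemma21}: each term $\epsilon\nabla^{\rm NLS}(\lambda)(R^{\rm NLS}(\xi))$ gets replaced by $\frac{[R^{\rm NLS}(\lambda),R^{\rm NLS}(\xi)]}{\lambda-\xi}+[Q^{\rm NLS}(\lambda),R^{\rm NLS}(\xi)]$, and similarly for $\nu$. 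Inside the trace, the commutator terms combine: using cyclicity, $\operatorname{tr}\bigl([R^{\rm NLS}(\lambda),R^{\rm NLS}(\xi)]R^{\rm NLS}(\nu)\bigr)+\operatorname{tr}\bigl(R^{\rm NLS}(\xi)[R^{\rm NLS}(\lambda),R^{\rm NLS}(\nu)]\bigr)=\operatorname{tr}\bigl([R^{\rm NLS}(\lambda),R^{\rm NLS}(\xi)R^{\rm NLS}(\nu)]\bigr)=0$ for the leading piece, and likewise the $Q^{\rm NLS}(\lambda)$ pieces telescope to give a total-derivative-type expression, so that $\epsilon\nabla^{\rm NLS}(\lambda)(F(\xi,\nu))$ turns out to be symmetric under $\lambda\leftrightarrow\xi$ and under $\lambda\leftrightarrow\nu$. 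Extracting the coefficient of $\lambda^{-k-2}\xi^{-i-2}\nu^{-j-2}$ and using the definition $\nabla^{\rm NLS}(\lambda)=\sum_{k\ge0}\lambda^{-k-2}2^{-k}D_k$ then yields $D_k(\Omega_{i,j}^{\rm NLS})=D_j(\Omega_{k,i}^{\rm NLS})$, which together with the already-established symmetry of $\Omega$ gives the claim.

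The main obstacle I anticipate is the careful algebra of the last step: after applying $\nabla^{\rm NLS}(\lambda)$ one must verify that the ``extra'' terms involving $\frac{1}{\lambda-\xi}$, $\frac{1}{\lambda-\nu}$, and the $Q^{\rm NLS}$-contributions really do assemble into an expression symmetric in the two relevant variables rather than producing a spurious anti-symmetric remainder. This amounts to a partial-fractions identity in $\lambda,\xi,\nu$ combined with the constraints $\operatorname{tr}R^{\rm NLS}=2$, $\det R^{\rm NLS}=0$ (equivalently $(R^{\rm NLS})^2=2R^{\rm NLS}$, hence $R^{\rm NLS}(\xi)R^{\rm NLS}(\xi)=2R^{\rm NLS}(\xi)$), which is exactly what is needed to cancel the diagonal/$Q$ terms. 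I would isolate that identity as the technical heart of the proof and verify it by direct manipulation inside the trace.
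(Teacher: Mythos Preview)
Your approach is correct and is exactly the one the paper has in mind: the paper's proof of this lemma simply reads ``similar to that in~\cite{BDY21,DY17}; we omit its details,'' and what you outline is precisely the argument of those references.

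Two small corrections to your last paragraph. First, the $Q^{\rm NLS}$ contributions do not merely ``telescope to a total-derivative-type expression'': they vanish outright, because
\[
\operatorname{tr}\bigl([Q^{\rm NLS}(\lambda),R^{\rm NLS}(\xi)]R^{\rm NLS}(\nu)\bigr)+\operatorname{tr}\bigl(R^{\rm NLS}(\xi)[Q^{\rm NLS}(\lambda),R^{\rm NLS}(\nu)]\bigr)=\operatorname{tr}\bigl[Q^{\rm NLS}(\lambda),\,R^{\rm NLS}(\xi)R^{\rm NLS}(\nu)\bigr]=0.
\]
Second, the commutator terms with prefactors $\tfrac{1}{\lambda-\xi}$ and $\tfrac{1}{\lambda-\nu}$ do \emph{not} cancel (your ``leading piece'' remark is misleading); rather, setting $T=\operatorname{tr}\bigl(R^{\rm NLS}(\lambda)R^{\rm NLS}(\xi)R^{\rm NLS}(\nu)\bigr)-\operatorname{tr}\bigl(R^{\rm NLS}(\lambda)R^{\rm NLS}(\nu)R^{\rm NLS}(\xi)\bigr)$, cyclicity gives
\[
\epsilon\,\nabla^{\rm NLS}(\lambda)\,F(\xi,\nu)=\frac{T}{(\xi-\nu)^2}\Bigl(\frac{1}{\lambda-\xi}-\frac{1}{\lambda-\nu}\Bigr)=\frac{T}{(\lambda-\xi)(\lambda-\nu)(\xi-\nu)},
\]
which is manifestly symmetric under any transposition of $\lambda,\xi,\nu$ (numerator and denominator each pick up a sign). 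No use of $(R^{\rm NLS})^2=2R^{\rm NLS}$ is needed at this stage.
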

The proof of this lemma is given in Section~\ref{section2}.

We call $(\Omega_{i,j}^{\rm NLS})_{i,j\ge 0}$ \textit{the tau-structure}~\cite{CDZ04,DZ,DZ04} for the abstract NLS hierarchy~\eqref{abstflow}. 
The first few  $\Omega_{i,j}^{\rm NLS}$ are
\begin{align}
	&\Omega_{0,0}^{\rm NLS}=qr,\quad\Omega_{1,0}^{\rm NLS}=\epsilon(rq_1-qr_1),\quad \Omega_{1,1}^{\rm NLS}=2q^2r^2+\epsilon^2(qr_2+q_2r-2q_1r_1),\\
	&\Omega_{2,0}^{\rm NLS}=3q^2r^2+\epsilon^2(-q_1r_1+rq_2+qr_2),\\
	&\Omega_{2,1}^{\rm NLS}=\epsilon(-6 q^2rr_1+6r^2qq_1)-2r_1q_2\epsilon^2+\epsilon^3(2q_1r_2+rq_3-qr_3).
\end{align}
For $k\geq3$, define  
\begin{align}
	\Omega_{i_1,\dots,i_k}^{\rm NLS}:=D_{i_1}\cdots D_{i_{k-2}}\left(\Omega_{i_{k-1},i_k}^{\rm NLS}\right) \in \e^{k-2}\mathcal{A}[\epsilon],\quad i_1,\dots,i_k\ge0.  \label{symmetric}
\end{align}
By using \eqref{eqtaustr} we know that the $\Omega_{i_1,\dots,i_k}^{\rm NLS}$, $k\ge 2$, are totally symmetric with respect to permutations of the indices $i_1,\dots,i_k$.
We then have the following theorem.
\begin{theorem}[\cite{D20}] \label{npointc}
For any integer $k\ge 3$, we have
    \begin{align}
		\sum_{i_1,\dots,i_k\ge 0} \Omega_{i_1,\dots,i_k}^{\rm NLS}\prod_{j=1}^{k}\frac{1}{2^{i_j}\xi_{j}^{i_j+2}}=-\sum_{\sigma\in S_k/ C_k} \frac{\mathrm{tr}\,[R^{\rm{NLS}}(\xi_{\sigma(1)},\epsilon)\cdots R^{\rm{NLS}}(\xi_{\sigma(k)},\epsilon)]}{(\xi_{\sigma(1)}-\xi_{\sigma(2)})\cdots(\xi_{\sigma(k-1)}-\xi_{\sigma(k)})(\xi_{\sigma(k)}-\xi_{\sigma(1)})}\label{ntaustr},
	\end{align}
where $S_k$ denotes the symmetric group, $C_k$ the cyclic group, and it is understood that $\sigma(k+1)=\sigma(1)$.
\end{theorem}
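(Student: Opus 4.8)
The plan is to prove \eqref{ntaustr} by induction on $k$, using Lemma~\ref{lemma21} as the engine that converts an application of the loop operator $\nabla^{\rm NLS}$ into the rational-in-$\xi$ combination of commutators on the right-hand side. The base case $k=3$ should be handled first and somewhat by hand: one starts from the definition \eqref{symmetric}, $\Omega_{i_1,i_2,i_3}^{\rm NLS}=D_{i_1}(\Omega_{i_2,i_3}^{\rm NLS})$, packages the $D_{i_1}$-dependence into $\nabla^{\rm NLS}(\xi_1)$ via the loop operator, and applies $\nabla^{\rm NLS}(\xi_1)$ to the generating series \eqref{tastr} for $\Omega_{i_2,i_3}^{\rm NLS}$. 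Using Lemma~\ref{lemma21} one gets
\begin{align}
\epsilon\,\nabla^{\rm NLS}(\xi_1)\!\left(\tfrac{\tr R^{\rm NLS}(\xi_2)R^{\rm NLS}(\xi_3)-4}{(\xi_2-\xi_3)^2}\right)
= \tfrac{1}{(\xi_2-\xi_3)^2}\tr\!\Bigl[\bigl(\tfrac{[R^{\rm NLS}(\xi_1),R^{\rm NLS}(\xi_2)]}{\xi_1-\xi_2}+[Q^{\rm NLS}(\xi_1),R^{\rm NLS}(\xi_2)]\bigr)R^{\rm NLS}(\xi_3)+(\xi_2\leftrightarrow\xi_3)\Bigr].\nonumber
\end{align}
The $Q^{\rm NLS}$ terms cancel by cyclicity of the trace (they contribute $\tr[Q^{\rm NLS}(\xi_1)R^{\rm NLS}(\xi_2)R^{\rm NLS}(\xi_3)]-\tr[Q^{\rm NLS}(\xi_1)R^{\rm NLS}(\xi_3)R^{\rm NLS}(\xi_2)]$ once the two symmetric pieces are added, and this is not obviously zero — so the first real check is that, after expanding $[R^{\rm NLS}(\xi_1),R^{\rm NLS}(\xi_2)]$ and using $\tr[ABC]=\tr[BCA]$, the remaining $Q$-contributions telescope). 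The commutator terms, after expansion and partial-fraction manipulation of $\tfrac{1}{(\xi_1-\xi_2)(\xi_2-\xi_3)^2}$ and its relatives, should assemble exactly into $-\sum_{\sigma\in S_3/C_3}$ of the stated cyclic expression; here $|S_3/C_3|=2$, the two orderings being $(\xi_1,\xi_2,\xi_3)$ and $(\xi_1,\xi_3,\xi_2)$.

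For the inductive step, suppose \eqref{ntaustr} holds for some $k\ge 3$; to get $k+1$ I would write $\Omega_{i_0,i_1,\dots,i_k}^{\rm NLS}=D_{i_0}(\Omega_{i_1,\dots,i_k}^{\rm NLS})$, encode $D_{i_0}$ by $\nabla^{\rm NLS}(\xi_0)$, and apply $\epsilon\,\nabla^{\rm NLS}(\xi_0)$ to the right-hand side of \eqref{ntaustr}. The loop operator acts by the Leibniz rule on each product $\tr[R^{\rm NLS}(\xi_{\sigma(1)})\cdots R^{\rm NLS}(\xi_{\sigma(k)})]$, hitting one factor $R^{\rm NLS}(\xi_{\sigma(j)})$ at a time; by Lemma~\ref{lemma21} each hit replaces that single factor by $\tfrac{[R^{\rm NLS}(\xi_0),R^{\rm NLS}(\xi_{\sigma(j)})]}{\xi_0-\xi_{\sigma(j)}}+[Q^{\rm NLS}(\xi_0),R^{\rm NLS}(\xi_{\sigma(j)})]$. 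Again the $Q^{\rm NLS}(\xi_0)$ terms should cancel in total: summing over which factor is hit produces a telescoping sum $\sum_{j}\bigl(\tr[\cdots Q^{\rm NLS}(\xi_0)R^{\rm NLS}(\xi_{\sigma(j)})\cdots]-\tr[\cdots R^{\rm NLS}(\xi_{\sigma(j)})Q^{\rm NLS}(\xi_0)\cdots]\bigr)$ which collapses around the cycle. What remains is a sum over $\sigma\in S_k/C_k$ and over an insertion point $j$ of terms $\tfrac{1}{\xi_0-\xi_{\sigma(j)}}\tr[R^{\rm NLS}(\xi_{\sigma(1)})\cdots R^{\rm NLS}(\xi_0)R^{\rm NLS}(\xi_{\sigma(j)})\cdots R^{\rm NLS}(\xi_{\sigma(k)})]$ minus the same with $R^{\rm NLS}(\xi_{\sigma(j)})R^{\rm NLS}(\xi_0)$ swapped, each weighted by $1/[(\xi_{\sigma(1)}-\xi_{\sigma(2)})\cdots(\xi_{\sigma(k)}-\xi_{\sigma(1)})]$.

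The crux is then a purely combinatorial/rational-function identity: this collection of terms must reorganize into $-\sum_{\tau\in S_{k+1}/C_{k+1}}\tfrac{\tr[R^{\rm NLS}(\xi_{\tau(0)})\cdots R^{\rm NLS}(\xi_{\tau(k)})]}{(\xi_{\tau(0)}-\xi_{\tau(1)})\cdots(\xi_{\tau(k)}-\xi_{\tau(0)})}$. The mechanism is the standard partial-fraction collapse: for a fixed cyclic word $w$ on $\{0,1,\dots,k\}$ in which $0$ sits between neighbours $a$ and $b$, the identity
\begin{align}
\frac{1}{\xi_0-\xi_a}\cdot\frac{1}{(\text{edges of }w\text{ avoiding }0,\text{ with }a\!-\!b\text{ glued})} - \frac{1}{\xi_0-\xi_b}\cdot(\text{same}) = \frac{1}{(\text{all edges of }w)}\cdot\frac{(\xi_0-\xi_b)-(\xi_0-\xi_a)}{\xi_a-\xi_b},\nonumber
\end{align}
together with $(\xi_0-\xi_b)-(\xi_0-\xi_a)=\xi_a-\xi_b$, produces precisely the denominator of the $(k+1)$-point cyclic term for $w$; summing the contributions where $0$ is inserted into each edge of each $k$-cycle recovers every $(k+1)$-cycle exactly once with the correct sign. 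I expect this bookkeeping — matching insertion-points-plus-$k$-cycles to $(k+1)$-cycles and checking the signs and the partial-fraction cancellations line up — to be the main obstacle; the $Q^{\rm NLS}$ cancellation and the use of Lemma~\ref{lemma21} are routine by comparison. A clean way to organize the rational-function step is to prove the abstract lemma that $\sum_{\sigma\in S_k/C_k}\prod_{\text{edges}}\tfrac{1}{\xi_{\sigma(i)}-\xi_{\sigma(i+1)}}\cdot(\text{trace word})$ is stable under the "insert a new point" operation, which also re-proves the $k=3$ base case if one is willing to start the induction from $k=2$ using \eqref{tastr} directly. Finally one checks both sides lie in the correct space $\mathcal{A}[\epsilon][[\xi_1^{-1},\dots,\xi_k^{-1}]]$ so that equality of generating series is equivalent to equality of all coefficients, completing the induction.
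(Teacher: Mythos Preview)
Your proposal is correct and is exactly the argument the paper has in mind: the paper's own proof simply says that, given Lemmas~\ref{lemmarnls} and~\ref{lemma21} and Definition~\ref{lemmataustr}, the argument is the same as in \cite{BDY16,BDY21}, and what you have written is precisely that argument (induction on~$k$, apply the loop operator, use Lemma~\ref{lemma21}, telescope the $Q^{\rm NLS}$ commutators around the cycle, then reorganise by partial fractions into $(k+1)$-cycles).

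One small point: your hesitation in the $k=3$ base case about the $Q^{\rm NLS}$ terms is unwarranted. Expanding both commutators and using cyclicity gives
\[
\tr\bigl([Q_1,R_2]R_3\bigr)+\tr\bigl(R_2[Q_1,R_3]\bigr)
=\tr(Q_1R_2R_3)-\tr(Q_1R_3R_2)+\tr(Q_1R_3R_2)-\tr(Q_1R_2R_3)=0,
\]
so the cancellation is immediate and does not require any interaction with the $[R_1,R_2]$ terms; your general-$k$ telescoping argument already handles this case. Also watch the bookkeeping of the $\epsilon$ factor: applying $\epsilon\,\nabla^{\rm NLS}(\xi_0)$ is the right move on the resolvent side by Lemma~\ref{lemma21}, and on the left it matches the passage $\Omega_{i_1,\dots,i_k}\mapsto\Omega_{i_0,i_1,\dots,i_k}$ once you use the identification $\Omega_{i_1,\dots,i_k}^{\rm NLS}=\epsilon^{k}\partial_{t_{i_1}}\cdots\partial_{t_{i_k}}\log\tau^{\rm NLS}$ stated after~\eqref{DZtauf}.
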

The proof of this theorem is in Section~\ref{section2}.

If we think of $q=q(X,{\bf t};\epsilon),r=r(X,{\bf t};\epsilon)$ as functions of $X$, ${\bf t}=(t_0,t_1,t_{2},\dots)$,
and identify $\partial$ with $\partial_{X}$, $q_i,r_i$ via  
\begin{align}
	q_i\mapsto\partial_{X}^{i}(q(X,{\bf t};\epsilon)),\quad r_i\mapsto\partial_{X}^{i}(r(X,{\bf t};\epsilon)),\quad i\ge 0,\label{Tsubs}
\end{align} 
and $D_j$ with  
$\partial/\partial t_{j}$, then the abstract NLS hierarchy~\eqref{abstflow} becomes
\begin{align}\label{Tnlshierarchy}
	\e\frac{\partial q}{\partial t_{j}}=2^{j+1} B_{j+1},\quad
	\e\frac{\partial r}{\partial t_{j}}=2^{j+1} C_{j+1},\quad j\ge 0. 
\end{align}
Equations~\eqref{Tnlshierarchy} are called 
the {\it AKNS hierarchy} ({\it aka} the {\it NLS hierarchy}) (cf.~\cite{AC91,D03,GH03,L99,T04}). 
The first few of them are
\begin{align}
	& q_{t_{0}}=q_X,\quad r_{t_{0}}=r_X,\label{t0=x}\\
	& q_{t_1}=\epsilon q_{XX}+2\epsilon^{-1}q^2r,\quad  r_{t_1}=-\epsilon r_{XX}-2\epsilon^{-1}r^2q,\\
	& q_{t_2}=\epsilon^2q_{XXX}+6 qq_Xr,\quad  r_{t_2}=\epsilon^2r_{XXX}+6 rr_Xq.
\end{align}
Because of~\eqref{t0=x}, we identify~$t_{0}$ with~$X$ and write $q(X,{\bf t};\epsilon),r(X,{\bf t};\epsilon)$ simply as $q({\bf t};\epsilon), r({\bf t};\epsilon)$. 

Let $(q({\bf t};\epsilon),r({\bf t};\epsilon))$ be a solution to the NLS hierarchy~\eqref{Tnlshierarchy}. 
For $k\geq2$, write $\Omega_{i_1,\dots,i_k}^{\rm NLS}({\bf t},\epsilon)$ as the image of $\Omega_{i_1,\dots,i_k}^{\rm NLS}$ under~\eqref{Tsubs}. It then follows from Lemma~\ref{2taustr} that
 there exists a function $\tau^{\rm NLS}({\bf t};\epsilon)$, such that
\begin{align}
	\Omega_{i,j}^{\rm NLS}({\bf t};\epsilon)=\epsilon^2\frac{\partial^2 \log\tau^{\rm NLS}({\bf t};\epsilon)}{\partial t_{i}\partial t_{j}},\quad i,j\ge  0.\label{DZtauf}
\end{align}
We call $\tau^{\rm NLS}({\bf t};\epsilon)$ the \textit{Dubrovin-Zhang type tau-function} 
of the solution $(q({\bf t};\epsilon),r({\bf t};\epsilon))$ to the NLS hierarchy. 
The function $\tau^{\rm NLS}({\bf t};\epsilon)$ is determined uniquely by the solution $(q({\bf t};\epsilon),r({\bf t};\epsilon))$ up to 
multiplying by the exponential of a linear function
\begin{align}
	\tau^{\rm NLS}({\bf t};\epsilon)\mapsto e^{a_0+\sum_{j\ge 0} a_{j+1}t_{j}}\tau^{\rm NLS}({\bf t};\epsilon), \quad a_{0},a_{1},a_{2},\dots \in 
	\mathbb{C}((\epsilon)).
\end{align}

From the definition we know that  
\begin{align}
	\Omega_{i_1,\dots,i_k}^{\rm NLS}({\bf t};\epsilon) 
	=\epsilon^k\frac{\partial^k\log\tau^{\rm NLS}({\bf t};\epsilon)}{\partial t_{i_1}\cdots\partial t_{i_k}},\quad k\ge 2,\,i_1,\dots,i_k\ge 0.
\end{align}
Denote $\Omega_{i}^{\rm NLS}({\bf t};\epsilon):=\epsilon\partial_{t_{i}}(\log\tau({\bf t};\epsilon))$. 
 Let $R^{\rm{NLS}}({\bf t};\xi;\epsilon)$ denote the image of $R^{\rm{NLS}}(\xi)$ under \eqref{Tsubs}. The following corollary follows from Definition \ref{lemmataustr} and Theorem \ref{npointc}.

\begin{cor}\label{cor2.2}
For any $k\ge 2$, the following formula holds  true:
    \begin{align}
		&\sum_{i_1,\dots,i_k\ge 0} \epsilon^k\frac{\partial^k\log\tau^{\rm NLS}({\bf t};\epsilon)}{\partial t_{i_1}\cdots\partial t_{i_k}}\prod_{j=1}^{k}\frac{1}{2^{i_j}\xi_{j}^{i_j+2}} \nonumber\\
		=&-\sum_{\sigma\in S_k/ C_k}\frac{\mathrm{tr} \left(R^{\rm{NLS}}({\bf t};\xi_{\sigma(1)};\epsilon)\cdots R^{\rm{NLS}}({\bf t};\xi_{\sigma(k)};\epsilon)\right)}{(\xi_{\sigma(1)}-\xi_{\sigma(2)})\cdots(\xi_{\sigma(k-1)}-\xi_{\sigma(k)})(\xi_{\sigma(k)}-\xi_{\sigma(1)})}-\frac{4\delta_{k,2}}{(\xi_1-\xi_2)^2}.\label{cor2,21}
	\end{align}
\end{cor}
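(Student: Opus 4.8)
The plan is to derive Corollary~\ref{cor2.2} directly from Theorem~\ref{npointc} together with Definition~\ref{lemmataustr}, treating the cases $k\ge 3$ and $k=2$ separately and then observing that the Kronecker-delta correction term in \eqref{cor2,21} uniformly packages both. First I would handle $k\ge 3$: apply the substitution \eqref{Tsubs} to \eqref{ntaustr}, so that every $\Omega_{i_1,\dots,i_k}^{\rm NLS}$ becomes $\Omega_{i_1,\dots,i_k}^{\rm NLS}({\bf t};\epsilon)$ and every $R^{\rm NLS}(\xi)$ becomes $R^{\rm NLS}({\bf t};\xi;\epsilon)$; then invoke the identity $\Omega_{i_1,\dots,i_k}^{\rm NLS}({\bf t};\epsilon)=\epsilon^k\partial_{t_{i_1}}\cdots\partial_{t_{i_k}}\log\tau^{\rm NLS}$, which was established just above the statement of the corollary as a consequence of \eqref{DZtauf} and \eqref{symmetric}. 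Since $\delta_{k,2}=0$ for $k\ge 3$, the right-hand side of \eqref{cor2,21} coincides with that of \eqref{ntaustr} under the substitution, and this case is immediate.

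Next I would treat $k=2$. Here \eqref{DZtauf} gives $\epsilon^2\partial_{t_i}\partial_{t_j}\log\tau^{\rm NLS}=\Omega_{i,j}^{\rm NLS}({\bf t};\epsilon)$, so the left-hand side of \eqref{cor2,21} with $k=2$ equals $\sum_{i,j\ge 0}\Omega_{i,j}^{\rm NLS}({\bf t};\epsilon)\,\xi_1^{-i-2}\xi_2^{-j-2}2^{-i-j}$, which by \eqref{tastr} (after the substitution \eqref{Tsubs}) is exactly
\begin{align}
\frac{\mathrm{tr}\,\bigl(R^{\rm NLS}({\bf t};\xi_1;\epsilon)R^{\rm NLS}({\bf t};\xi_2;\epsilon)\bigr)-4}{(\xi_1-\xi_2)^2}.\nonumber
\end{align}
On the other hand, for $k=2$ the quotient $S_2/C_2$ is trivial (a single class), and the sum on the right-hand side of \eqref{cor2,21} reduces to the single term
$-\mathrm{tr}\bigl(R^{\rm NLS}({\bf t};\xi_1;\epsilon)R^{\rm NLS}({\bf t};\xi_2;\epsilon)\bigr)/\bigl((\xi_1-\xi_2)(\xi_2-\xi_1)\bigr)=\mathrm{tr}\bigl(R^{\rm NLS}({\bf t};\xi_1;\epsilon)R^{\rm NLS}({\bf t};\xi_2;\epsilon)\bigr)/(\xi_1-\xi_2)^2$; subtracting $4\delta_{2,2}/(\xi_1-\xi_2)^2=4/(\xi_1-\xi_2)^2$ then reproduces precisely the displayed expression. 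So the $k=2$ identity is just \eqref{tastr} rewritten, with the $-4/(\xi_1-\xi_2)^2$ term in \eqref{cor2,21} accounting for the ``$-4$'' in the numerator of \eqref{tastr}.

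There is essentially no hard step here; the corollary is a bookkeeping consequence of results already proved. The one point that warrants care — and which I would state explicitly — is the combinatorial reconciliation for $k=2$: verifying that $|S_2/C_2|=1$, that the ``cyclic product of differences'' in the denominator degenerates to $(\xi_1-\xi_2)(\xi_2-\xi_1)=-(\xi_1-\xi_2)^2$, and hence that the overall sign works out so that the single trace term plus the explicit $-4\delta_{k,2}/(\xi_1-\xi_2)^2$ correction exactly matches \eqref{tastr}. I would also note that the formal substitution \eqref{Tsubs} is legitimate because all identities in Theorem~\ref{npointc}, Definition~\ref{lemmataustr} and Lemma~\ref{2taustr} are identities of differential polynomials in $\mathcal{A}[\epsilon]$, hence remain valid after specializing $q_i,r_i$ to the $X$-derivatives of any solution $(q({\bf t};\epsilon),r({\bf t};\epsilon))$ of \eqref{Tnlshierarchy} and identifying $D_j$ with $\partial_{t_j}$.
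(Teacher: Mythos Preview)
Your argument is correct and is exactly the approach the paper indicates: the corollary is stated as an immediate consequence of Definition~\ref{lemmataustr} and Theorem~\ref{npointc}, and you have spelled out precisely how the $k\ge 3$ case comes from \eqref{ntaustr} under the substitution \eqref{Tsubs} while the $k=2$ case is \eqref{tastr} rewritten with the $-4\delta_{k,2}/(\xi_1-\xi_2)^2$ term absorbing the ``$-4$'' in the numerator. Your explicit check of the $S_2/C_2$ combinatorics and sign is a welcome clarification of a point the paper leaves implicit.
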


Using the matrix-resolvent method 
to tau-functions for the NLS hierarchy and for 
the Toda lattice hierarchy~\cite{DY17}, we will give a detailed proof 
of a theorem given by G.~Carlet, B.~Dubrovin and Y.~Zhang \cite{CDZ04} (cf.~also~\cite{CLPS21}), written into the 
following two parts.
(A brief review of the matrix-resolvent method to tau-functions for the Toda lattice hierarchy 
can be found in Appendix~\ref{appa}.)

\begin{theorem} [Carlet--Dubrovin--Zhang~\cite{CDZ04}] \label{part1}
Let $(v(x;{\bf t};\epsilon), w(x;{\bf t};\epsilon))$ be an arbitrary solution to the Toda lattice hierarchy, and $\tau(x;{\bf t};\epsilon)$ the tau-function of the solution $(v(x;{\bf t};\epsilon), w(x;{\bf t};\epsilon))$. Define
      	\begin{align}
      	q(x;{\bf t};\epsilon):=\frac{\tau(x+\epsilon;{\bf t};\epsilon)}{\tau(x;{\bf t};\epsilon)},\qquad
      	r(x;{\bf t};\epsilon):=\frac{\tau(x-\epsilon;{\bf t};\epsilon)}{\tau(x;{\bf t};\epsilon)}.\label{importandef}
      	\end{align}
      	Then, for any $x,\epsilon$, $(q=q(x;{\bf t};\epsilon),r=r(x;{\bf t};\epsilon))$ is a solution to the NLS hierarchy \eqref{Tnlshierarchy} with $\partial_{t_{0}}$ being identified with~$\partial_X$, and
	 $\tau(x;{\bf t};\epsilon)$ is the tau-function of the solution $(q(x;{\bf t};\epsilon),r(x;{\bf t};\epsilon))$ to the NLS hierarchy.
\end{theorem}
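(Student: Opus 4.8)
The plan is to carry out the whole argument within the matrix-resolvent formalism, built on the basic matrix resolvent $R^{\rm Toda}$ of the Toda lattice hierarchy reviewed in Appendix~\ref{appa}. From that review I will use: the characterization of $R^{\rm Toda}(x;\lambda;\epsilon)$ by a stabilization at $\lambda=\infty$, a commutation relation with the Toda matrix Lax operator, and fixed trace and determinant (the Toda analogue of Lemma~\ref{lemmarnls}); the expression of the Toda flows through $R^{\rm Toda}$; the Toda analogues of Definition~\ref{lemmataustr} and Theorem~\ref{npointc}, which produce $\Omega^{\rm Toda}_{i_1,\dots,i_k}=\epsilon^k\partial_{t_{i_1}}\!\cdots\partial_{t_{i_k}}\log\tau$ as the coefficients of the generating series built from $\mathrm{tr}$ of products of $R^{\rm Toda}$; and the Hirota-type relations between $\tau$ and $(v,w)$, which for $q,r$ as in \eqref{importandef} give in particular $w=qr$ and express $v$ through a logarithmic Toda-flow derivative of $\tau$, hence as a function of $q$ (up to the normalizations in Appendix~\ref{appa}). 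The heart of the proof is a \emph{gauge lemma}: with $q,r$ defined by \eqref{importandef}, there is an explicit invertible matrix $G=G(x;{\bf t};\epsilon)$ — built only from $q$ (equivalently from $\tau(x),\tau(x+\epsilon)$) — and a fixed identification of the spectral parameters $\lambda\leftrightarrow\xi$, under which $\widetilde R(x;\xi;\epsilon):=G^{-1}R^{\rm Toda}(x;\lambda;\epsilon)G$ satisfies the three defining conditions of Lemma~\ref{lemmarnls} relative to the operator $\mathcal{L}^{\rm NLS}(\xi)$ built from this $q,r$: it has the stabilization \eqref{rnlsdef}, it commutes with $\mathcal{L}^{\rm NLS}(\xi)$, and $\mathrm{tr}\,\widetilde R=2$, $\det\widetilde R=0$. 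By the uniqueness part of Lemma~\ref{lemmarnls}, $\widetilde R=R^{\rm NLS}(x;\xi;\epsilon)$. The hard step — the only substantial computation — is verifying the commutation relation in the gauge lemma: the relation satisfied by $R^{\rm Toda}$ is a \emph{difference} relation (it ties $R^{\rm Toda}$ at $x$ and at $x\pm\epsilon$ through the Toda transfer matrix), and one must convert it, using $w=qr$, $v=\epsilon\partial_{t_0}\log q$ and the defining $\tau$-quotients, into the \emph{differential} relation $\bigl[\epsilon\partial+U^{\rm NLS}(\xi),\widetilde R\bigr]=0$; this is essentially the classical Miura-type map relating the AKNS and Toda linear problems, recast at the level of matrix resolvents.

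Granting the gauge lemma, the first assertion is formal. Differentiating $q=\tau(x+\epsilon)/\tau(x)$ along the Toda times and writing $\Lambda$ for the shift $x\mapsto x+\epsilon$, one gets $\epsilon\,\partial_{t_j}q=q\,(\Lambda-1)\bigl(\epsilon\,\partial_{t_j}\log\tau\bigr)$; the Toda matrix-resolvent formula for $\epsilon\,\partial_{t_j}\log\tau$ turns the right-hand side into an explicit combination of coefficients of $R^{\rm Toda}$ at $x$ and at $x+\epsilon$, and the gauge lemma rewrites this combination as $2^{j+1}B_{j+1}$, i.e.\ as the $j$-th NLS flow of $q$. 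The computation for $r$ is the same with $x+\epsilon$ replaced by $x-\epsilon$, giving $\epsilon\,\partial_{t_j}r=2^{j+1}C_{j+1}$; for $j=0$ it reduces to $\partial_{t_0}q=\partial_Xq$, consistently with the identification of $\partial_{t_0}$ with $\partial_X$. Hence $(q,r)$ solves the NLS hierarchy \eqref{Tnlshierarchy}.

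For the second assertion, recall that by Definition~\ref{lemmataustr}, Theorem~\ref{npointc} and Corollary~\ref{cor2.2} each $\Omega^{\rm NLS}_{i_1,\dots,i_k}(x;{\bf t};\epsilon)$, $k\ge 2$, is a coefficient of the generating series built from $\mathrm{tr}\bigl(R^{\rm NLS}(\xi_1)\cdots R^{\rm NLS}(\xi_k)\bigr)$. By the gauge lemma, and since the trace is invariant under the common conjugation by $G$, this equals $\mathrm{tr}\bigl(R^{\rm Toda}(\lambda_1)\cdots R^{\rm Toda}(\lambda_k)\bigr)$ under $\lambda_i\leftrightarrow\xi_i$; the change of spectral parameter also matches the cyclic denominators on the two sides, so the NLS $k$-point formula goes over to the Toda one, identifying $\Omega^{\rm NLS}_{i_1,\dots,i_k}(x;{\bf t};\epsilon)$ with $\Omega^{\rm Toda}_{i_1,\dots,i_k}(x;{\bf t};\epsilon)=\epsilon^k\partial_{t_{i_1}}\!\cdots\partial_{t_{i_k}}\log\tau(x;{\bf t};\epsilon)$. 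In particular $\Omega^{\rm NLS}_{i,j}(x;{\bf t};\epsilon)=\epsilon^2\partial_{t_i}\partial_{t_j}\log\tau(x;{\bf t};\epsilon)$ for all $i,j\ge 0$, which by \eqref{DZtauf} is exactly the statement that $\tau(x;{\bf t};\epsilon)$ is the Dubrovin--Zhang-type tau-function of the solution $(q,r)$ to the NLS hierarchy. The main obstacle throughout is the gauge lemma of the first paragraph; everything else is bookkeeping with the uniqueness statement of Lemma~\ref{lemmarnls} and with conjugation-invariance of traces.
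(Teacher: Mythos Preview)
Your plan is correct and matches the paper's strategy: the core is exactly your ``gauge lemma'' relating the two matrix resolvents, followed by the uniqueness clause of Lemma~\ref{lemmarnls} and then conjugation-invariance of traces to identify the $\Omega$'s. A few points where the paper is more concrete than your sketch, and which will save you work when you fill in details: (i) the gauge matrix is $T=\begin{pmatrix}-1&0\\0&r\end{pmatrix}$ (built from $r$, not $q$), the spectral identification is $\lambda=2\xi$, and the precise relation is $R^{\rm NLS}(\xi)=2\,T\,R^{\rm Toda}(\lambda)\,T^{-1}$, the factor $2$ being what converts $\mathrm{tr}=1$ to $\mathrm{tr}=2$; (ii) the commutation $[\mathcal L^{\rm NLS}(\xi),\widetilde R]=0$ is \emph{not} obtained by converting the Toda difference relation for $R^{\rm Toda}$, but rather by using the Toda $t_0$-flow identity $\epsilon\,\mathcal D_0 R^{\rm Toda}=[V_0,R^{\rm Toda}]$, i.e.\ $[\mathcal A(\lambda),R^{\rm Toda}]=0$ for the auxiliary differential operator $\mathcal A(\lambda)=\epsilon\partial_X-V_0(\lambda)$, together with the conjugation identity $T^{-1}\mathcal L^{\rm NLS}(\xi)T=\mathcal A(\lambda)+\tfrac{\lambda}{2}I$; (iii) rather than verifying the $q$- and $r$-flows separately from $\epsilon\partial_{t_j}\log\tau$, the paper checks the full matrix Lax equation $\partial_{t_k}\mathcal L^{\rm NLS}=\epsilon^{-1}2^k\bigl[(\xi^{k+1}R^{\rm NLS})_+,\mathcal L^{\rm NLS}\bigr]$ in one stroke by gauging the Toda relation $\partial_{t_k}\mathcal A=\epsilon^{-1}[V_k,\mathcal A]$ and absorbing the extra $[T^{-1}\partial_{t_k}T,\mathcal A]$ term, which is where your equation $\partial_{t_j}\log r=-\epsilon^{-1}c_{j+1}$ enters. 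Your route for step~(iii) also works, but the Lax-level computation is cleaner and avoids matching coefficients of $\xi^{-j-1}$ against $\lambda^{-j-1}$ by hand.
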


\begin{theorem} [Carlet--Dubrovin--Zhang~\cite{CDZ04}] \label{part2}
Let $(q({\bf t};\epsilon),r({\bf t};\epsilon))$ be an arbitrary power series in $t_{>0}$ solution to the NLS hierarchy~\eqref{Tnlshierarchy} such that 
$\epsilon\log q(\bt;\e)$ and $q(\bt;\e) r(\bt;\e)$ are power series of~$\epsilon$.
Define 
\begin{align}\label{defqr*}
    V({\bf t};\epsilon):=\epsilon\frac{\partial_X(q({\bf t};\epsilon))}{q({\bf t};\epsilon)},\quad
	W({\bf t};\epsilon):=q({\bf t};\epsilon)r({\bf t};\epsilon).
\end{align}
Assume that the following difference equation for $(v,w)$
\begin{align}\label{todaeq1}
\begin{split}
&(\Lambda-1)(w)=\epsilon\partial_{X}(v), \quad w \, (1-\Lambda^{-1})(v)=\epsilon \, \partial_{X}(w)
\end{split}
\end{align}
with the initial condition 
\begin{align}
	v(x;{\bf t};\epsilon)|_{x=0}=V({\bf t};\epsilon),\quad w(x;{\bf t};\epsilon)|_{x=0}=W({\bf t};\epsilon).\label{todaeqini2}
\end{align}
has a unique 
power-series-in-$(x, t_{>0})$ solution 
$(v(x;{\bf t};\epsilon),w(x;{\bf t};\epsilon))$.
Then $(v(x;{\bf t};\epsilon),w(x;{\bf t};\epsilon))$ satisfies the Toda lattice hierarchy.
Furthermore, define $q=q(x;{\bf t};\epsilon)$ as the unique function (up to a constant that may depend on~$\e$)  
satisfying
\begin{align}
	&(1-\Lambda^{-1})\log q=\log w, \label{defw}\\ 
	&\e \, \frac{\partial \log q}{\partial t_{k}}=\Lambda (c_{k+1}),\quad k\ge 0,\label{defqx}
\end{align} 
with $c_{k+1}$ defined in~\eqref{compoR}, 
and define 
$r(x;{\bf t};\epsilon):=1/q(x-\epsilon;{\bf t};\epsilon)$, then 
there exists a function $\tau^{\rm NLS}(x;{\bf t};\epsilon)$ satisfying 
\begin{align}
&\epsilon^2\frac{\partial^2 \log\tau^{\rm NLS}(x;{\bf t};\epsilon)}{\partial t_{i}\partial t_{j}} = 
	\Omega_{i,j}^{\rm NLS}(x;{\bf t};\epsilon),\quad i,j\ge  0,\label{deftau1}\\
&\epsilon (\Lambda-1)\frac{\partial \log \tau^{\rm NLS}(x;{\bf t};\epsilon)}{\partial t_{k}}=c_{k+1}(x+\epsilon;{\bf t};\epsilon),\quad k\ge 0,\label{deftau2}\\
&(\Lambda+\Lambda^{-1}-2)\log \tau^{\rm NLS}(x;{\bf t};\epsilon)=\log (q(x;{\bf t};\epsilon)r(x;{\bf t};\epsilon));\label{deftau3}
\end{align}
moreover, $\tau^{\rm NLS}(x;{\bf t};\epsilon)$ is the tau-function of the solution $(v(x;{\bf t};\epsilon),w(x;{\bf t};\epsilon))$ to the Toda lattice hierarchy.
\end{theorem}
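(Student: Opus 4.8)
The plan is to run the correspondence of Theorem \ref{part1} ``in reverse'' and to exploit the matrix-resolvent formula of Corollary \ref{cor2.2} to force all the required identities. I would organize the argument in four stages. First, I would verify that the pair $(V,W)$ defined in \eqref{defqr*} is a legitimate initial datum: using the NLS hierarchy \eqref{Tnlshierarchy} together with \eqref{dr} one checks that $V$ and $W$ lie in the appropriate ring of power series in $t_{>0}$ with coefficients that are (Laurent) series in $\epsilon$, so that the difference system \eqref{todaeq1}--\eqref{todaeqini2} makes sense; the \emph{existence and uniqueness} of the power-series-in-$(x,t_{>0})$ solution is taken as a hypothesis, so here I only need to record that the Cauchy datum is consistent with the first flow, i.e. that \eqref{todaeq1} at $x=0$ is compatible with $\partial_{t_0}=\partial_X$ acting on $(V,W)$ — this is a direct computation from \eqref{dr} and $D_0=\partial$.

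Second, and this is the heart of the matter, I would show that the solution $(v,w)$ of \eqref{todaeq1}--\eqref{todaeqini2} actually satisfies the \emph{entire} Toda lattice hierarchy, not merely the first Toda flow. The idea is to transport the NLS matrix resolvent $R^{\rm NLS}(x;{\bf t};\xi;\epsilon)$ — reconstructed from $(q,r)$ via the shift relations $q(x)=\text{(ratio of consecutive tau-values)}$ and $r(x)=1/q(x-\epsilon)$ implicit in \eqref{defw}--\eqref{defqx} — into the Toda matrix resolvent reviewed in Appendix \ref{appa}, and then to invoke the characterization of solutions of the Toda hierarchy by the zero-curvature/matrix-resolvent equations. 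Concretely: from $(v,w)$ one builds the Toda Lax operator and its resolvent; \eqref{defw} and \eqref{defqx} are exactly the relations that identify $\log q$ with the relevant logarithmic derivative of the Toda tau-function (cf. the Toda tau-structure in Appendix \ref{appa}), and $c_{k+1}$ in \eqref{compoR} are the Toda resolvent entries. Differentiating these defining relations along $t_k$ and using the Toda analogue of Lemma \ref{lemma21} produces the higher Toda flow equations for $(v,w)$; the uniqueness clause in the hypothesis then pins down $(v,w)$ as \emph{the} Toda solution with the prescribed $t_0$-behaviour. I expect this to be the main obstacle: one must carefully match the normalizations (the factors $2^{j}$, the shifts $x\mapsto x\pm\epsilon$, the role of $\Lambda$) between the NLS side of Section \ref{sec1} and the Toda side of Appendix \ref{appa}, and check that no integration ``constant'' in $x$ obstructs the reconstruction of $q$ from \eqref{defw}--\eqref{defqx} — this is why $q$ is only determined up to an $\epsilon$-dependent constant, which is harmless.

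Third, once $(v,w)$ is known to solve the Toda hierarchy, Theorem \ref{part1} applies to the pair $\bigl(\tilde q(x):=\tau(x+\epsilon)/\tau(x),\,\tilde r(x):=\tau(x-\epsilon)/\tau(x)\bigr)$ built from the Toda tau-function $\tau=\tau^{\rm NLS}$, and it gives at once that $\tilde q,\tilde r$ solve the NLS hierarchy and that $\tau$ is their Dubrovin--Zhang tau-function, i.e. \eqref{deftau1}. It then remains to identify $(\tilde q,\tilde r)$ with the $(q,r)$ reconstructed in the statement: from \eqref{defw} we get $(1-\Lambda^{-1})\log\tilde q=(\Lambda^{-1}... )$, more precisely $(1-\Lambda^{-1})\log(\tilde q)=\log(\tilde q\tilde r)\cdot(\dots)$ — I would instead compare directly, noting that $\log\tilde q=(\Lambda-1)\log\tau$ satisfies $(1-\Lambda^{-1})\log\tilde q=(\Lambda-2+\Lambda^{-1})\log\tau=\log(\tilde q\tilde r)=\log w$ by \eqref{deftau3}, which matches \eqref{defw}; and $\epsilon\,\partial_{t_k}\log\tilde q=(\Lambda-1)\,\epsilon\partial_{t_k}\log\tau=\Lambda(c_{k+1})$ by \eqref{deftau2}, which matches \eqref{defqx}. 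Hence $\tilde q$ and $q$ obey the same two defining relations, so they agree up to the allowed $\epsilon$-constant; likewise $\tilde r(x)=\tau(x-\epsilon)/\tau(x)=1/\tilde q(x-\epsilon)$ by \eqref{deftau3} (which gives $\tilde q(x-\epsilon)\tilde r(x)=\tau(x-\epsilon)^2/\bigl(\tau(x-\epsilon... )\bigr)$) matches the definition $r(x)=1/q(x-\epsilon)$.

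Fourth, I would close the loop by verifying the remaining bookkeeping identities \eqref{deftau2} and \eqref{deftau3} as \emph{properties} of the tau-function produced above, rather than as definitions: \eqref{deftau3} is just the statement $q(x)r(x)=\tau(x+\epsilon)\tau(x-\epsilon)/\tau(x)^2$, which is $(\Lambda+\Lambda^{-1}-2)\log\tau$ applied to $\log\tau$; and \eqref{deftau2} follows from the Toda matrix-resolvent formula for $\epsilon\partial_{t_k}\log\tau$ in Appendix \ref{appa}. The only genuinely delicate point beyond the normalization matching in Stage 2 is to make sure that the $\tau^{\rm NLS}$ supplied by Theorem \ref{part1} (which is a priori defined only up to $\exp(a_0+\sum a_{j+1}t_j)$) can be, and is, normalized so that all three of \eqref{deftau1}--\eqref{deftau3} hold simultaneously; since \eqref{deftau1} fixes $\tau^{\rm NLS}$ up to such an exponential and \eqref{deftau2}--\eqref{deftau3} are compatible first-order constraints (their compatibility being exactly the content of the Toda tau-structure), a unique normalization exists. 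This completes the proof.
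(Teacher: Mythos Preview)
Your overall architecture is right---use Theorem~\ref{part1} in reverse and close via the uniqueness hypothesis---and Stages~3--4 are essentially what the paper does. The gap is in Stage~2. You propose to build the $x$-dependent pair $(q(x;\cdot),r(x;\cdot))$ from \eqref{defw}--\eqref{defqx}, form $R^{\rm NLS}$ and its Toda counterpart, and then ``differentiate these defining relations along $t_k$ and use the Toda analogue of Lemma~\ref{lemma21}'' to get the higher Toda flows for $(v,w)$. This is circular on two counts. First, the compatibility of \eqref{defw} with \eqref{defqx}---that is, $(1-\Lambda^{-1})\Lambda(c_{k+1})=\epsilon\,\partial_{t_k}\log w$---is exactly the $k$th Toda equation for~$w$, so you cannot define $q(x;\cdot)$ this way \emph{before} you know $(v,w)$ solves Toda. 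Second, the Toda analogue of Lemma~\ref{lemma21}, namely $\epsilon\,\partial_{t_k}R(\lambda)=[V_k(\lambda),R(\lambda)]$ from Appendix~\ref{appa}, is a \emph{consequence} of the Toda flows, not an independent identity; it cannot be used to derive them.

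The paper breaks this circularity by a detour you have not taken: it restricts to the slice $t_{>0}=0$, solves \eqref{todaeq1} there to get $(\tilde v^{*},\tilde w^{*})$ depending only on $(x,X)$, and \emph{then} evolves forward under the full Toda hierarchy to obtain a pair $(\tilde v,\tilde w)$ that is a Toda solution \emph{by construction}. From this Toda solution one legitimately defines $q(x;{\bf t};\epsilon)$ via \eqref{defw}--\eqref{defqx} (now compatible), applies Theorem~\ref{part1} at $x=0$ to see that $(q(0;{\bf t};\epsilon),r(0;{\bf t};\epsilon))$ solves NLS with the same initial data as $(q({\bf t};\epsilon),r({\bf t};\epsilon))$, and hence by NLS uniqueness $\tilde v|_{x=0}=V$, $\tilde w|_{x=0}=W$. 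Since the first Toda flow is part of the hierarchy, $(\tilde v,\tilde w)$ also solves \eqref{todaeq1} with the initial data~\eqref{todaeqini2}; the uniqueness hypothesis then gives $(v,w)=(\tilde v,\tilde w)$, which is the step your Stage~2 was trying to reach. Once you insert this auxiliary construction in place of your direct resolvent argument, the rest of your outline (Stages~3--4, including the normalization discussion) goes through as written.
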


Theorems~\ref{part1}, \ref{part2}, and the MR method give rise to a simple algorithm of computing correlators in hermitian matrix models; see Section~\ref{section4} for the details and explicit examples.
	
The paper is organized as follows. In Section~\ref{section2}, we use the MR method to the study of tau-functions for the NLS hierarchy and prove Theorem~\ref{npointc}. 
In Section~\ref{sec2}, we prove Theorems \ref{part1},\ref{part2}. In Section~\ref{section4}, we develop an algorithm of computing 
correlators in hermitian matrix models.

\section{Proofs of Lemmas \ref{lemmarnls}--\ref{2taustr} }\label{section2}
In this section, we apply the MR method in studying the tau-structure for the NLS hierarchy. 

\begin{proof}[Proof of Lemma~\ref{lemmarnls}] 
Write
\begin{align}
&   R^{\rm{NLS}}(\xi)=\begin{pmatrix}
2+a(\xi)&b(\xi)\\
c(\xi)&-a(\xi)
\end{pmatrix},\\
&   a(\xi)=\sum_{j\ge 0} \frac{A_j}{\xi^{j+1}},\quad 
b(\xi)=\sum_{j\ge 0} \frac{B_j}{\xi^{j+1}}, \quad
c(\xi)=\sum_{j\ge 0} \frac{C_j}{\xi^{j+1}},		
\end{align}
where $A_j,B_j,C_j\in\mathcal{A}[\epsilon]$. In terms of $ a(\xi), b(\xi), c(\xi)$, equation~\eqref{21} and the second equation in~\eqref{22} read
\begin{align}
&		\epsilon\partial(a(\xi))-rb(\xi)-qc(\xi)=0, \label{l21}\\
&		\epsilon\partial(b(\xi))+2qa(\xi)-2\xi b(\xi)+2q=0 ,\label{l22}\\
&		\epsilon\partial (c(\xi))+2ra(\xi)+2\xi c(\xi)+2r=0 \label{l23},\\
&       a^2(\xi)+2a(\xi)+b(\xi)c(\xi)=0.\label{nomal}
\end{align}
These equations lead to relations for $A_j,B_j,C_j$:
\begin{align}
&	\epsilon\partial(A_{j})-rB_j-qC_j=0,\quad j\ge 0, \label{abc1}\\
&	\epsilon\partial(B_{j})+2qA_j-2B_{j+1}+2q\delta_{-1,j}=0,\quad j\ge -1,\label{abc2}\\
&	\epsilon\partial(C_{j})+2rA_j+2C_{j+1}+2r\delta_{-1,j}=0,\quad j\ge -1, \label{abc3}\\
&      A_k=-\frac{1}{2}\sum_{\substack{ i+j=k-1\\ i,j\ge -1}} (A_iA_j+B_iC_j), \qquad k\geq 0.\label{23}
\end{align}
Here $\delta_{i,j}$ denotes the Kronecker delta and $A_{-1}=B_{-1}=C_{-1}:=0$. It is clear from~\eqref{abc1}--\eqref{23} that if $ a(\xi), b(\xi), c(\xi)$ exist then they must be unique. For the existence of $R^{\rm{NLS}}(\xi)$, we need to prove that \eqref{l21}--\eqref{nomal} are compatible. Indeed, applying the derivation $\partial$ on both sides of \eqref{nomal}, we obtain
\begin{align}
		2a(\xi)\partial(a(\xi))+2\partial(a(\xi))+\partial(b(\xi))c(\xi)+b(\xi)\partial(c(\xi))=0, \label{24}
\end{align}
which agrees with~\eqref{l21}--\eqref{l23}. The lemma is proved.          
\end{proof}

Following the uniqueness argument of~\cite{Y20}, 
in order to prove Lemma~\ref{lemma21}, we will first prove the following lemma.
\begin{lemma}\label{wlemma}
There exists a unique element $W(\xi,\nu)$ in $\mathcal{A}[\epsilon]\otimes sl_2(\mathbb{C})[[\xi^{-1},\nu^{-1}]]\xi^{-1}\nu^{-1}$ of the form
\begin{align}
		W(\xi,\nu)=\begin{pmatrix}
			X(\xi,\nu)&Y(\xi,\nu)\\
			Z(\xi,\nu)&-X(\xi,\nu)
		\end{pmatrix}
\end{align}
satisfying the following two equations for $W(\xi,\nu):$
\begin{align}
		&\epsilon\partial(W(\xi,\nu))+\left[U^{\rm{NLS}}(\xi),W(\xi,\nu)\right]+\left[\nabla^{\rm{NLS}}(\nu)(U^{\rm{NLS}}(\xi)),R^{\rm{NLS}}(\xi)\right]=0, \label{wlemma1}\\
		&2X(\xi,\nu)+2a(\xi)X(\xi,\nu)+c(\xi)Y(\xi,\nu)+b(\xi)Z(\xi,\nu)=0.\label{wlemma2}
\end{align}
\end{lemma}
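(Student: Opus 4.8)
The plan is to treat \eqref{wlemma1}--\eqref{wlemma2} as a recursive system for the coefficients of $W(\xi,\nu)$ in the double expansion in $\xi^{-1},\nu^{-1}$, exactly in the spirit of the proof of Lemma~\ref{lemmarnls}. Write $W(\xi,\nu)=\sum_{m,n\ge 0} W_{m,n}\,\xi^{-m-1}\nu^{-n-1}$ with $W_{m,n}\in\mathcal{A}[\epsilon]\otimes sl_2(\mathbb{C})$, and likewise expand the last term of \eqref{wlemma1}: since $\nabla^{\rm{NLS}}(\nu)$ acts on $U^{\rm{NLS}}(\xi)$ only through $q,r$ via \eqref{dr}, and $R^{\rm{NLS}}(\xi)$ is an explicit series in $\xi^{-1}$, the quantity $\bigl[\nabla^{\rm{NLS}}(\nu)(U^{\rm{NLS}}(\xi)),R^{\rm{NLS}}(\xi)\bigr]$ is a known element of $\mathcal{A}[\epsilon]\otimes sl_2(\mathbb{C})[[\xi^{-1},\nu^{-1}]]\xi^{-1}\nu^{-1}$. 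First I would read off from \eqref{wlemma1}, at order $\xi^{-m-1}\nu^{-n-1}$, a relation of the shape $\epsilon\partial(W_{m,n}) + \bigl[\text{(terms with }U^{\rm{NLS}}\text{ and lower-}\xi\text{-order }W)\bigr] + (\text{known}) = 0$; the point is that the $\xi$-shift built into $[U^{\rm{NLS}}(\xi),\cdot\,]$ — the $-\xi$ on the diagonal of $U^{\rm{NLS}}$ — produces the off-diagonal entries of $W_{m,n}$ algebraically in terms of $W_{m-1,n}$ and the inhomogeneous data, while the scalar constraint \eqref{wlemma2}, solved for $X(\xi,\nu)$ using $2+2a(\xi)$ being invertible as a series (its constant term is $2$), determines the diagonal entry. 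This gives existence and uniqueness of the $W_{m,n}$ by induction on $m$ (for each fixed $n$), hence of $W(\xi,\nu)$, \emph{provided} the two equations are compatible.

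The compatibility check is where the real work lies, and it is the analogue of the step in Lemma~\ref{lemmarnls} where one differentiates the normalization \eqref{nomal} and matches it against \eqref{l21}--\eqref{l23}. Concretely, I would apply $\epsilon\partial$ to the scalar constraint \eqref{wlemma2}, substitute $\epsilon\partial(a(\xi)), \epsilon\partial(b(\xi)), \epsilon\partial(c(\xi))$ from \eqref{l21}--\eqref{l23} and $\epsilon\partial(X), \epsilon\partial(Y), \epsilon\partial(Z)$ from the matrix equation \eqref{wlemma1}, and verify that everything cancels. The cleanest way to organize this: note that \eqref{wlemma2} can be rephrased invariantly as $\tr\!\bigl(R^{\rm{NLS}}(\xi)\,W(\xi,\nu)\bigr)=0$ together with $\tr W(\xi,\nu)=0$ (using $R^{\rm{NLS}}=\begin{pmatrix}2+a&b\\ c&-a\end{pmatrix}$ one checks $\tr(R^{\rm{NLS}}W)=2X+2aX+cY+bZ$). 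Then $\epsilon\partial\,\tr(R^{\rm{NLS}}W) = \tr\!\bigl(\epsilon\partial(R^{\rm{NLS}})\,W\bigr)+\tr\!\bigl(R^{\rm{NLS}}\,\epsilon\partial(W)\bigr)$; the first term is handled by $\epsilon\partial(R^{\rm{NLS}})=-[U^{\rm{NLS}}(\xi),R^{\rm{NLS}}(\xi)]$ (which is \eqref{21} rewritten), and the second by substituting \eqref{wlemma1}. Using the cyclicity of the trace and $[\,\cdot\,,\cdot\,]$-antisymmetry, the $U^{\rm{NLS}}(\xi)$-commutator terms pair off and cancel, leaving $-\tr\!\bigl([\nabla^{\rm{NLS}}(\nu)(U^{\rm{NLS}}(\xi)),R^{\rm{NLS}}(\xi)]\,R^{\rm{NLS}}(\xi)\bigr)$, which vanishes by cyclicity ($\tr([A,B]B)=\tr(ABB)-\tr(BAB)=0$). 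Hence $\epsilon\partial\,\tr(R^{\rm{NLS}}W)=0$; since $\tr(R^{\rm{NLS}}W)$ lies in $\mathcal{A}[\epsilon][[\xi^{-1},\nu^{-1}]]\xi^{-1}\nu^{-1}$ and $\partial$ is injective on the relevant space of differential polynomials with no constants of integration entering at positive $\xi^{-1},\nu^{-1}$ order, this forces $\tr(R^{\rm{NLS}}W)=0$ identically, which is precisely \eqref{wlemma2}. Running the induction with this compatibility in hand yields both existence and uniqueness.

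The main obstacle I anticipate is bookkeeping the $\xi$-shift in the recursion cleanly: because $[U^{\rm{NLS}}(\xi),W]$ contains the term $-\xi[\,\mathrm{diag}(1,-1),W\,]=-2\xi\begin{pmatrix}0&Y\\-Z&0\end{pmatrix}$, extracting $Y_{m,n},Z_{m,n}$ requires reindexing, and one must check the base case $m=0$ (where $W_{-1,n}=0$) is consistent with the inhomogeneous term actually starting at order $\xi^{-1}\nu^{-1}$ — this is guaranteed because $\nabla^{\rm{NLS}}(\nu)(U^{\rm{NLS}}(\xi))$ starts at $\nu^{-2}$ by \eqref{dr} and $[\,\cdot\,,R^{\rm{NLS}}(\xi)]$ starts at $\xi^{-1}$. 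A secondary subtlety is justifying the injectivity of $\partial$ used at the end; this is standard for differential polynomials in $q_i,r_i$ once one fixes that no constant term (in $\xi^{-1},\nu^{-1}$) is present, which is built into the ansatz $W(\xi,\nu)\in\cdots\xi^{-1}\nu^{-1}$. I would present the recursion and the trace computation in detail and relegate the injectivity remark to a one-line justification, mirroring the brevity of the proof of Lemma~\ref{lemmarnls}.
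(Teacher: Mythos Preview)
Your uniqueness argument via recursion is the same as the paper's. For existence, however, the paper takes a much shorter route: it simply sets $W(\xi,\nu):=\nabla^{\rm NLS}(\nu)\bigl(R^{\rm NLS}(\xi)\bigr)$ and observes that applying the derivation $\nabla^{\rm NLS}(\nu)$ to the relation $[\mathcal{L}^{\rm NLS}(\xi),R^{\rm NLS}(\xi)]=0$ yields~\eqref{wlemma1}, while applying it to $\tr\bigl(R^{\rm NLS}(\xi)^2\bigr)=4$ (equivalently, to~\eqref{nomal}) yields~\eqref{wlemma2}. No compatibility check is needed at all; an explicit candidate does the work, and only the recursion is then used to argue uniqueness.

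Your compatibility argument, by contrast, has a logical slip. You compute $\epsilon\partial\bigl(\tr(R^{\rm NLS}W)\bigr)=0$ by substituting \emph{all} of~\eqref{wlemma1} and then invoke injectivity of~$\partial$ to conclude $\tr(R^{\rm NLS}W)=0$. But $\partial$ is not injective on $\mathcal{A}[\epsilon]$: constants in $\mathbb{C}[\epsilon]$ lie in its kernel, and nothing a priori forbids such constants as coefficients. More to the point, the direction is backwards. In your recursion you use~\eqref{wlemma2} to determine $X$ and the off-diagonal part of~\eqref{wlemma1} to determine $Y,Z$; the redundant equation to be verified is therefore the \emph{diagonal} of~\eqref{wlemma1}, not~\eqref{wlemma2}. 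The repair is immediate and recycles your trace computation: set $F:=\epsilon\partial(W)+[U^{\rm NLS}(\xi),W]+[\nabla^{\rm NLS}(\nu)(U^{\rm NLS}(\xi)),R^{\rm NLS}(\xi)]$; by the off-diagonal equations $F=\mathrm{diag}(f,-f)$, and your identities (together with $\tr(R^{\rm NLS}W)=0$, which you are now \emph{assuming}) give $\tr(R^{\rm NLS}F)=(2+2a(\xi))f=0$, hence $f=0$. So your plan can be salvaged, but the paper's one-line existence proof via $W=\nabla^{\rm NLS}(\nu)(R^{\rm NLS}(\xi))$ is considerably cleaner and sidesteps the whole issue.
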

\begin{proof}[Proof]
The existence part of this lemma follows from Lemma \ref{lemmarnls}. 
Indeed, denote
\begin{align}
		W(\xi,\nu):=\nabla^{\rm{NLS}}(\nu) \bigl(R^{\rm{NLS}}(\xi)\bigr).
\end{align}
Then it follows from Lemma~\ref{lemmarnls} that 
$W(\xi,\nu)$ belongs to $\mathcal{A}[\epsilon]\otimes sl_2(\mathbb{C})[[\xi^{-1},\nu^{-1}]]\xi^{-1}\nu^{-1}$ and  satisfies \eqref{wlemma1}--\eqref{wlemma2}. To see the uniqueness part, write 
\begin{align}
		X(\xi,\nu)=\sum_{i,j\ge 0} \frac{X_{i,j}}{\xi^{i+1}\nu^{j+1}},\quad Y(\xi,\nu)=\sum_{i,j\ge 0} \frac{Y_{i,j}}{\xi^{i+1}\nu^{j+1}},\quad Z(\xi,\nu)=\sum_{i,j\ge 0} \frac{Z_{i,j}}{\xi^{i+1}\nu^{j+1}}, 
\end{align}
where $X_{i,j}, Y_{i,j}, Z_{i,j}\in \mathcal{A}[\epsilon],  i,j\ge 0$.  Substituting these expressions into \eqref{wlemma1}--\eqref{wlemma2}, we find that these equalities give rise to the following recurison relations for $X_{i,j}, Y_{i,j}, Z_{i,j}$ :
\begin{align}
&		\epsilon\partial(X_{i,j})=-\left(2q\delta_{0,j}-2B_j\right)C_i+\left(2r\delta_{0,j}+2C_j\right)B_i+rY_{i,j}+qZ_{i,j}\label{pw1},\\
&		\epsilon\partial(Y_{i,j})=2A_i\left(2q\delta_{0,j}-2B_j\right)-2qX_{i,j}+2Y_{i+1,j}\label{pw2},\\
&		\epsilon\partial(Z_{i,j})=-2A_j\left(2r\delta_{0,j}+2C_j\right)-2rX_{i,j}-2Z_{i+1,j},\\
&		X_{i,j}=-\frac{1}{2}\sum_{l+m=i-1} (2A_lX_{m,j}+C_lY_{m,j}+B_lZ_{m,j}),\label{pw4}\\
&		X_{0,j}=0,\quad Y_{0,j}=-2q\delta_{0,j}+2B_j,\quad Z_{0,j}=-2r\delta_{0,j}-2C_j,\label{pw5}
\end{align}
where $i\ge 1$, $j\ge 0$. 
Using these recursion relations we see that $W(\xi,\nu)$ are uniquely determined. The lemma is proved.	
\end{proof}
We are now ready to prove Lemma \ref{lemma21}.
\begin{proof}[Proof of Lemma \ref{lemma21}]
Define $W^*$ as the right-hand side of $\eqref{derivition}$. Write $W^*=\epsilon^{-1}\begin{pmatrix}
		X^*&Y^*\\
		Z^*&-X^*
	\end{pmatrix}$.
Explicitly,
\begin{align}
		X^*&=\frac{-b(\xi)c(\nu)+b(\nu)c(\xi)}{\nu-\xi},\label{1}\\
		Y^*&=-\frac{2b(\xi)}{\nu}+\frac{-2(1+a(\xi))b(\nu)+2(1+a(\nu))b(\xi)}{\nu-\xi},\\
		Z^*&=\frac{2c(\xi)}{\nu}+\frac{2(1+a(\xi))c(\nu)-2(1+a(\nu))c(\xi)}{\nu-\xi}\label{3}.
\end{align}
Then it is straightforward to verify that $W^* \in \mathcal{A}[\epsilon]\otimes sl_2(\mathbb{C})[[\xi^{-1},\nu^{-1}]]\xi^{-1}\nu^{-1}$.
We can also verify that $W := W ^*$ satisfies \eqref{wlemma1}--\eqref{wlemma2}. Indeed, by using \eqref{dr}, \eqref{abc1}--\eqref{abc3} and \eqref{1}--\eqref{3},  we have
\begin{align*}
		\epsilon\partial X^*=&\frac{2}{\xi-\nu}(b(\nu) (r +  r a(\xi) +  (\xi-\nu) c(\xi)) - b(\xi) (r +  r a(\nu) -  (\xi-\nu) c(\nu))  \\
		&+q ((1 +  a(\nu)) c(\xi) - (1 +  a(\xi)) c(\nu))),\\
		\epsilon\partial Y^*=&-\frac{2}{\nu(\nu-\xi)}(2q(\xi-\nu) + 2 \nu^2 b(\nu) + 2 a(\xi) (q (\xi - \nu) +  \nu^2 b(\nu)) +  q \nu b(\nu) c(\xi)\\
		& -  b(\xi) (2\xi^2 + 2 \xi \nu a(\nu) + q \nu c(\nu))),\\
		\epsilon\partial Z^*=&\frac{2}{\nu(\nu-\xi)}(2r(\xi-\nu) - 2 \nu^2 c(\nu) +  2 a(\xi) (r (\xi - \nu) - \nu^2 c(\nu)) + r \nu c(\nu) b(\xi)\\
		& + c(\xi) (2\xi^2 + 2 \xi \nu a(\nu) - r \nu b(\nu))).
\end{align*}
Substituting these into the left-hand side of \eqref{wlemma1}--\eqref{wlemma2} with $W := W ^*$,  we find that \eqref{wlemma1}--\eqref{wlemma2} hold  true.  The lemma is proved due to Lemma \ref{wlemma}.
\end{proof}

In the next lemma, let us prove that the admissible derivations $(D_j)_{j\ge 0}$, defined in~\eqref{zeroeqnls}, all commute. 
\begin{lemma}\label{commute}\label{corcom}
The derivations $(D_j)_{j\ge 0}$ commute pairwise.
\end{lemma}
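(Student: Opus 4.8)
The plan is to prove that $D_j D_k = D_k D_j$ as derivations on $\mathcal{A}$ by showing the two sides agree on the generators $q$ and $r$, since a derivation commuting with $\partial$ is determined by its values on $q,r$. Equivalently, since both $D_j$ and $D_k$ are admissible, it suffices to check $[D_j,D_k](q)=0$ and $[D_j,D_k](r)=0$.

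The natural tool is Lemma~\ref{lemma21}. First I would rewrite the flows in terms of the loop operator: from \eqref{dr} we have $\e \nabla^{\rm NLS}(\xi)(q) = 2b(\xi) - 2q/\xi$ and $\e\nabla^{\rm NLS}(\xi)(r) = 2c(\xi)+2r/\xi$, and the coefficient of $\xi^{-j-2}$ in $\nabla^{\rm NLS}(\xi)$ is $D_j/2^j$. So it is enough to prove $\bigl[\nabla^{\rm NLS}(\nu),\nabla^{\rm NLS}(\xi)\bigr](q)=0$ and similarly for $r$, i.e. $\nabla^{\rm NLS}(\nu)\bigl(\e\nabla^{\rm NLS}(\xi)(q)\bigr)$ is symmetric in $\xi,\nu$ (the extra factor of $\e$ is harmless since $\e$ is central). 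Now apply $\nabla^{\rm NLS}(\nu)$ to $2b(\xi)-2q/\xi$. The term $\nabla^{\rm NLS}(\nu)(b(\xi))$ is read off from the $(1,2)$-entry of $\e\nabla^{\rm NLS}(\nu)(R^{\rm NLS}(\xi))$, which Lemma~\ref{lemma21} evaluates as the $(1,2)$-entry of $\frac{[R^{\rm NLS}(\nu),R^{\rm NLS}(\xi)]}{\nu-\xi} + [Q^{\rm NLS}(\nu),R^{\rm NLS}(\xi)]$; the term $\nabla^{\rm NLS}(\nu)(q)$ is again given by \eqref{dr}. Assembling these, $\e^2\nabla^{\rm NLS}(\nu)\nabla^{\rm NLS}(\xi)(q)$ becomes an explicit rational expression in the entries $a,b,c$ of $R^{\rm NLS}(\xi)$ and $R^{\rm NLS}(\nu)$ (using $a^2+2a+bc=0$ from \eqref{22}), which one then checks is symmetric under $\xi\leftrightarrow\nu$; the same computation with the $(2,1)$-entry handles $r$.

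Concretely, the $(1,2)$-entry of $[R^{\rm NLS}(\nu),R^{\rm NLS}(\xi)]$ is $(2+a(\nu))b(\xi)-(2+a(\xi))b(\nu)$ up to the symmetric pieces, i.e. it equals $-\bigl((1+a(\xi))b(\nu)-(1+a(\nu))b(\xi)\bigr)$ plus $b(\xi)-b(\nu)$, and the $(1,2)$-entry of $[Q^{\rm NLS}(\nu),R^{\rm NLS}(\xi)]$ is $-\tfrac{2}{\nu}b(\xi)$; these are exactly the building blocks $Y^{*}$ already written down in \eqref{1}--\eqref{3}. So in fact $\e\nabla^{\rm NLS}(\nu)(q)$ expressed via $Y^{*}$ reads $\e\nabla^{\rm NLS}(\nu)\bigl(\e\nabla^{\rm NLS}(\xi)(q)\bigr) = 2Y^{*}(\xi,\nu)/\text{(stuff)} - \dots$, and the manifest antisymmetry/symmetry of $Y^{*}$ under swapping $\xi,\nu$ (it is odd, $Y^{*}(\xi,\nu)=-Y^{*}(\nu,\xi)$ up to the $1/\nu$ pole terms which reorganize correctly) yields the claim. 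I would present this as: compute $\e^2\nabla^{\rm NLS}(\nu)\nabla^{\rm NLS}(\xi)(q)$ using \eqref{dr} and \eqref{derivition}, obtain a closed formula in $a(\xi),b(\xi),c(\xi),a(\nu),b(\nu),c(\nu)$, and observe it is symmetric in $(\xi,\nu)$; hence $D_jD_k(q)=D_kD_j(q)$ for all $j,k$, and likewise for $r$, so all $D_j$ commute.

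The main obstacle is purely bookkeeping: carefully extracting the correct matrix entries from the double commutator in \eqref{derivition}, keeping track of the $Q^{\rm NLS}(\nu)$ correction term and the explicit $-2q/\xi$, $+2r/\xi$ inhomogeneities in \eqref{dr}, and then verifying the resulting rational function in $\xi,\nu$ is symmetric. One subtlety to watch: the operator $\nabla^{\rm NLS}(\nu)$ does not act only on the $\mathcal{A}$-entries but genuinely on differential polynomials, so the chain rule / Leibniz behavior is automatic because $\nabla^{\rm NLS}(\nu)$ is built from the derivations $D_j$; no extra care is needed there beyond linearity. A clean alternative, if one prefers to avoid the rational-function juggling, is to invoke Lemma~\ref{lemma21} twice to compute $\nabla^{\rm NLS}(\mu)\nabla^{\rm NLS}(\nu)(R^{\rm NLS}(\xi))$ and antisymmetrize in $\mu,\nu$, reducing the identity to a Jacobi-type identity for the $sl_2$-valued resolvents together with the commutation of $\partial=D_0$ with each $D_j$ — but the entry-wise computation above is the most direct route.
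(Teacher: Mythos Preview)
Your approach is correct and rests on the same key ingredient as the paper's proof: reduce to the generators $q,r$ by admissibility, then use Lemma~\ref{lemma21} to compute how one loop operator acts on the entries of $R^{\rm NLS}$. The difference is purely organizational. The paper extracts the coefficient of $\xi^{-i-2}\nu^{-j-2}$ from \eqref{derivition} to obtain the explicit formula
\[
D_jD_i(q)-D_iD_j(q)=2^{i+j+2}\e^{-2}\sum_{k=i+1}^{j}\bigl(A_kB_{i+j-k+1}-A_{i+j-k+1}B_k\bigr),
\]
which visibly vanishes by the substitution $k\mapsto i+j-k+1$; you instead stay at the generating-function level and verify that $\e^2\nabla^{\rm NLS}(\nu)\nabla^{\rm NLS}(\xi)(q)=2Y^{*}(\xi,\nu)-\tfrac{4}{\xi}b(\nu)+\tfrac{4q}{\xi\nu}$ is symmetric in $\xi,\nu$. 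Both are the same identity, and neither actually needs the relation $a^2+2a+bc=0$ that you mention. The paper's version is a couple of lines shorter and avoids the bookkeeping with $Q^{\rm NLS}$ and the inhomogeneous $-2q/\xi$ terms that your write-up leaves somewhat implicit; your version has the advantage of making the role of \eqref{derivition} completely transparent.
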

\begin{proof}[Proof]
Since $(D_j)_{j\ge 0}$ are admissible derivations, it suffices to prove that
\begin{align}
		D_iD_j(q)=D_jD_i(q),\quad D_iD_j(r)=D_jD_i(r),\quad \forall\,\, i,j\ge 0.
\end{align}
Without loss of generality, assume that $i<j$. Comparing the coefficient $\xi^{-i-2}\nu^{-j-2}$ of \eqref{derivition}, we have for example
\begin{align}
		D_jD_i(q)&=2^{i+1}\epsilon^{-1}D_j(B_{i+1})=2^{i+j+2}\epsilon^{-2}\left(B_{i+j+2}+\sum_{k=0}^{j}(A_kB_{i+j-k+1}-A_{i+j-k+1}B_k)\right),\\
		D_iD_j(q)&=2^{j+1}\epsilon^{-1}D_i(B_{j+1})=2^{i+j+2}\epsilon^{-2}\left(B_{i+j+2}+\sum_{k=0}^{i}(A_kB_{i+j-k+1}-A_{i+j-k+1}B_k)\right).
\end{align}
Therefore,
\begin{align*}
		D_jD_i(q)-D_iD_j(q)&=2^{i+j+2}\epsilon^{-2}\sum_{k=i+1}^{j}(A_kB_{i+j-k+1}-A_{i+j-k+1}B_k)
		=0.
\end{align*} 
Similarly, we have $D_jD_i(r)=D_iD_j(r)$. The lemma is proved. 
\end{proof}

\begin{proof}[Proof of Lemma~\ref{2taustr}]
The proof is similar to that in~\cite{BDY21,DY17}. We omit its details.
\end{proof}

\begin{proof}[Proof of Theorem~\ref{npointc}]
Based on Lemmas~\ref{lemmarnls},~\ref{lemma21} and Definition~\ref{lemmataustr}, the proof is then similar to that in~~\cite{BDY16,BDY21}, and we therefore omit its details.
\end{proof}

Let $q({\bf t};\epsilon),r({\bf t};\epsilon)$ be a solution to the NLS hierarchy \eqref{Tnlshierarchy}, and $\tau^{\rm NLS}({\bf t};\epsilon)$ the tau-function of the solution. 

The following corollary follows from Corollary~\ref{cor2.2}. Indeed, it is clear that we can take $t_{>0}=0$ on both sides of~\eqref{cor2,21}. Denote for short $R^{\rm NLS}(X;\xi;\epsilon)=R^{\rm NLS}({\bf t};\xi;\epsilon)|_{t_{>0}=0}$.

\begin{cor}\label{cor2.3}
For any $k\ge 2$, the following formula holds true:
\begin{align}
&\sum_{i_1,\dots,i_k\ge 0} \epsilon^k\frac{\partial^k\log\tau^{\rm NLS}({\bf t};\epsilon)}{\partial t_{i_1}\cdots\partial t_{i_k}}\bigg|_{t_{>0}=0}\prod_{j=1}^{k}\frac{1}{2^{i_j}\xi_{j}^{i_j+2}} \nonumber\\
=&-\sum_{\sigma\in S_k/ C_k} \frac{\mathrm{tr}\,[R^{\rm{NLS}}(X;\xi_{\sigma(1)};\epsilon)\cdots R^{\rm{NLS}}(X;\xi_{\sigma(k)};\epsilon)]}{(\xi_{\sigma(1)}-\xi_{\sigma(2)})\cdots(\xi_{\sigma(k-1)}-\xi_{\sigma(k)})(\xi_{\sigma(k)}-\xi_{\sigma(1)})}-\frac{4\delta_{k,2}}{(\xi_1-\xi_2)^2}.\label{cor2.22}
\end{align}
\end{cor}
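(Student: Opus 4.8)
The plan is to deduce Corollary~\ref{cor2.3} from Corollary~\ref{cor2.2} by the simple observation that both sides of~\eqref{cor2,21} are power series in $t_{>0}$ (with coefficients in an appropriate ring of formal series in the $\xi_j^{-1}$), so we may set $t_{>0}=0$ termwise. Concretely, I would first note that the left-hand side of~\eqref{cor2,21} is, by~\eqref{DZtauf} and~\eqref{symmetric}, the generating series of the differential polynomials $\Omega_{i_1,\dots,i_k}^{\rm NLS}({\bf t};\epsilon)$, each of which is obtained from an element of $\mathcal{A}[\epsilon]$ via the substitution~\eqref{Tsubs}; hence each is a power series in $t_{>0}$ whose value at $t_{>0}=0$ is well defined. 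Similarly, the entries of $R^{\rm{NLS}}({\bf t};\xi;\epsilon)$ are images under~\eqref{Tsubs} of elements of $\mathcal{A}[\epsilon][[\xi^{-1}]]$, so the right-hand side of~\eqref{cor2,21} restricts to $t_{>0}=0$ coefficient by coefficient in the $\xi_j^{-1}$-expansion.

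The second step is purely notational: by definition $R^{\rm{NLS}}(X;\xi;\epsilon):=R^{\rm{NLS}}({\bf t};\xi;\epsilon)|_{t_{>0}=0}$, and the factor $-4\delta_{k,2}/(\xi_1-\xi_2)^2$ on the right of~\eqref{cor2,21} does not depend on ${\bf t}$, so it is unaffected by the restriction. Setting $t_{>0}=0$ in both sides of~\eqref{cor2,21} therefore yields exactly~\eqref{cor2.22}. One should remark that the restriction $t_{>0}=0$ still leaves a dependence on $X=t_0$ (which is why the notation $R^{\rm{NLS}}(X;\xi;\epsilon)$ retains the argument $X$), and that the identity~\eqref{cor2.22} is then an identity of formal power series in $X$ with coefficients in $\mathbb{C}((\epsilon))[[\xi_1^{-1},\dots,\xi_k^{-1}]]$, after clearing the denominators $(\xi_{\sigma(i)}-\xi_{\sigma(i+1)})$ in the usual way (expanding each $1/(\xi-\nu)$ as a geometric series in the appropriate region, consistently with the convention already used for~\eqref{ntaustr}).

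There is essentially no obstacle here: the only thing to be careful about is that the specialization $t_{>0}=0$ is legitimate, i.e.\ that every series appearing is a genuine power series in $t_{>0}$ and not, say, a Laurent series, so that evaluation at $t_{>0}=0$ makes sense. This is guaranteed because all the objects involved ($\Omega_{i_1,\dots,i_k}^{\rm NLS}$, the entries of $R^{\rm{NLS}}$) are polynomials in $q_i,r_i$ with coefficients in $\mathbb{C}[\epsilon]$ (or $\mathbb{C}((\epsilon))$ after including the $\epsilon^{-1}$ from the flows), pulled back under~\eqref{Tsubs}, and the solution $(q({\bf t};\epsilon),r({\bf t};\epsilon))$ is assumed to depend on $t_{>0}$ as a power series. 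Hence the corollary follows immediately, and the proof can be stated in one or two sentences.

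\begin{proof}[Proof of Corollary~\ref{cor2.3}]
Both sides of~\eqref{cor2,21} are power series in $t_{>0}$ whose coefficients (with respect to the $\xi_j^{-1}$-expansion) lie in $\mathbb{C}((\epsilon))[[X]]$, since the $\Omega_{i_1,\dots,i_k}^{\rm NLS}({\bf t};\epsilon)$ and the entries of $R^{\rm{NLS}}({\bf t};\xi;\epsilon)$ are images under~\eqref{Tsubs} of elements of $\mathcal{A}[\epsilon]$, resp.\ of $\mathcal{A}[\epsilon][[\xi^{-1}]]$, and $(q({\bf t};\epsilon),r({\bf t};\epsilon))$ is a power-series-in-$t_{>0}$ solution. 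We may therefore restrict both sides of~\eqref{cor2,21} to $t_{>0}=0$. Since the term $-4\delta_{k,2}/(\xi_1-\xi_2)^2$ is independent of ${\bf t}$ and $R^{\rm NLS}(X;\xi;\epsilon)=R^{\rm NLS}({\bf t};\xi;\epsilon)|_{t_{>0}=0}$ by definition, the resulting identity is precisely~\eqref{cor2.22}.
\end{proof}
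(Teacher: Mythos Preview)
Your proposal is correct and matches the paper's own argument exactly: the paper simply states that Corollary~\ref{cor2.3} follows from Corollary~\ref{cor2.2} since one can take $t_{>0}=0$ on both sides of~\eqref{cor2,21}, with $R^{\rm NLS}(X;\xi;\epsilon):=R^{\rm NLS}({\bf t};\xi;\epsilon)|_{t_{>0}=0}$. Your additional care in justifying why the restriction is legitimate (power-series dependence on $t_{>0}$) is a welcome elaboration but not a departure from the paper's approach.
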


Corollary \ref{cor2.3} gives an algorithm with the initial value $(f(X;\epsilon),g(X;\epsilon))$ of the solution $(q({\bf t};\epsilon),r({\bf t};\epsilon))$ as the only input for computing the $k_{\rm th}$-order logarithmic derivatives of the tau-function $\tau^{\rm NLS}({\bf t};\epsilon)$ evaluated at $t_{>0}=0$ for $k\ge 2$. 
By using Lemma \ref{lemmarnls}, $R^{\rm NLS}(X;\xi;\epsilon)$ is determined uniquely by the recurrence relations \eqref{abc1}--\eqref{23} and this gives an effective way to compute the matrix resolvent $R^{\rm NLS}(X;\xi;\epsilon)$;  the coefficients in the ${\bf t}$-expansion of $\tau^{\rm NLS}({\bf t};\epsilon)$ are then obtained through algebraic manipulations by using~\eqref{cor2.22}.

\section{Proofs of Theorems \ref{part1}, \ref{part2}}\label{sec2} 

In this section, we prove Theorems \ref{part1},\ref{part2}. In the proofs, we will  use the 
MR method for the NLS hierarchy and for the Toda lattice 
hierarchy (cf.~\eqref{laxtoda}--\eqref{todatauf3} in Appendix~\ref{appa}).

\begin{proof} [Proof of Theorem~\ref{part1}]
By using \eqref{211}--\eqref{todatauf3} and \eqref{importandef}, we have
\begin{align}
 & \partial_{t_{j}}(\log q(x;{\bf t};\epsilon))=\epsilon^{-1}c_{j+1}(x+\epsilon;{\bf t};\epsilon),\quad \partial_{t_{j}}(\log\,r(x;{\bf t};\epsilon))=-\epsilon^{-1}c_{j+1}(x;{\bf t};\epsilon),\quad j\ge 0.   \label{tildeDj}	\\
 & q(x;{\bf t};\epsilon)r(x;{\bf t};\epsilon)=w(x;{\bf t};\epsilon),\quad q(x-\epsilon;{\bf t};\epsilon)r(x;{\bf t};\epsilon)=1. \label{qrrelation}
\end{align}
 In particular, when $j=0$, we have
\begin{align}
	\partial_{X}(q(x;{\bf t};\epsilon))=\epsilon^{-1}q(x;{\bf t};\epsilon)v(x;{\bf t};\epsilon),\quad\partial_{X}(r(x;{\bf t};\epsilon))=-\epsilon^{-1}r(x;{\bf t};\epsilon)v(x-\epsilon;{\bf t};\epsilon).\label{0flow}
\end{align}
Let us prove two lemmas.
\begin{lemma}
We have the following identity
\begin{align}\label{keylaxrelation}
T(x;{\bf t};\epsilon)^{-1}\circ\mathcal{L}^{\rm NLS}(\xi)\mid_{q=q(x;{\bf t};\epsilon),r=r(x;{\bf t};\epsilon)}\circ T(x;{\bf t};\epsilon)=\mathcal{A}(\lambda)+\frac{\lambda}{2}{\rm I},\quad \lambda=2\xi,
\end{align}
where 
$T(x;{\bf t};\epsilon):=\begin{pmatrix}
-1&0\\
0&r(x;{\bf t};\epsilon)
\end{pmatrix}$ 
and 
$\mathcal{A}(\lambda):=
\epsilon\partial_{X}-\begin{pmatrix} \lambda & -w(x;\bt; \e) \\ 1 & v(x-\e; \bt; \e)\end{pmatrix}$. 
\end{lemma}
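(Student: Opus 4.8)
The plan is to verify the claimed conjugation identity \eqref{keylaxrelation} by a direct computation, treating the two sides as matrix-valued first-order differential operators in $X$ and checking that they agree as operators acting on $2$-component functions. Since $T(x;{\bf t};\epsilon)$ is a diagonal matrix depending on $X$ (through $r$), the conjugation $T^{-1}\circ(\epsilon\partial_X + U^{\rm NLS}(\xi))\circ T$ produces $\epsilon\partial_X + \epsilon T^{-1}\partial_X(T) + T^{-1}U^{\rm NLS}(\xi)T$, so the whole statement reduces to the single matrix identity
\begin{align}
\epsilon T^{-1}\partial_X(T) + T^{-1}U^{\rm NLS}(\xi)T = -\begin{pmatrix} \lambda & -w \\ 1 & v(x-\epsilon)\end{pmatrix} + \frac{\lambda}{2}{\rm I},\qquad \lambda = 2\xi.
\end{align}
Here the left side must be evaluated with $q=q(x;{\bf t};\epsilon)$, $r=r(x;{\bf t};\epsilon)$.

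The key steps, in order: first I would compute $\epsilon T^{-1}\partial_X(T)$, which is $\mathrm{diag}(0,\epsilon\partial_X(r)/r)$, and then invoke the second identity in \eqref{0flow}, namely $\partial_X(r) = -\epsilon^{-1} r\,v(x-\epsilon)$, to rewrite this as $\mathrm{diag}(0, -v(x-\epsilon))$. Second I would compute $T^{-1}U^{\rm NLS}(\xi)T$ entrywise: with $T=\mathrm{diag}(-1,r)$ and $U^{\rm NLS}(\xi)=\begin{pmatrix}-\xi & -q\\ r & \xi\end{pmatrix}$, the $(1,1)$ entry is $-\xi$, the $(2,2)$ entry is $\xi$, the $(1,2)$ entry is $(-1)^{-1}(-q)(r) = qr$, and the $(2,1)$ entry is $r^{-1}(r)(-1) = -1$. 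Third I would use the first relation in \eqref{qrrelation}, $q(x;{\bf t};\epsilon)r(x;{\bf t};\epsilon) = w(x;{\bf t};\epsilon)$, to replace $qr$ by $w$. Assembling, the left side becomes $\begin{pmatrix} -\xi & w \\ -1 & \xi - v(x-\epsilon)\end{pmatrix}$, and with $\lambda = 2\xi$ one checks this equals $-\begin{pmatrix}\lambda & -w \\ 1 & v(x-\epsilon)\end{pmatrix} + \tfrac{\lambda}{2}{\rm I} = \begin{pmatrix}-\lambda + \lambda/2 & w \\ -1 & -v(x-\epsilon) + \lambda/2\end{pmatrix} = \begin{pmatrix}-\xi & w \\ -1 & \xi - v(x-\epsilon)\end{pmatrix}$, which matches. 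This completes the verification.

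Since every ingredient — the formula for $\partial_X(r)$ in \eqref{0flow} and the relation $qr=w$ in \eqref{qrrelation} — is already established just before the lemma, there is essentially no obstacle here; the proof is a short bookkeeping check. The only point requiring a little care is the operator (rather than matrix) nature of $\mathcal{L}^{\rm NLS}$ and $\mathcal{A}(\lambda)$: one must remember that conjugating $\epsilon\partial_X$ by the $X$-dependent matrix $T$ generates the extra term $\epsilon T^{-1}\partial_X(T)$ via the product rule, and this is exactly what conspires with \eqref{0flow} to produce the $v(x-\epsilon)$ term on the right-hand side. I would state that observation explicitly and then present the two-by-two matrix computation as a single displayed equation.
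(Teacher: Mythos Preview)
Your proposal is correct and follows exactly the approach indicated in the paper, which simply says ``By using \eqref{qrrelation} and \eqref{0flow} with a direct computation, the lemma is proved.'' You have spelled out that direct computation in full, using precisely the same two ingredients ($qr=w$ from \eqref{qrrelation} and $\epsilon\partial_X r/r=-v(x-\epsilon)$ from \eqref{0flow}), so there is nothing to add or correct.
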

\begin{proof}
By using \eqref{qrrelation} and \eqref{0flow} with a direct computation, the lemma is proved.
\end{proof}
Denote for short $\mathcal{L}^{\rm NLS}(\xi)=\mathcal{L}^{\rm NLS}(\xi)|_{q=q(x;{\bf t};\epsilon),r=r(x;{\bf t};\epsilon)}$.
\begin{lemma} \label{thmReso} 
The following identity holds:
 \begin{align}\label{matrixrelation}
 R^{\rm NLS}(x;{\bf t};\xi;\epsilon)=2T(x;{\bf t};\epsilon)R(x;{\bf t};\lambda;\epsilon)T^{-1}(x;{\bf t};\epsilon).
 \end{align}
\end{lemma}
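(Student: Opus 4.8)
The plan is to derive \eqref{matrixrelation} from the previous lemma \eqref{keylaxrelation} together with the characterization of the basic matrix resolvent given in Lemma~\ref{lemmarnls}. The key observation is that the matrix resolvent is uniquely determined by its defining properties (commutation with the Lax operator, normalization of the leading term, trace and determinant conditions), and that these properties are preserved under the conjugation-and-shift relating $\mathcal{L}^{\rm NLS}(\xi)$ to $\mathcal{A}(\lambda)+\tfrac{\lambda}{2}{\rm I}$. Concretely, I would set $\widetilde R:=2\,T\,R(x;{\bf t};\lambda;\epsilon)\,T^{-1}$ with $\lambda=2\xi$, where $R(x;{\bf t};\lambda;\epsilon)$ is the basic MR of the Toda Lax operator $\mathcal{A}(\lambda)$ (recalled in Appendix~\ref{appa}), and check that $\widetilde R$ satisfies the three conditions \eqref{rnlsdef}, \eqref{21}, \eqref{22} characterizing $R^{\rm NLS}(x;{\bf t};\xi;\epsilon)$; uniqueness then forces $\widetilde R=R^{\rm NLS}(x;{\bf t};\xi;\epsilon)$.

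For the commutation relation \eqref{21}: since $R(x;{\bf t};\lambda;\epsilon)$ commutes with $\mathcal{A}(\lambda)$, and adding a scalar multiple of the identity $\tfrac{\lambda}{2}{\rm I}$ does not affect commutators, $R$ also commutes with $\mathcal{A}(\lambda)+\tfrac{\lambda}{2}{\rm I}$. Conjugating by $T$ and using \eqref{keylaxrelation}, we get that $T R T^{-1}$ commutes with $\mathcal{L}^{\rm NLS}(\xi)$; here one must be slightly careful that $T=T(x;{\bf t};\epsilon)$ depends on $X=t_0$, so the conjugation $T^{-1}\circ\mathcal{L}^{\rm NLS}\circ T$ is an identity of differential operators (as written in \eqref{keylaxrelation}), and the relation $[\mathcal{L}^{\rm NLS}(\xi),\widetilde R]=0$ should be read accordingly — this is exactly the content of \eqref{keylaxrelation} being an operator identity. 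For the trace and determinant conditions \eqref{22}: trace and determinant are conjugation-invariant, so $\tr\widetilde R=2\tr R$ and $\det\widetilde R=4\det R$; the Toda basic MR satisfies $\tr R=1$ and $\det R=0$ (see the normalization recalled in Appendix~\ref{appa}), giving $\tr\widetilde R=2$ and $\det\widetilde R=0$ as required. For the leading-term normalization \eqref{rnlsdef}: I would expand $R(x;{\bf t};\lambda;\epsilon)$ in powers of $\lambda^{-1}=\tfrac12\xi^{-1}$, extract its leading ($\lambda$-independent) term, conjugate by $T$, and verify that $2T(\cdot)T^{-1}$ of that leading term equals $\begin{pmatrix}2&0\\0&0\end{pmatrix}$ while all lower-order terms lie in $\mathrm{Mat}(2,\mathcal{A}[\epsilon][[\xi^{-1}]]\xi^{-1})$ — using that $T$ is diagonal and that $q=q(x;{\bf t};\epsilon)$, $r=r(x;{\bf t};\epsilon)$ together with their $X$-derivatives (via \eqref{0flow}) are the entries appearing.

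The main obstacle I anticipate is bookkeeping rather than conceptual: one must match the two normalization conventions (the factor of $2$, the substitution $\lambda=2\xi$, and the precise leading terms of the two basic matrix resolvents) and confirm that the $\lambda^{-1}$-adic / $\xi^{-1}$-adic topologies are compatible so that ``$2TRT^{-1}$'' genuinely lands in the ring $\mathrm{Mat}(2,\mathcal{A}[\epsilon][[\xi^{-1}]])$ where uniqueness in Lemma~\ref{lemmarnls} applies — in particular that conjugation by the non-constant matrix $T$ does not spoil the polynomial-in-$\epsilon$ and formal-series structure, which follows because $T$ and $T^{-1}$ have entries in the relevant ring of functions. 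Once these normalizations are pinned down, the uniqueness statement of Lemma~\ref{lemmarnls} (applied with $q,r$ specialized to $q(x;{\bf t};\epsilon),r(x;{\bf t};\epsilon)$) closes the argument immediately. I would end the proof with ``By the uniqueness in Lemma~\ref{lemmarnls}, \eqref{matrixrelation} follows.''
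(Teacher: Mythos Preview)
Your approach is correct and essentially identical to the paper's: set $\widehat R:=2TRT^{-1}$, verify it satisfies the defining conditions \eqref{rnlsdef}--\eqref{22} for $R^{\rm NLS}$, and invoke the uniqueness from Lemma~\ref{lemmarnls}. The only point to sharpen is that $R$ is not \emph{defined} as the basic MR of $\mathcal{A}(\lambda)$ --- the Toda Lax operator in this paper is the \emph{difference} operator $\mathcal{L}(\lambda)=\Lambda+U(\lambda)$, and $R$ is characterized by \eqref{l32}--\eqref{l33}; the commutation $[\mathcal{A}(\lambda),R]=0$ you need is not automatic but is precisely equation~\eqref{35} (i.e., $\epsilon\partial_X R=[V_0(\lambda),R]$ with $\mathcal{A}(\lambda)=\epsilon\partial_X-V_0(\lambda)$), which the paper invokes explicitly.
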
 
\begin{proof}
Denote $\widehat{R}(\lambda)=2T(x;{\bf t};\epsilon)R(x;{\bf t};\lambda;\epsilon)T^{-1}(x;{\bf t};\epsilon)$. By using~\eqref{keylaxrelation}, we have
\begin{align}
  \left[\mathcal{L}^{\rm NLS}(\xi),\widehat{R}(\lambda)\right]=T[\mathcal{A}(\lambda),2R(x;{\bf t};\lambda;\epsilon)]T^{-1}.
\end{align}
By using ${\rm det}\, R(x;{\bf t};\xi;\epsilon)=0, {\rm tr} \, R(x;{\bf t};\xi;\epsilon)=1$, \eqref{Rleadingterm1023}
and~\eqref{35}, we know that 
\begin{align}
\left[\mathcal{L}^{\rm NLS}(\xi),\widehat{R}(\lambda)\right]=0, 
\quad {\rm det} \, \widehat{R}(\lambda)=0,\quad {\rm tr}\, \widehat{R}(\lambda)=2,\quad
 \widehat{R}(\lambda)=\begin{pmatrix}2&0\\0&0\end{pmatrix}
 +\mathcal{O}\left(\xi^{-1}\right). \label{rstar1023}
\end{align}
By definition, $R^{\rm NLS}(x;{\bf t};\xi;\epsilon)$ also satisfies~\eqref{rstar1023}. 
It is clear from the proof of Lemma~\ref{lemmarnls} that the solution to~\eqref{rstar1023} 
is unique. Therefore,  
$\widehat{R}(\lambda)=R^{\rm NLS}(x;{\bf t};\xi;\epsilon)$. The lemma is proved.
\end{proof}

Let us now continue the proof of the theorem. By using  \eqref{todarec1}, \eqref{absfloetoda}, \eqref{35} and \eqref{dcj},  we obtain
\begin{align}
		\frac{\partial \mathcal{A}(\lambda)}{\partial t_{k}}=\epsilon^{-1}[V_{k}^{ }(\lambda),\mathcal{A}(\lambda)],\quad k\ge 0.\label{pequation}
\end{align}
Here $V _{k}(\lambda)=\Big(\lambda^{k+1}R(x;{\bf t};\lambda;\epsilon)\Big)_{+}+\begin{pmatrix}
		0&0\\
		0&c_{k+1}(x;{\bf t};\epsilon)
	\end{pmatrix}.$
Then by using \eqref{tildeDj}, \eqref{keylaxrelation}, \eqref{matrixrelation} and \eqref{pequation}, for any $k\ge 0$, we obtain
\begin{align*}
 \frac{\partial \mathcal{L}^{\rm NLS}(\xi)}{\partial t_{k}}
 =T\left(\Big[T^{-1}\frac{\partial T}{\partial t_{k}},\mathcal{A}(\lambda)\Big]+\frac{\partial \mathcal{A}(\lambda)}{\partial t_{k}}\right)T^{-1}
 =\epsilon^{-1}2^k \left[\big(\xi^{k+1}R^{\rm NLS}(x;{\bf t};\xi;\epsilon)\big)_{+},\mathcal{L}^{\rm NLS}(\xi)\right].
\end{align*}
Then by using \eqref{zeroeqnls}, for any  $x,\epsilon$, $(q(x;{\bf t};\epsilon),r(x;{\bf t};\epsilon))$ is a solution to the NLS hierarchy \eqref{Tnlshierarchy} under identifying $\partial_{t_{0}}$ with $\partial_X$. 
Substituting \eqref{matrixrelation} into \eqref{tastr}, and using \eqref{taustr} and \eqref{todatauf1}, we have
\begin{align}
\sum_{i, j\ge 0} \frac{1}{\xi^{i+2}\nu^{j+2}}\frac{\Omega_{i,j}^{\rm NLS}(x;{\bf t};\epsilon)}{2^{i+j}}=\frac{4(\mathrm{tr}\, R(x;{\bf t};2\xi;\epsilon)R(x;{\bf t};2\nu;\epsilon)-1)}{(\xi-\nu)^2}=\sum_{i, j\ge 0} \frac{1}{\xi^{i+2}\nu^{j+2}}\frac{\Omega_{i,j}(x;\bm{s};\epsilon)}{2^{i+j}}.\label{=1}		
\end{align}
Then by using \eqref{todatauf1}, we have
\begin{align}
      \Omega_{i,j}^{\rm NLS}(x;{\bf t};\epsilon)=\Omega_{i,j}(x;{\bf t};\epsilon)=\epsilon^2\frac{\partial^2 \log\tau(x;{\bf t};\epsilon)}{\partial t_{i}\partial t_{j}},\quad i,j\ge 0,\label{=2}
\end{align}
 where $\tau(x;{\bf t};\epsilon)$ is the tau function of the solution 
 $(v(x;{\bf t};\epsilon),w(x;{\bf t};\epsilon))$ to the Toda lattice hierarchy. Then, by using~\eqref{DZtauf}, for any $x$,  $\tau(x;{\bf t};\epsilon)$ is the tau function for the NLS hierarchy of the solution $(q(x;{\bf t};\epsilon),r(x;{\bf t};\epsilon))$. The theorem is proved.
\end{proof}

We are now to prove Theorem~\ref{part2}.

\begin{proof}[Proof of Theorem~\ref{part2}]
Let $q_{\rm s}(X;\epsilon):=q(X, {\bf 0};\epsilon),r_{\rm s}(X;\epsilon):=r(X, {\bf 0};\epsilon)$. Define
\begin{align}\label{defVWI}
		V_{\rm s}(X;\epsilon):=\epsilon\partial_{X}({\rm log\,}q_{\rm s}(X;\epsilon)),\quad W_{\rm s}(X;\epsilon):=q_{\rm s}(X;\epsilon)r_{\rm s}(X;\epsilon).
\end{align}
Let $(\tilde{v}^*(x;X;\epsilon),\tilde{w}^*(x;X;\epsilon))$ 
be the unique solution to equations~\eqref{todaeq1} with the initial data
\begin{align}\label{xtodaeqini}
\tilde{v}^*(x;X;\epsilon)|_{x=0}=V_{\rm s}(X;\epsilon),\quad\tilde{w}^*(x;X;\epsilon)|_{x=0}=W_{\rm s}(X;\epsilon).
\end{align}
Let $(v,w)=(\tilde{v}(x;{\bf t};\epsilon),\tilde{w}(x;{\bf t};\epsilon))$ be the unique power-series-in-$t_{>0}$ solution to the Toda lattice hierarchy~\eqref{Todahierarchy} with the initial data
\begin{align}\label{initialtilvw}
\tilde{v}(x;{\bf t};\epsilon)|_{t_{>0}=0}=\tilde{v}^*(x;X;\epsilon),\quad\tilde{w}(x;{\bf t};\epsilon)|_{t_{>0}=0}=\tilde{w}^*(x;X;\epsilon),
\end{align} 

We are to show that the following graph commutes:
\begin{center}
\begin{tikzcd}
		(q({\bf t};\epsilon),r({\bf t},\epsilon)) \arrow[r,"\eqref{defqr*}"] \arrow[d, "t_{>0}=0"] & (V({\bf t};\epsilon),W({\bf t};\epsilon)) \arrow[rr, "{\eqref{todaeq1},\eqref{todaeqini2}}"] & & (v(x;{\bf t};\epsilon),w(x;{\bf t};\epsilon)) \\
		(q_{\rm s}(X;\epsilon),r_{\rm s}(X;\epsilon)) \arrow[r, "\eqref{defVWI}"]  & (V_{\rm s}(X;\epsilon),W_{\rm s}(X;\epsilon)) \arrow[rr, "{\eqref{todaeq1}, \eqref{xtodaeqini}}"] & & (\tilde{v}^*(x;X;\epsilon),\tilde{w}^*(x;X;\epsilon)) \arrow[u, "\star"]        
\end{tikzcd}
\end{center}
Here, $\star$ means to solve the Toda lattice hierarchy ($t_{>0}$--flows).
	
Define $q=q(x;{\bf t};\epsilon)$ by \eqref{defw} and \eqref{defqx} (cf.~\cite{Y20}) 
and define $r(x;{\bf t};\epsilon):=1/q(x-\epsilon;{\bf t};\epsilon)$. 
Then we have
\begin{align}\label{tilqrvw}
    	q(x;{\bf t};\epsilon)r(x;{\bf t};\epsilon)=\tilde{w}(x;{\bf t};\epsilon),\quad\epsilon\partial_{X}({\rm log\,}q(x;{\bf t};\epsilon))=\tilde{v}(x;{\bf t};\epsilon).
\end{align} 
By using \eqref{tilqrvw},  \eqref{initialtilvw}, \eqref{xtodaeqini} and \eqref{defVWI}, we have
\begin{align}
	\partial_{X}({\rm log\,}q(x;{\bf t};\epsilon))|_{x=0,t_{>0}=0}=\partial_{X}({\rm log\,}q_{\rm s}(X;\epsilon)),\quad
	(q(x;{\bf t};\epsilon)r(x;{\bf t};\epsilon))|_{x=0,t_{>0}=0}=q_{\rm s}(X;\epsilon)r_{\rm s}(X;\epsilon).
\end{align}
Noting that $q(x;{\bf t};\epsilon)$ is unique up to a function of $\epsilon$, we can take a suitable function of $\epsilon$ such that
\begin{align}\label{qqIrrI}
	q(x;{\bf t};\epsilon)|_{x=0,t_{>0}=0}=q_{\rm s}(X;\epsilon),\quad r(x;{\bf t};\epsilon))|_{x=0,t_{>0}=0}=r_{\rm s}(X;\epsilon).
\end{align}
Then, by using Theorem \ref{part1},  $(q(0;{\bf t};\epsilon),r(0;{\bf t};\epsilon))$ is a solution to the NLS hierarchy with the initial value \eqref{qqIrrI}.
Recall that $(q({\bf t};\epsilon),r({\bf t};\epsilon))$ is the unique solution  to the NLS hierarchy with the initial value  \eqref{todaeq11}.
Then we have
\begin{align}\label{tilqq*}
       q(0;{\bf t};\epsilon)=q({\bf t};\epsilon),\quad r(0;{\bf t};\epsilon)=r({\bf t};\epsilon).
\end{align}
Then by using \eqref{tilqrvw}, \eqref{tilqq*} and \eqref{defqr*}, we have
\begin{align}\label{tilvw}
	\tilde{v}(x;{\bf t};\epsilon)|_{x=0}=V({\bf t};\epsilon),\quad
	\tilde{w}(x;{\bf t};\epsilon)|_{x=0}=W({\bf t};\epsilon).
\end{align}
Then $(\tilde{v}(x;{\bf t};\epsilon),\tilde{w}(x;{\bf t};\epsilon))$ is a solution to \eqref{todaeq1} 
with the initial value \eqref{tilvw}. By the uniqueness for the solution of equations~\eqref{todaeq1} with the initial value \eqref{tilvw},
	$(v(x;{\bf t};\epsilon),w(x;{\bf t};\epsilon))=(\tilde{v}(x;{\bf t};\epsilon),\tilde{w}(x;{\bf t};\epsilon))$ is a solution to the Toda lattice hierarchy.

We are now to prove the compatibility between \eqref{deftau1}, \eqref{deftau2} and \eqref{deftau3}.
By using~\eqref{matrixrelation} and \eqref{l32}, we have
\begin{align}
    	R^{\rm{NLS}}(x+\epsilon;{\bf t};\xi;\epsilon)\mathcal{U}^{\rm NLS}(x;{\bf t};\xi;\epsilon)-\mathcal{U}^{\rm NLS}(x;{\bf t};\xi;\epsilon)R^{\rm{NLS}}(x;{\bf t};\xi;\epsilon)=0,\label{rx+}
\end{align}
where $\mathcal{U}^{\rm NLS}(x;{\bf t};\xi;\epsilon):=\begin{pmatrix}
	\epsilon\partial_{X}\left(\log q(x;{\bf t};\epsilon)\right)-2\xi&-q(x;{\bf t};\epsilon)\\
	\frac{1}{q(x;{\bf t};\epsilon)}&0
\end{pmatrix}$.
Then, by using \eqref{rx+}, the proof of the compatibility between \eqref{deftau1}, \eqref{deftau2} and \eqref{deftau3} is similar to that in \cite{DY17}. We omit its details. 
Then by using \eqref{=1}, we have
\begin{align}
&\epsilon^2\frac{\partial^2 \log\tau^{\rm NLS}(x;{\bf t};\epsilon)}{\partial t_{i}\partial t_{j}}=\Omega_{i,j}(x;{\bf t};\epsilon),\quad i,j\ge  0.\label{identitytodatau1}
\end{align}
This together with \eqref{deftau2},~\eqref{deftau3} and~\eqref{tilqrvw} implies that
$\tau^{\rm NLS}(x;{\bf t};\epsilon)$ is the tau-function of $(v(x;{\bf t};\epsilon),w(x;{\bf t};\epsilon))$ to the Toda lattice hierarchy. The theorem is proved.
\end{proof}

In an upcoming publication, we generalize the Carlet--Dubrovin--Zhang theorem (Theorems \ref{part1}, \ref{part2}) 
by using the MR method
to the constraint KP hierarchy~\cite{C92, CZ94, LZZ} and prove the conjecture given in~\cite{LZZ}. 

\begin{remark}\label{powerseries}
The assumption that $\epsilon\log q(\bt;\e)$ and $q(\bt;\e) r(\bt;\e)$ are power series of~$\epsilon$ from Theorem~\ref{part2} 
corresponds to generic classes of solutions to the Toda lattice 
hierarchy. Indeed, let us construct 
the solutions to the Toda lattice hierarchy starting with certain initial data for the NLS hierarchy.
Let $q_{\rm s}(X;\epsilon),r_{{\rm s}}(X;\epsilon)$ be two given functions of $X,\epsilon$ such that 
both $\epsilon\log q_{\rm s}(X,\e)$ and the product $q_{\rm s}(X,\e) r_{\rm s}(X,\e)$ 
are power series of~$\epsilon$.
Let $(q({\bf t};\epsilon),r({\bf t};\epsilon))$ be the unique solution to the NLS hierarchy specified by 
the initial condition: 
\begin{align}
	q({\bf t};\epsilon)|_{t_{>0}=0}=q_{\rm s}(X;\epsilon), \quad r({\bf t};\epsilon)|_{t_{>0}=0}=r_{\rm s}(X;\epsilon),\label{todaeq11}
\end{align}
where we recall that $X=t_{0}$. It follows from certain simple degree arguments that 
$\epsilon\log q(\bt;\e)$ and $q(\bt;\e) r(\bt;\e)$ are power series of~$\epsilon$. 
Then the corresponding NLS tau-function $\tau^{\rm NLS}(\bt;\e)$ of $(q,r)$ has the property that 
$\e^2 \log \tau^{\rm NLS}(\bt;\e)$ is a power series of~$\e$.  
Define $V({\bf t};\epsilon),W({\bf t};\epsilon)$ by~\eqref{defqr*}, and we know that 
$V({\bf t}; \epsilon), W({\bf t};\epsilon)$ are power series of~$\epsilon$. 
Write 
\begin{align}
& v(x;{\bf t};\epsilon) = \sum_{k\geq0} v^{[k]}({x;\bf t})\e^k, \quad w(x;{\bf t}; \epsilon) = \sum_{k\geq0} w^{[k]}(x;{\bf t})\e^k.
\end{align}
Similarly,  write also $V_{\rm s}(X;\epsilon) = \sum_{k\geq0} V_{\rm s}^{[k]}(X)\e^k$, $W_{\rm s}(X; \epsilon) = \sum_{k\geq0} W_{\rm s}^{[k]}(X)\e^k$.
By Taylor expanding~\eqref{todaeq1} in~$\epsilon$, we find that~\eqref{todaeq1} is equivalent to 
a sequence of first order evolutionary PDEs for $v^{[k]}, w^{[k]}$, $k\geq0$, with $x$ being the time
variable: 
\begin{align}
&	\partial_{x}(w^{[0]})=\partial_{X}(v^{[0]}),\quad w^{[0]}\partial_{x}(v^{[0]})=\partial_{X}( w^{[0]}),\\
&	\partial_{x}(w^{[1]})=\partial_{X}(v^{[1]})-\frac{1}{2}\partial^{2}_{x}(w^{[0]}),\quad w^{[0]}\partial_{x}(v^{[1]})=\partial_{X}(w^{[1]})-w^{[1]}\partial_{x}(v^{[0]})+ \frac{1}{2}w^{[0]}\partial^{2}_{x}(v^{[0]}),\\
& \dots.\nn
\end{align}
If either $W^{[0]}_{\rm s}(0)\neq0$, or $W^{[0]}_{\rm s}(0)=0$ and $(V_{\rm s}^{[0]})'(0)\neq 0$, then 
by using the above first order evolutionary PDEs and 
by induction we know that equations~\eqref{todaeq1} 
with the initial data $V({\bf t}; \epsilon), W({\bf t};\epsilon)$
have a unique power-series-in-$(x,t_{>0})$ solution.  
This indicates that the solutions under consideration 
to the Toda lattice hierarchy are generic (cf.~also~\cite{D09,DZ,DZ04,Z02}) and could be applied in many occasions (cf.~\cite{A21,BR21,DY17,DYZ20,DZ04,Ma,Z02}).
The above argument also explains that the assumption for the solvability of \eqref{todaeq1}, \eqref{todaeqini2} 
in Theorem~\ref{part2}
is true for wide classes of solutions.
\end{remark}

\section{A simple algorithm of computing correlators in hermitian matrix models}\label{section4}

In this section, using Theorems \ref{part1}, \ref{part2} and the MR method, we give a simple algorithm of computing correlators in hermitian matrix models.
 
Let $n$ be a positive integer, and $\rho=\rho(x)$ a given function. Let  
$Z_{n}({\bf s};\epsilon)$ be the following partition function of hermitian matrix model of 
size $n$:
\begin{align}\label{partitionex2}
Z_{n}({\bf s};\epsilon)=\frac{(2\pi)^{-n}\epsilon^{-\frac{1}{12}}}{\mathrm{Vol}(n)}\int_{\mathcal{H}(n)} \rho(M) \, e^{-\frac{1}{\epsilon}\mathrm{tr}\,V(M)}dM.
\end{align}
Here, 
$\mathcal{H}(n)$ denotes the space of $n\times n$ hermitian matrices, $M=(M_{ij})_{n\times n}\in \mathcal{H}(n)$, ${\bf s}=(s_1,s_2,s_3,\cdots)$, 
$V(M)=\frac{1}{2}M^2-\sum_{j\ge 1} s_jM^{j}$, $dM=\prod_{i=1}^{n}dM_{ii}\prod_{i<j} d\mathrm{Re}(M_{ij})d\mathrm{Im}(M_{ij})$, and 
\begin{align}
\mathrm{Vol}(n):=\frac{\pi^{\frac{n(n-1)}{2}}}{G(n+1)}
\end{align}
with $G$ being the Barnes $G$-function defined by $G(n+1):=\prod_{k=1}^{n-1}k!$.
We also denote $Z_0({\bf s};\epsilon):=\e^{-1/12}$. 

Let $x=n\epsilon$ (the 't~Hooft coupling constant~\cite{H74-1,H74-2}), and define 
\begin{align}\label{defzvw}
	v(x;{\bf s};\epsilon):=\epsilon\frac{\partial}{\partial X}\log\frac{Z_{n+1} ({\bf s};\epsilon)}{Z_{n} ({\bf s};\epsilon)}\bigg|_{n=\frac{x}{\epsilon}},\quad w(x;{\bf s};\epsilon):=\frac{Z_{n+1} ({\bf s};\epsilon)Z_{n-1} ({\bf s};\epsilon)}{Z_{n} ({\bf s};\epsilon)^2}\bigg|_{n=\frac{x}{\epsilon}}.
\end{align}
Using the theory of orthogonal polynomials~\cite{D99} (cf.~also~\cite{DY17, M91}), we know that
$(v(x;{\bf s};\epsilon),w(x;{\bf s};\epsilon))$ is a particular solution to the Toda lattice hierarchy \eqref{Todahierarchy} 
and $Z_{x/\e} ({\bf s};\epsilon)$ gives the tau-function of this solution~\cite{DY17} with $t_{j}=s_{j+1},j=0,1,2,\dots$.
Define
	\begin{align}\label{qrZ}
		q(x;{\bf t};\epsilon)=\frac{Z_{n+1}({\bf s};\epsilon)}{Z_{n}({\bf s};\epsilon)}\bigg|_{n=\frac{x}{\epsilon},s_{j+1}=t_{j},j\ge 0},\quad r(x;{\bf t};\epsilon)=\frac{Z_{n-1}({\bf s};\epsilon)}{Z_{n}({\bf s};\epsilon)}\bigg|_{n=\frac{x}{\epsilon},s_{j+1}=t_{j},j\ge 0}.
	\end{align}
From Theorem \ref{part1}, we know that for any~$x$, the pair $(q(x;{\bf t};\epsilon),r(x;{\bf t};\epsilon))$ is a particular solution to the NLS hierarchy \eqref{Tnlshierarchy} 
and $Z_{x/\e} ({\bf s};\epsilon)$ gives the tau-function of this solution. 
An algorithm of computing the logarithmic derivatives 
of $\log Z_n({\bf s};\epsilon)$ based on~\eqref{kpointtoda} is given in~\cite{DY17}. Let us give 
a brief review. The first step of this algorithm is to compute the initial value of $(v,w)$. 
According to Proposition~A.2.3 of~\cite{DY17} (cf.~also~\cite{AvM95, UT}) we know that  
$Z_{n}=Z_{n} ({\bf s};\epsilon)$ satisfies the Toda equation:
\begin{align}
\left(\Lambda+\Lambda^{-1}-2\right) \left(\log Z_{n}\right)=\log\left( \e^2 \frac{\partial^2 \log Z_{n}}{\partial X^2}\right).\label{secondeq}
\end{align}
Suppose that one can compute the integral $Z_{1}(X,0,\dots;\epsilon)$.	
Then from $Z_{1}(X,0,\dots;\epsilon)$ and $Z_{0}(X,0,\dots;\epsilon)$, 
in principle one can compute $Z_{n}(X, 0, \dots;\epsilon)$ by using~\eqref{secondeq}. 
The initial value $(v(x,{\bf 0};\epsilon),w(x,{\bf 0};\epsilon))$ could then be obtained 
from~\eqref{defzvw}. The second step is to compute the initial basic matrix-resolvent~\eqref{todamr}--\eqref{l33}
for the Toda lattice hierarchy. The correlators can then be computed from~\eqref{kpointtoda}.

Let us now apply Theorems \ref{part1}, \ref{part2} to provide an improvement of the first step of 
the above algorithm. Note that for the hermitian matrix models, we have 
$q_{\rm s}(X;\epsilon)=\e^{1/12}Z_{1}(X,0,\dots;\epsilon)$, 
$r_{\rm s}(X;\epsilon)=0$, and 
\begin{align}
&w_{\rm s}(X;\epsilon)=q_{\rm s}(X;\epsilon) r_{\rm s}(X;\epsilon)\equiv0, \quad 
v_{\rm s}(X;\epsilon):=\epsilon\partial_{X}(\log q_{\rm s}(X;\epsilon)).\label{appeq4}	
\end{align}
One can then solve equations~\eqref{todaeq1} with the initial condition 
\begin{align}
w(x;X;\epsilon)|_{x=0}=w_{\rm s}(X;\epsilon),\quad v(x;X;\epsilon)|_{x=0}=v_{\rm s}(X;\epsilon), \label{inivalue}
\end{align}
and obtain $w(x;X;\e),v(x;X;\e)$ (cf.~Remark~\ref{powerseries}). 
Then the initial value for the solution under consideration to the Toda lattice hierarchy is obtained 
by $v(x;{\bf 0};\epsilon)=v(x;X;\epsilon)|_{X=0},w(x;{\bf 0};\epsilon)=w(x;X;\epsilon)|_{X=0}$. 
We note that the feature of the improvement for the algorithm is the following: one is to solve the system of first-order difference equations~\eqref{todaeq1} instead of 
to solve the second-order difference equation~\eqref{secondeq}, and the system~\eqref{todaeq1} 
can be recast into a sequence of first-order evolutionary PDEs, recursively, as explained in Remark~\ref{powerseries} and Theorem~\ref{part2}. The improved algorithm works for the GUE~\cite{DY17}, LUE~\cite{GGR20} and JUE~\cite{GGR21} 
solutions.  Let us explain in more details the simple algorithm by means of examples, 
which include GUE (see Example~1) below.

\paragraph{Example 1}
Let $Z_{n}({\bf s};\epsilon)$ denote the GUE partition function of size~$n$ (cf.~\cite{AvM95,BIZ80,DiFGZ,D99,DY17,FIK92,HZ86,KKN99,M91,Z18-1,Z18-2}), i.e.,
\begin{align}
	Z_{n}({\bf s};\epsilon)=\frac{(2\pi)^{-n}\epsilon^{-\frac{1}{12}}}{\mathrm{Vol}(n)}\int_{\mathcal{H}(n)} e^{-\frac{1}{\epsilon}\mathrm{tr}\,V(M)}dM.\label{partitionfunction}
\end{align}
The logarithmic derivatives of $Z_n$ at ${\bf s}={\bf 0}$, 
denoted by $\langle\mathrm{tr} M^{i_1}\cdots\mathrm{tr} M^{i_k}\rangle_c(n)$, 
are called {\it connected GUE correlators}. They have important relations with enumeration of ribbon graphs~\cite{BIZ80,DY17,HZ86,H74-1,H74-2}. 

We have in this case $Z_{1} (X,0,\dots;\epsilon)=(2\pi)^{-\frac{1}{2}}\epsilon^{\frac{5}{12}}e^{\frac{ X^2}{2\epsilon}}$. Then 
$q_{\rm s}(X;\epsilon)=(2\pi)^{-\frac{1}{2}}\epsilon^{\frac{1}{2}}e^{\frac{X^2}{2\epsilon}}$,
$r_{\rm s}(X;\epsilon)=0$,
$v_{\rm s}(X;\epsilon)=X$, $w_{\rm s}(X;\epsilon)=0$.
Solving~\eqref{todaeq1}, we obtain $v(x;X;\epsilon)\equiv X$ and $w(x;X;\epsilon) 
\equiv x$. 
So the initial value for the GUE solution to the Toda lattice hierarchy is given by 
 $v(x;{\bf 0};\epsilon)=0,w(x;{\bf 0};\epsilon)=x$, 
which agrees with the literature~\cite{D99, DY17, M91}. Also in this case $q(x;X;\epsilon)\equiv\epsilon^{\frac{1}{2}+n}\frac{n!}{\sqrt{2\pi}}e^{\frac{X^2}{2\epsilon}}$,  
 $r(x;X;\epsilon)\equiv\epsilon^{\frac{1}{2}-n}\frac{\sqrt{2\pi}}{(n-1)!}e^{-\frac{X^2}{2\epsilon}}$.

In~\cite{DY17}, B.~Dubrovin and the second author of the present paper computed the explicit expression for the 
corresponding initial matrix resolvent:
\begin{align}
R(x;\lambda;\epsilon):=\begin{pmatrix}
1&0\\
0&0
\end{pmatrix}+\epsilon n\sum_{k=0}^{\infty}\epsilon^k\frac{(2k-1)!!}{\lambda^{2k+2}}\begin{pmatrix}
		(2k+1)e_{n,2k+1}&-\lambda g_{n+1,2k}\\
		\frac{\lambda}{\epsilon n}g_{n,2k}&-(2k+1)e_{n,2k+1}
\end{pmatrix}\label{Rseries},
\end{align}
where
\begin{align}
	e_{n,2k+1}&=\frac{1}{n}\sum_{i=0}^{k}2^i\binom{k}{i}\binom{n}{i+1}={}_2F_1(-k,1-n; 2; 2), \label{hypgeo1}\\
	g_{n,2k}&=\sum_{i=0}^{k}2^i\binom{k}{i}\binom{n-1}{i}={}_2F_1(-k,1-n; 1; 2).\label{hypgeo3}
\end{align}
Here, 
\begin{align}
	{}_2F_1(a,b;c;z)=\sum_{j=0}^{\infty}\frac{(a)_j(b)_j}{(c)_j}\frac{z^j}{j!}=1+\frac{ab}{c}\frac{z}{1!}+\frac{a(a+1)b(b+1)}{c(c+1)}\frac{z^2}{2!}+\cdots 
\end{align}
denotes the Gauss hypergeometric function with $(a)_i$ being the increasing Pochhammer symbol, i.e., $(a)_i=a(a+1)\cdots(a+i-1)$. 
The following theorem can then be obtained as a result of this computation and formula~\eqref{kpointtoda}.
\begin{theorem}(\cite{DY17})\label{appkeythm}
For any $k\ge 2$, the generating series for connected $k$-point GUE correlators has the expression
	\begin{align}
		C_k(n;\lambda_1,...,\lambda_k,\epsilon)=-\sum_{\sigma\in S_k/C_k} \frac{\mathrm{tr}\,\left[ R(x;\lambda_{\sigma(1)},\epsilon)\cdots R(x;\lambda_{\sigma(k)},\epsilon)\right]}{(\lambda_{\sigma(1)}-\lambda_{\sigma(2)})\cdots(\lambda_{\sigma(k-1)}-\lambda_{\sigma(k)})(\lambda_{\sigma(k)}-\lambda_{\sigma(1)})}-\frac{\delta_{k,2}}{(\lambda_1-\lambda_2)^2}.\label{appthm2}
	\end{align}
Here, $C_k(n;\lambda_1,...,\lambda_k,\epsilon)$ denotes the following generating series
\begin{align}
C_k(n;\lambda_1,...,\lambda_k,\epsilon):=\epsilon^{k}\sum_{i_1,...,i_k=1}^{\infty}\frac{\langle\mathrm{tr} M^{i_1}\cdots\mathrm{tr} M^{i_k}\rangle_c(n)}{\lambda_{1}^{i_1+1}\cdots\lambda_{k}^{i_k+1}} \quad (k\ge 1).
\end{align}
\end{theorem}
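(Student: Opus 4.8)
The plan is to derive Theorem~\ref{appkeythm} as a specialization of the general matrix-resolvent $k$-point formula for the Toda lattice hierarchy, \eqref{kpointtoda}, applied to the GUE solution $(v(x;{\bf 0};\epsilon),w(x;{\bf 0};\epsilon))=(0,x)$ whose tau-function is precisely $Z_{x/\epsilon}({\bf s};\epsilon)$. First I would recall from \cite{DY17} (and Appendix~\ref{appa}) that for an arbitrary solution of the Toda lattice hierarchy with tau-function $\tau(x;{\bf t};\epsilon)$, the generating series of logarithmic $t$-derivatives of $\log\tau$ is given by the cyclic sum of traces of products of the basic matrix resolvent $R(x;{\bf t};\lambda;\epsilon)$ over the positions $\sigma\in S_k/C_k$, with the extra $\delta_{k,2}$-correction coming from the divisibility normalization (exactly the structure appearing in \eqref{kpointtoda} and mirrored in Corollary~\ref{cor2.3}). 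Setting $t_j=s_{j+1}$ and restricting to ${\bf s}={\bf 0}$, the left-hand side becomes $C_k(n;\lambda_1,\dots,\lambda_k,\epsilon)$ by the definition of the connected GUE correlators $\langle\operatorname{tr}M^{i_1}\cdots\operatorname{tr}M^{i_k}\rangle_c(n)$ as logarithmic derivatives of $Z_n$, up to the bookkeeping that $\langle\operatorname{tr}M^i\rangle$ pairs with $\lambda^{-i-1}$ while the $\Omega$-series uses $\lambda^{-i-2}$; this index shift is absorbed by the $\epsilon^k$ prefactor and the convention $t_j\leftrightarrow s_{j+1}$.

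Next I would invoke the explicit formula \eqref{Rseries} for the initial matrix resolvent $R(x;\lambda;\epsilon)$ of the GUE solution, together with the hypergeometric expressions \eqref{hypgeo1}--\eqref{hypgeo3} for $e_{n,2k+1}$ and $g_{n,2k}$. The point is that at ${\bf s}={\bf 0}$ the basic MR $R(x;{\bf t};\lambda;\epsilon)|_{{\bf t}=0}$ coincides with $R(x;\lambda;\epsilon)$ from \eqref{Rseries}; this identification is what was established in \cite{DY17} by solving the recursion \eqref{abc1}--\eqref{23} (the Toda analogue) with the GUE initial data $(v,w)=(0,x)$. Substituting this $R(x;\lambda;\epsilon)$ into the right-hand side of \eqref{kpointtoda} gives exactly \eqref{appthm2}. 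The $-\delta_{k,2}/(\lambda_1-\lambda_2)^2$ term is the image of the $-4\delta_{k,2}/(\xi_1-\xi_2)^2$ normalization term (rescaled by $\lambda=2\xi$ exactly as in the passage \eqref{=1}) and reflects that the bare two-point trace $\operatorname{tr}R(\lambda_1)R(\lambda_2)$ must be corrected to the divisible quantity $\operatorname{tr}R(\lambda_1)R(\lambda_2)-1$.

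In more detail, the steps in order are: (i) state the general Toda $k$-point theorem \eqref{kpointtoda} for the tau-function $\tau=Z_{x/\epsilon}$; (ii) verify that the GUE solution has initial data $v(x;{\bf 0};\epsilon)=0$, $w(x;{\bf 0};\epsilon)=x$ — this follows from Example~1 above via $Z_1(X,0,\dots;\epsilon)$ and the improved algorithm, or classically from orthogonal polynomials; (iii) quote or re-derive \eqref{Rseries} as the unique solution of the Toda MR recursion with this initial data, checking the normalization ${\rm tr}\,R=1$, ${\rm det}\,R=0$ and the leading term $\mathrm{diag}(1,0)$; (iv) translate the left-hand side of \eqref{kpointtoda} into $C_k(n;\lambda_1,\dots,\lambda_k,\epsilon)$ using the correlator/tau-function dictionary and the $s_{j+1}=t_j$ substitution; (v) read off \eqref{appthm2}.

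The main obstacle is step (iii)–(iv): one must be careful with two independent normalizations — the overall scalar normalization of the Toda basic MR (here ${\rm tr}\,R=1$ rather than $2$, and $\mathrm{diag}(1,0)$ rather than $\mathrm{diag}(2,0)$, hence the factor-of-$2$ discrepancies and the $\delta_{k,2}$-term carrying coefficient $1$ instead of $4$) and the spectral-parameter shift between $\lambda^{-i-1}$ in the correlator generating series and $\lambda^{-i-2}$ in the $\Omega$-generating series. Getting these bookkeeping conventions consistent — so that the cyclic trace sum with the resolvent \eqref{Rseries} reproduces exactly $C_k$ with the stated correction term — is the only real subtlety; everything else is a direct quotation of \eqref{kpointtoda}, the GUE initial data, and the closed form \eqref{Rseries} from \cite{DY17}.
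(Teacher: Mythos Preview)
Your approach is correct and coincides with the paper's: the paper does not give an independent proof of Theorem~\ref{appkeythm} but simply states that it ``can then be obtained as a result of this computation and formula~\eqref{kpointtoda}'', i.e.\ by specializing the general Toda $k$-point identity to the GUE solution and plugging in the explicit initial resolvent~\eqref{Rseries}. Your steps (i)--(v) are exactly this.

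Two small corrections to your write-up. First, the formula~\eqref{kpointtoda} as recorded in Appendix~\ref{appa} does \emph{not} carry a $\delta_{k,2}$-term; that term must be extracted separately from~\eqref{todatauf1} (the ``$-1$'' in $\mathrm{tr}\,R(\lambda)R(\mu)-1$), exactly as you later explain. So do not say the correction ``appears in~\eqref{kpointtoda}''; rather, for $k=2$ use~\eqref{todatauf1} directly, and for $k\ge 3$ use~\eqref{kpointtoda}. Second, in step~(iii) the recursion you want is the Toda one, \eqref{todarec1}--\eqref{intitoda}, not the NLS recursion \eqref{abc1}--\eqref{23}. Also, the $\epsilon^k$ prefactor is present on both sides and absorbs nothing: the bookkeeping $\lambda^{-i-2}\leftrightarrow\lambda^{-(i+1)-1}$ under $t_j=s_{j+1}$ is purely an index relabeling.
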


We note that for $k=1$, one can further apply the string equation~\cite{DY17} and obtain 
\begin{align}
		C_1(n;\lambda;\epsilon)=n\sum_{k=1}^{\infty}\epsilon^k\frac{(2k-1)!!}{\lambda^{2k+1}}e_{n,2k+1}.\label{onepoint}
\end{align} 
This formula is equivalent to the Harer--Zagier formula given in the earlier work~\cite{HZ86}. For $k=2$, 
formula~\eqref{appthm2} is equivalent to the formula derived by Morozov and Shakirov~\cite{MS09}.

Following~\cite{DYZ20}, let us now consider the analytic (as opposed to formal) expression for the matrix resolvent $R(x;\lambda;\epsilon)$.
Denote by $\He_{n}(z)$ the Hermite function \cite{WG10} (cf.~also~\cite{WW63}), defined as the contour integral
\begin{align}
	\He_{n}(z):=-\frac{\Gamma(n+1)}{2\pi {\rm i}}\int_{C}e^{-zt-\frac{t^2}{2}}(-t)^{-n-1} dt,\quad |\arg (-t)|< \pi,
\end{align}
where $\Gamma$ denotes the Gamma function, and 
$C$ is a contour on the complex $t$-plane that starts at `infinity' on the real axis, encircles the origin in the positive direction and returns to the starting point (see Figure~\ref{figure}).
\begin{figure}
	\centering
	\begin{tikzpicture}[thick]
		\draw[-Straight Barb] (-1.5,0)--(3.2,0);
		\draw[-Straight Barb] (0,-1.2)--(0,1.2);
		\draw[-Straight Barb] [domain=15:165] plot ({0.6*cos(\x)}, {0.6*sin(\x)});
		\draw[-] [domain=165:345] plot ({0.6*cos(\x)}, {0.6*sin(\x)});
		\draw[-Straight Barb]  (3,0.15529)--(2,0.15529);
		\draw[-]  (2,0.15529)--(0.57955,0.15529);
		\draw[-Straight Barb]  (0.57955,-0.15529)--(1.5,-0.15529);
		\draw[-]  (1.5,-0.15529)--(3,-0.15529);
		\coordinate (0) at (0,0);  
		\node[below left] at (0){{\tiny 0}}; 
	\end{tikzpicture}
\caption{The integration contour $C$ on the complex $t$-plane} \label{figure}
\end{figure}
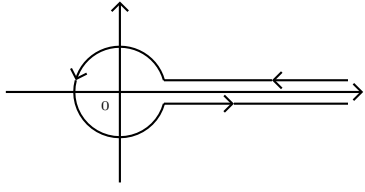
For $n\in\CC$, the Hermite function $\He_{n}(z)$ is an analytic function on the complex $z$-plane, which has the following asymptotic expansion within the sector $|{\rm arg}\,z|<\frac{3\pi}{4}$ as $|z|\to \infty$:
\begin{align}
		\He_n(z)\sim z^n\sum_{k=0}^{\infty}\frac{(-1)^k \, (n-2k+1)_{2k}}{2^k \, k! \, z^{2k}}.\label{paraexpand}
\end{align}
It follows that for fixed $n,\epsilon\in\mathbb{C}$, and for  
$-\frac{3\pi}{4}<{\rm arg}\Big(\frac{\lambda}{\sqrt{\epsilon}}\Big)<\frac{\pi}{4}$, we have, as $\lambda\to \infty$,
\begin{align}
		&\epsilon^{\frac{n}{2}}\He_{n}\big(\frac{\lambda}{\sqrt{\epsilon}}\big)\sim\sum_{j\ge 0} (-1)^{j}\frac{(n-2j+1)_{2j}}{2^jj!\lambda^{2j}}\epsilon^{j}\lambda^{n}=:\psi_{A}(n,\lambda),\\
		&\Gamma(n+1)\epsilon^{\frac{n+1}{2}}{\rm i}^{n+1}\lambda\He_{-n-1}\big({\rm i}\frac{\lambda}{\sqrt{\epsilon}}\big)\sim \Gamma(n+1)\sum_{j\ge 0} \frac{(n+1)_{2j}}{2^jj!\lambda^{2j}}\epsilon^{n+j+1}\lambda^{-n}=:\psi_{B}(n,\lambda).
\end{align}
Observe that the above defined $\psi_{A}(n,\lambda),\psi_{B}(n,\lambda)$ coincides with the particular pair of formal 
wave functions associated to the initial data $(0,\epsilon n )$ for the GUE solution to the Toda lattice hierarchy constructed in~\cite{Y20}. For any fixed $n$, define 
\begin{align}
&M(n;\lambda;\epsilon)\nonumber \\
:=&\frac{1}{\epsilon^{n}\lambda}\begin{pmatrix}
\epsilon^{\frac{n}{2}}\He_{n}\big(\frac{\lambda}{\sqrt{\epsilon}}\big)\Lambda^{-1}\left(\epsilon^{\frac{n+1}{2}}{\rm i}^{n+1}\lambda\He_{-n-1}\big({\rm i}\frac{\lambda}{\sqrt{\epsilon}}\big)\right)&-\epsilon^{\frac{n}{2}}\He_{n}\big(\frac{\lambda}{\sqrt{\epsilon}}\big) n\epsilon^{\frac{n+1}{2}}{\rm i}^{n+1}\lambda\He_{-n-1}\big({\rm i}\frac{\lambda}{\sqrt{\epsilon}}\big)\\
\Lambda^{-1}\left(\epsilon^{\frac{n}{2}}\He_{n}\big(\frac{\lambda}{\sqrt{\epsilon}}\big)\right)\Lambda^{-1}\left(\epsilon^{\frac{n+1}{2}}{\rm i}^{n+1}\lambda\He_{-n-1}\big({\rm i}\frac{\lambda}{\sqrt{\epsilon}}\big)\right)&-\Lambda^{-1}\left(\epsilon^{\frac{n}{2}}\He_{n}\big(\frac{\lambda}{\sqrt{\epsilon}}\big)\right) n\epsilon^{\frac{n+1}{2}}{\rm i}^{n+1}\lambda\He_{-n-1}\big({\rm i}\frac{\lambda}{\sqrt{\epsilon}}\big)
\end{pmatrix} \nonumber	\\
=&{\rm i}^n\begin{pmatrix}
\He_{n}\big(\frac{\lambda}{\sqrt{\epsilon}}\big)\He_{-n}\big({\rm i}\frac{\lambda}{\sqrt{\epsilon}}\big)&-{\rm i}\epsilon^{\frac{1}{2}}n\He_{n}\big(\frac{\lambda}{\sqrt{\epsilon}}\big)\He_{-n-1}\big({\rm i}\frac{\lambda}{\sqrt{\epsilon}}\big)\\
\epsilon^{-\frac{1}{2}}\He_{n-1}\big(\frac{\lambda}{\sqrt{\epsilon}}\big)\He_{-n}\big({\rm i}\frac{\lambda}{\sqrt{\epsilon}}\big)&-{\rm i}n \He_{n-1}\big(\frac{\lambda}{\sqrt{\epsilon}}\big)\He_{-n-1}\big({\rm i}\frac{\lambda}{\sqrt{\epsilon}}\big)
\end{pmatrix}.\label{defineM}
\end{align}
By using the Proposition 3 of~\cite{Y20}, we then arrive at the following proposition.

\begin{pro}\label{ayexp}
For fixed $n,\epsilon\in\mathbb{C}$, the asymptotic expansion of $M(n;\lambda;\epsilon)$ in all orders as $\lambda\to \infty$ within the sector $-\frac{3\pi}{4}<{\rm arg}\Big(\frac{\lambda}{\sqrt{\epsilon}}\Big)<\frac{\pi}{4}$ coincides with the formal power series $R(x;\lambda;\epsilon)$.
\end{pro}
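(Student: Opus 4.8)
The plan is to reduce the statement to the wave-function representation of the basic matrix resolvent of the GUE solution of the Toda lattice hierarchy established in Proposition~3 of~\cite{Y20}. That result expresses the formal power series $R(x;\lambda;\epsilon)$ attached to the initial data $(0,\epsilon n)$ as one fixed algebraic expression built out of the pair of formal wave functions $\psi_{A}(n,\lambda),\psi_{B}(n,\lambda)$ and their $\Lambda^{-1}$-shifts, with $\Lambda^{\pm1}=e^{\pm\epsilon\partial_X}$ acting, under $x=n\epsilon$, as the index shift $n\mapsto n\pm1$. Comparing this with~\eqref{defineM}, one sees that $M(n;\lambda;\epsilon)$ is obtained from exactly that expression by replacing $\psi_{A}(n,\lambda)$ and $\psi_{B}(n,\lambda)$ by the genuine Hermite functions $\epsilon^{n/2}\He_{n}(\lambda/\sqrt{\epsilon})$ and $\Gamma(n+1)\epsilon^{(n+1)/2}{\rm i}^{n+1}\lambda\He_{-n-1}({\rm i}\lambda/\sqrt{\epsilon})$. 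Hence it suffices to show that passing to the asymptotic expansion as $\lambda\to\infty$ in the sector $-\tfrac{3\pi}{4}<\arg(\lambda/\sqrt{\epsilon})<\tfrac{\pi}{4}$ carries these two Hermite functions to $\psi_{A}$ and $\psi_{B}$ respectively and commutes with the algebraic operations occurring in~\eqref{defineM}.

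The first point is already recorded just above the proposition: by~\eqref{paraexpand} (applied also with $n$ replaced by $-n-1$), the asymptotic series of $\epsilon^{n/2}\He_{n}(\lambda/\sqrt{\epsilon})$ is $\psi_{A}(n,\lambda)$ and that of $\Gamma(n+1)\epsilon^{(n+1)/2}{\rm i}^{n+1}\lambda\He_{-n-1}({\rm i}\lambda/\sqrt{\epsilon})$ is $\psi_{B}(n,\lambda)$, valid within the stated sector for fixed $n,\epsilon$; applying the shift $n\mapsto n-1$ to these expansions gives the asymptotic series of the $\Lambda^{-1}$-shifted building blocks. The second point is that every entry of $M$ in the first displayed form of~\eqref{defineM} is a polynomial in these four building blocks divided by the explicit monomial $\epsilon^{n}\lambda$, so that no inversion of an asymptotic series occurs; since asymptotic power series are closed under termwise addition and multiplication, the asymptotic expansion of $M(n;\lambda;\epsilon)$ is obtained simply by substituting the expansions of the building blocks. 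The resulting formal matrix is, by Proposition~3 of~\cite{Y20}, exactly $R(x;\lambda;\epsilon)$, which is the assertion. As a built-in consistency check one verifies from~\eqref{paraexpand} that the substituted matrix has leading term $\begin{pmatrix}1&0\\0&0\end{pmatrix}$, vanishing determinant and trace~$1$; the trace condition is a discrete Wronskian identity for $\psi_{A},\psi_{B}$, which is precisely what gets absorbed into the prefactor $1/(\epsilon^{n}\lambda)$ in~\eqref{defineM} and which is equivalent to the three-term recurrences satisfied by the Hermite functions.

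The step needing most care---though it is bookkeeping rather than anything conceptual---is tracking constants and index shifts: one must check that the second, simplified form of $M(n;\lambda;\epsilon)$ in~\eqref{defineM} equals the first, and that the outcome of the substitution above matches~\eqref{Rseries}--\eqref{hypgeo3} term by term. Concretely this uses the contiguity relation identifying $\Lambda^{-1}\bigl(\epsilon^{(n+1)/2}{\rm i}^{n+1}\lambda\He_{-n-1}({\rm i}\lambda/\sqrt{\epsilon})\bigr)$ with a constant multiple of $\epsilon^{n/2}{\rm i}^{n}\He_{-n}({\rm i}\lambda/\sqrt{\epsilon})$, the cancellation of the $\Gamma(n+1)$ and $n$ factors against the prefactor $1/(\epsilon^{n}\lambda)$, and the identification of the coefficients of $\lambda^{-2k-2}$ with the hypergeometric values $e_{n,2k+1}$ and $g_{n,2k}$ of~\eqref{hypgeo1} and~\eqref{hypgeo3}. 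Once the Hermite recurrences and these normalizations are in place, the identification follows at once.
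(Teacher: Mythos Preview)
Your argument is correct and is essentially the same as the paper's: both reduce the claim to Proposition~3 of~\cite{Y20} by observing that $M(n;\lambda;\epsilon)$ is the analytic avatar of the wave-function formula for $R(x;\lambda;\epsilon)$, with $\psi_A,\psi_B$ replaced by the genuine Hermite functions whose asymptotic expansions they are. The paper states this in one line, while you spell out the closure of asymptotic series under the polynomial operations in~\eqref{defineM} and the index-shift bookkeeping; these details are accurate and add nothing new conceptually.
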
 

We now define analytic $k$-point functions $I_{k}(n;\lambda_{1},\dots,\lambda_{k};\epsilon)$ ($k\ge 2$; the case $k=1$ will be treated later) by
\begin{align}
	&I_{k}(n;\lambda_{1},\dots,\lambda_{k};\epsilon):=-\sum_{\sigma\in S_k/ C_k} \frac{\mathrm{tr}\,[M(n;\lambda_{\sigma(1)},\epsilon)\cdots M(n;\lambda_{\sigma(k)},\epsilon)]}{(\lambda_{\sigma(1)}-\lambda_{\sigma(2)})\cdots(\lambda_{\sigma(k-1)}-\lambda_{\sigma(k)})(\lambda_{\sigma(k)}-\lambda_{\sigma(1)})}-\frac{\delta_{k,2}}{(\lambda_{1}-\lambda_{2})^{2}}\label{jiexintaustr}.
\end{align} 
Using a similar argument as in the proof of the Proposition 2 of~\cite{DYZ20} that $I_{k}(n;\lambda_{1},\dots,\lambda_{k};\epsilon)$, $k\ge 2$, are analytic along the diagonals $\lambda_i=\lambda_{j}, i\neq j$. 
Formula~\eqref{appthm2} and Proposition~\ref{ayexp} then imply that 
 for any fixed $\epsilon\in\mathbb{C}$ 
the asymptotic expansion of $I_{k}(n;\lambda_{1},\dots,\lambda_{k};\epsilon)$ coincides with $C_k(n;\lambda_1,\dots,\lambda_{k};\epsilon)$ as $\lambda_i\to \infty$ within $-\frac{3\pi}{4}<{\rm arg}\Big(\frac{\lambda_{i}}{\sqrt{\epsilon}}\Big)<\frac{\pi}{4}$, $i=1,\dots,k$.
Define a meromorphic function $D(n;\lambda,\mu;\epsilon)$ by
\begin{align}
		&D(n;\lambda,\mu;\epsilon) \nonumber\\
		=&\frac{\epsilon^{\frac{n}{2}}\He_{n}\big(\frac{\lambda}{\sqrt{\epsilon}}\big)\Lambda^{-1}\left(\Gamma(n+1)\epsilon^{\frac{n+1}{2}}{\rm i}^{n+1}\mu\He_{-n-1}\big({\rm i}\frac{\mu}{\sqrt{\epsilon}}\big)\right)-\Lambda^{-1}\left(\epsilon^{\frac{n}{2}}\He_{n}\big(\frac{\lambda}{\sqrt{\epsilon}}\big)\right)\Gamma(n+1)\epsilon^{\frac{n+1}{2}}{\rm i}^{n+1}\mu\He_{-n-1}\big({\rm i}\frac{\mu}{\sqrt{\epsilon}}\big)}{\lambda-\mu}\label{defD}\\
		=&-\frac{{\rm i}^{n+1}\epsilon^{n}\Gamma(n)\mu}{\lambda-\mu}\Big(n\He_{n-1}\big(\frac{\mu}{\sqrt{\epsilon}}\big)\He_{-n-1}\big({\rm i}\frac{\lambda}{\sqrt{\epsilon}}\big)+{\rm i}\He_{n}\big(\frac{\mu}{\sqrt{\epsilon}}\big)\He_{-n}\big({\rm i}\frac{\lambda}{\sqrt{\epsilon}}\big)\Big).
\end{align}

Using \eqref{defineM}, \eqref{jiexintaustr} and \eqref{defD},  we arrive at the following theorem. 
\begin{theorem}
The analytic functions $I_k,k\ge 2$ have the expressions
    \begin{align}
			I_{k}(n;\lambda_{1},\dots,\lambda_{k};\epsilon)
			=\frac{(-1)^{k-1}}{\epsilon^{kn}(\Gamma(n))^{k}\prod_{j=1}^{k}\lambda_{j}}\sum_{\sigma\in S_k/ C_k} \prod_{i=1}^{k} D(n;\lambda_{\sigma(i)},\lambda_{\sigma(i+1)};\epsilon)-\frac{\delta_{k,2}}{(\lambda_1-\lambda_2)^2}.
	\end{align}
\end{theorem}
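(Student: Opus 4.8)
The plan is to exploit the fact, manifest from the last two expressions for $M(n;\lambda;\epsilon)$ in~\eqref{defineM}, that $M(n;\lambda;\epsilon)$ is a rank-one matrix. Reading off its common column and row, I would write
\[
M(n;\lambda;\epsilon)=\vec u(n;\lambda;\epsilon)\,\vec v(n;\lambda;\epsilon)^{T},
\]
with $\vec u(n;\lambda;\epsilon)=\bigl(\He_{n}(\tfrac{\lambda}{\sqrt{\epsilon}}),\,\epsilon^{-1/2}\He_{n-1}(\tfrac{\lambda}{\sqrt{\epsilon}})\bigr)^{T}$ and $\vec v(n;\lambda;\epsilon)=\mathrm{i}^{n}\bigl(\He_{-n}(\mathrm{i}\tfrac{\lambda}{\sqrt{\epsilon}}),\,-\mathrm{i}\,\epsilon^{1/2}n\,\He_{-n-1}(\mathrm{i}\tfrac{\lambda}{\sqrt{\epsilon}})\bigr)^{T}$ (an overall scalar may be moved between $\vec u$ and $\vec v$).

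The trace of a cyclic product of rank-one matrices telescopes, so
\[
\mathrm{tr}\bigl[M(n;\lambda_{\sigma(1)};\epsilon)\cdots M(n;\lambda_{\sigma(k)};\epsilon)\bigr]=\prod_{i=1}^{k}\vec v(n;\lambda_{\sigma(i)};\epsilon)^{T}\,\vec u(n;\lambda_{\sigma(i+1)};\epsilon),\qquad\sigma(k+1)=\sigma(1),
\]
and each summand of~\eqref{jiexintaustr} becomes the product over $i$ of $\vec v(n;\lambda_{\sigma(i)};\epsilon)^{T}\vec u(n;\lambda_{\sigma(i+1)};\epsilon)$ divided by $\lambda_{\sigma(i)}-\lambda_{\sigma(i+1)}$. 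The key computation I would carry out is that, after performing the index shifts $\Lambda^{-1}\He_{n}(\tfrac{\lambda}{\sqrt{\epsilon}})=\He_{n-1}(\tfrac{\lambda}{\sqrt{\epsilon}})$ and the analogue for the $\He_{-n-1}$ factor (which converts the $\Gamma(n+1)$ in~\eqref{defD} into $\Gamma(n)$), the first expression for $D$ in~\eqref{defD} is precisely
\[
D(n;\lambda,\mu;\epsilon)=\frac{\epsilon^{n}\,\Gamma(n)\,\mu}{\lambda-\mu}\,\vec v(n;\mu;\epsilon)^{T}\vec u(n;\lambda;\epsilon);
\]
in other words $D$ is, up to the explicit prefactor, one of the scalar factors above together with its own $\lambda-\mu$ denominator.

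Substituting this identity, each summand of~\eqref{jiexintaustr} becomes $\prod_{i}D(n;\lambda_{\sigma(i+1)},\lambda_{\sigma(i)};\epsilon)$ times $\prod_{i}\bigl(\epsilon^{n}\Gamma(n)\lambda_{\sigma(i)}\bigr)^{-1}$ times a sign; since $\prod_{i}\lambda_{\sigma(i)}=\prod_{j}\lambda_{j}$ the scalar prefactor is $\bigl(\epsilon^{kn}(\Gamma(n))^{k}\prod_{j}\lambda_{j}\bigr)^{-1}$, and a routine reindexing over $S_{k}/C_{k}$ (cyclic-order reversal is a bijection of that set) replaces $\sum_{\sigma}\prod_{i}D(n;\lambda_{\sigma(i+1)},\lambda_{\sigma(i)};\epsilon)$ by $\sum_{\sigma}\prod_{i}D(n;\lambda_{\sigma(i)},\lambda_{\sigma(i+1)};\epsilon)$; keeping the term $-\delta_{k,2}/(\lambda_{1}-\lambda_{2})^{2}$ of~\eqref{jiexintaustr} unchanged then gives the stated formula. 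I expect the only real difficulty to be the bookkeeping of the scalar prefactors and of the overall sign: the powers of $\mathrm{i}$ and $\epsilon^{1/2}$ absorbed into $\vec u,\vec v$, the $\Gamma(n+1)\to n\,\Gamma(n)$ produced by the $\Lambda^{-1}$-shift in $D$, and the fact that the difference $\lambda-\mu$ built into $D$ and the difference in~\eqref{jiexintaustr} occur with opposite orientation, which --- together with the cyclic-order reversal --- is what turns the overall $(-1)$ of~\eqref{jiexintaustr} into the $(-1)^{k-1}$ of the statement. No separate analyticity input is needed: the identity is an equality of meromorphic functions in $\lambda_{1},\dots,\lambda_{k}$, and regularity along the diagonals $\lambda_{i}=\lambda_{j}$ (established separately for $I_{k}$) is then merely a consequence.
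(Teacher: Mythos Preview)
Your proposal is correct and is exactly the computation the paper has in mind: the authors write only ``Using~\eqref{defineM}, \eqref{jiexintaustr} and~\eqref{defD}, we arrive at the following theorem'', and the rank-one factorization of $M(n;\lambda;\epsilon)$ visible in~\eqref{defineM}, together with the identification of $D(n;\lambda,\mu;\epsilon)$ as the resulting scalar factor, is precisely the content of that one-line reference. Your bookkeeping of the prefactors, the sign $(-1)^{k-1}$ coming from the orientation mismatch of the differences, and the cyclic-reversal bijection on $S_k/C_k$ are all correct.
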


Let us also define
\begin{align}
    I_1(n;\lambda;\epsilon):=\frac{{\rm i}^{n+1}n}{\epsilon}\int_{\lambda}^{+\infty} \He_{n-1}\big(\frac{\lambda}{\sqrt{\epsilon}}\big)\He_{-n-1}\big({\rm i}\frac{\lambda}{\sqrt{\epsilon}}\big)d\lambda.
\end{align}
Using~\eqref{paraexpand} and Proposition~\ref{ayexp}, we arrive at the following proposition.
\begin{pro}
	The function $I_{1}(n;\lambda;\epsilon)$ is asymptotic to  $C_{1}(n;\lambda;\epsilon)$ as $\lambda\to \infty$ in the sector $-\frac{3\pi}{4}<{\rm arg}\Big(\frac{\lambda}{\sqrt{\epsilon}}\Big)<\frac{\pi}{4}$ for any fixed $\epsilon\in\mathbb{C}$.
\end{pro}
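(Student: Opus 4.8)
The plan is to identify the integral defining $I_1(n;\lambda;\epsilon)$ with the $(1,1)$-entry of the matrix $M(n;\lambda;\epsilon)$, up to the correct scalar, and then invoke the asymptotic expansion established in Proposition~\ref{ayexp} together with formula~\eqref{onepoint}. First I would compute $\partial_\lambda I_1(n;\lambda;\epsilon)$ directly from the definition: by the fundamental theorem of calculus,
\begin{align}
\partial_\lambda I_1(n;\lambda;\epsilon)=-\frac{{\rm i}^{n+1}n}{\epsilon}\He_{n-1}\big(\tfrac{\lambda}{\sqrt{\epsilon}}\big)\He_{-n-1}\big({\rm i}\tfrac{\lambda}{\sqrt{\epsilon}}\big),
\end{align}
which is (up to an explicit factor) exactly the off-diagonal entry appearing in the bottom-left of $M(n;\lambda;\epsilon)$ in \eqref{defineM}. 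So the role of the integral is to produce an antiderivative whose asymptotic expansion, obtained by integrating \eqref{paraexpand} term by term, matches $C_1(n;\lambda;\epsilon)$.

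The key steps, in order, are: (i) substitute the asymptotic expansion \eqref{paraexpand} for each of $\He_{n-1}$ and $\He_{-n-1}$, multiply the two series, and check that the resulting formal series in $\lambda^{-1}$ (times the prefactor $\tfrac{{\rm i}^{n+1}n}{\epsilon}$) has only negative even powers of $\lambda$ starting at $\lambda^{-2}$, so that term-by-term integration from $\lambda$ to $+\infty$ is legitimate and yields a series in odd negative powers of $\lambda$ with no constant term; (ii) compare this integrated series with $C_1(n;\lambda;\epsilon)=\epsilon \sum_{i\ge1}\langle\tr M^i\rangle_c(n)\lambda^{-i-1}$ via \eqref{onepoint}, i.e.\ recognize the coefficients as $n\,\epsilon^k(2k-1)!!\,e_{n,2k+1}$; (iii) justify that the asymptotic expansion of the integral equals the integral of the asymptotic expansion inside the stated sector $-\tfrac{3\pi}{4}<\arg(\lambda/\sqrt{\epsilon})<\tfrac{\pi}{4}$, using that \eqref{paraexpand} holds uniformly on closed subsectors and that the tail is integrable. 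Alternatively, and more cleanly, one can bypass the explicit coefficient bookkeeping: by Proposition~\ref{ayexp}, the bottom-left entry of $M$ is asymptotic to the $(2,1)$-entry of $R(x;\lambda;\epsilon)$ in \eqref{Rseries}, which by the one-point specialization of \eqref{appthm2} (the string equation, see the discussion around \eqref{onepoint}) is $-\partial_\lambda C_1(n;\lambda;\epsilon)$ up to the known scalar; integrating this identity of asymptotic series and matching the (vanishing) constant of integration at $\lambda=+\infty$ gives the claim.

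I expect the main obstacle to be step (iii): asymptotic expansions are not in general preserved under integration over an infinite range unless one controls the error term uniformly, so I would need to invoke a Watson-type lemma or the standard fact that if $f(\lambda)\sim\sum a_k\lambda^{-k}$ with $a_0=a_1=0$ uniformly in a closed subsector, then $\int_\lambda^{\infty\cdot e^{{\rm i}\theta}}f\sim\sum a_k\lambda^{-k+1}/(k-1)$, taking care that the ray of integration to $+\infty$ stays within the sector of validity of \eqref{paraexpand} (which it does, since $\arg\lambda$ is constant along the ray and the sector is open). A secondary, purely bookkeeping, point is to confirm the leading index: $\He_{n-1}\He_{-n-1}$ begins at order $\lambda^{n-1}\cdot\lambda^{-n-1}=\lambda^{-2}$, so the product has no $\lambda^{-1}$ term and the antiderivative indeed has no logarithmic term, consistently with $C_1$ being a genuine power series in $\lambda^{-1}$; this is what makes the normalization "constant of integration $=0$ at $+\infty$" the correct one.
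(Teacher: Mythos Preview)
Your approach is essentially the same as the paper's: the paper simply writes ``Using~\eqref{paraexpand} and Proposition~\ref{ayexp}'' and leaves the rest to the reader, and what you outline is precisely the unpacking of that sentence --- substitute the Hermite asymptotics into the integrand, integrate termwise, and match against the explicit series~\eqref{onepoint} for $C_1$ (equivalently, invoke Proposition~\ref{ayexp} to identify the integrand asymptotically with an entry of~$R$). Your attention to step~(iii), the legitimacy of integrating an asymptotic expansion over an infinite ray, is a genuine point that the paper does not address and is handled correctly by the Watson-type argument you sketch.

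One correction on bookkeeping: the product $\He_{n-1}\bigl(\lambda/\sqrt{\epsilon}\bigr)\He_{-n-1}\bigl({\rm i}\lambda/\sqrt{\epsilon}\bigr)$ appearing in $\partial_\lambda I_1$ is, up to scalar, the \emph{bottom-right} (i.e.\ diagonal $(2,2)$) entry of~$M$ in~\eqref{defineM}, not the $(1,1)$ or bottom-left entry as you write in two places. Accordingly, under Proposition~\ref{ayexp} it is asymptotic to the $(2,2)$ entry of~$R$ in~\eqref{Rseries}, namely $-\epsilon n\sum_{k\ge 0}\epsilon^k(2k-1)!!(2k+1)e_{n,2k+1}\lambda^{-2k-2}$, whose antiderivative (with vanishing constant at infinity) reproduces~\eqref{onepoint} term by term. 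This does not affect your strategy, only the labeling.
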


\paragraph{Example 2}
Consider the following formal power series $Z_{n} ({\bf s};\sigma,\epsilon)$ defined by
\begin{align}
	Z_{n}({\bf s};\sigma,\epsilon)=\frac{(2\pi)^{-n}\epsilon^{-\frac{1}{12}}}{\mathrm{Vol}(n)}\int_{\mathcal{H}(n)} \det(1-\sigma M)e^{-\frac{1}{\epsilon}\mathrm{tr}\,V(M)}dM.
\end{align}
When $n=1$, we have 
\begin{align}
& Z_{1} (X,0,\dots;\sigma,\epsilon)=\frac{\epsilon ^{5/12} (1-\sigma  X) e^{\frac{X^2}{2 \epsilon }}}{\sqrt{2 \pi }}\label{jue1z1},\\
& q_{\rm s}(X;\sigma,\epsilon)=\frac{\epsilon ^{\frac{1}{2}} (1-\sigma  X) e^{\frac{X^2}{2 \epsilon }}}{\sqrt{2 \pi }},\quad r_{\rm s}(X;\sigma,\epsilon)=0,\quad w_{\rm s}(X;\sigma,\epsilon)=0,\quad v_{\rm s}(X;\sigma,\epsilon)=X-\frac{\epsilon\sigma}{1-X\sigma}.
\end{align}
Then by solving~\eqref{todaeq1}
we get
\begin{align}
&w(x;X;\sigma,\epsilon)=\epsilon n\frac{\h_{n-1}(X_{1})\h_{n+1}(X_{1})}{(\h_{n}(X_{1}))^2},\\
&v(x;X;\sigma,\epsilon)=X+\frac{\sqrt{2\epsilon}n\h_{n-1}(X_{1})}{\h_{n+1}(X_{1})}-\frac{\sqrt{2\epsilon}(n+1)\h_{n}(X_{1})}{\h_{n+1}(X_{1})},
\end{align}
where $X_{1}=\frac{1-\sigma X}{\sqrt{2\epsilon}\sigma}$ and $\h_{n}(z):=2^{\frac{n}{2}}\He_{n}(\sqrt{2}z)$. Using \eqref{defw}--\eqref{defqx}, we have 
\begin{align}
	&q(x;X,0,\dots;\sigma,\epsilon)=\frac{\sigma\epsilon^{n+1}}{2\sqrt{\pi}}n!e^{\frac{X^2}{2\epsilon}}\frac{\h_{n+1}(X_{1})}{\h_{n}(X_{1})},\\
	&r(x;X,0,\dots;\sigma,\epsilon)=\frac{2\sqrt{\pi}}{\sigma\epsilon^{n}(n-1)!}e^{-\frac{X^2}{2\epsilon}}\frac{\h_{n-1}(X_{1})}{\h_{n}(X_{1})}.
\end{align}
Setting $X=0$, we obtain
\begin{align*}
w(x;0;\sigma,\epsilon)=&\epsilon n\frac{\h_{n-1}(X_{2})\h_{n+1}(X_{2})}{\h_{n}^2(X_{2})}\\
=&x-\epsilon\left(\sigma^2x+3\sigma^4x^2+10\sigma^6x^3+\mathcal{O}\left(x ^4\right)\right)+\epsilon^2\left(3\sigma^4x-25\sigma^6x^2+154\sigma^8x^3+\mathcal{O}\left(x ^4\right)\right)+\mathcal{O}\left(\epsilon ^3\right),\\
v(x;0;\sigma,\epsilon)=&\frac{\sqrt{2\epsilon}n\h_{n-1}(X_{2})}{\h_{n+1}(X_{2})}-\frac{\sqrt{2\epsilon}(n+1)\h_{n}(X_{2})}{\h_{n+1}(X_{2})}\\
=&-\epsilon\left(\sigma+2\sigma^3x+6\sigma^5x^2+\mathcal{O}\left(x ^3\right)\right)+\epsilon^2\left(4\sigma^5x+36\sigma^7x^2+232\sigma^9x^3+\mathcal{O}\left(x ^4\right)\right)+\mathcal{O}\left(\epsilon ^3\right)
\end{align*}
where $X_{2}=\frac{1}{\sqrt{2\epsilon}\sigma}$. Using \eqref{todatauf1}--\eqref{todatauf3} and \eqref{jue1z1}, we find
\begin{align}
Z_{n}(X,0,\dots;\sigma,\epsilon)=2^{-n} \pi ^{-\frac{n}{2}} \sigma^n \epsilon^{\frac{n^2+n}{2}-\frac{1}{12}} e^{\frac{n X^2}{2 \epsilon }}G(n+1)\h_{n}(X_{1}).
\end{align}
This formula agrees with Br\'ezin--Hikami's computation~\cite{BH00}. 
Using~\eqref{todarec1}--\eqref{intitoda} one can compute the initial 
matrix resolvent~\eqref{todamr}, denoted by~$R(x;\lambda;\sigma,\epsilon)$. We have 
\begin{align*}
R(x;\lambda;\sigma,\epsilon)=&\begin{pmatrix}
		1&0\\
		0&0
	\end{pmatrix}
+\begin{pmatrix}
		0&-\epsilon n\frac{\h_{n-1}(X_{2})\h_{n+1}(X_{2})}{\h_{n}^2(X_{2})}\\
		1&0
\end{pmatrix}
\frac{1}{\lambda}\\
&{\tiny+\begin{pmatrix}
\epsilon n\frac{\h_{n-1}(X_{2})\h_{n+1}(X_{2})}{(\h_{n}(X_{2}))^{2}}&-\sqrt{2\epsilon}\epsilon n\frac{\h_{n-1}(X_{2})(n\h_{n-1}(X_{2})\h_{n+1}(X_{2})-(n+1)(\h_{n}(X_{2}))^2)}{(\h_{n}(X_{2}))^{3}}\\
\sqrt{2\epsilon}\frac{(n-1)\h_{n}(X_{2})\h_{n-2}(X_{2})-(\h_{n-1}(X_{2}))^2}{\h_{n}(X_{2})\h_{n-1}(X_{2})}&-\epsilon n\frac{\h_{n-1}(X_{2})\h_{n+1}(X_{2})}{(\h_{n}(X_{2}))^2}
\end{pmatrix}\frac{1}{\lambda^2}+\cdots.}
\end{align*}
The logarithmic derivatives of $Z_{n} ({\bf s};\sigma,\epsilon)$ at ${\bf s}={\bf 0}$, 
denoted by $\langle\mathrm{tr} M^{i_1}\cdots\mathrm{tr} M^{i_k}\rangle_c(n)$, 
can then be computed by using \eqref{todatauf1}, \eqref{kpointtoda}. We list the first few of them here:
\begin{align*}
& \langle(\mathrm{tr} M)^2\rangle_c(n)=\frac{n}{\epsilon}\frac{\h_{n-1}(X_{2})\h_{n+1}(X_{2})}{(\h_{n}(X_{2}))^{2}},\\
& \langle\mathrm{tr} M \mathrm{tr} M^2\rangle_c(n)=\frac{\sqrt{2}n}{\sqrt{\epsilon}}\frac{(n-1)\h_{n-2}(X_{2})\h_{n+1}(X_{2})-(n+1)\h_{n-1}(X_{2})\h_{n}(X_{2})}{(\h_{n}(X_{2}))^2},\\
&\langle(\mathrm{tr} M)^3\rangle_c(n)=\frac{\sqrt{2}n}{\sqrt{\epsilon}}\bigg(-\frac{(n+1)\h_{n+1}(X_{2})}{\h_{n}(X_{2})}-\frac{(n-1)\h_{n-2}(X_{2})\h_{n+1}(X_{2})}{(\h_{n}(X_{2}))^2}+\frac{2n(\h_{n-1}(X_{2}))^2\h_{n+1}(X_{2})}{(\h_{n}(X_{2}))^3}\bigg).
\end{align*}

\paragraph{Example 3}
Consider the following formal power series $Z_{n}({\bf s};\rho,\epsilon)$ defined by
\begin{align}
	Z_{n}({\bf s};\rho,\epsilon)=\frac{(2\pi)^{-n}\epsilon^{-\frac{1}{12}}}{\mathrm{Vol}(n)}\int_{\mathcal{H}(n)} \det(1-\rho M^2)e^{-\frac{1}{\epsilon}\mathrm{tr}\,V(M)}dM.\label{partitionfunction2}
\end{align}
When $n=1$ we have 
\begin{align}
&Z_{1} (X,0,\dots;\rho,\epsilon)=\frac{e^{\frac{X^2}{2\epsilon}}\epsilon^{\frac{5}{12}}(1-(X^2+\epsilon)\rho)}{\sqrt{2\pi}}\label{jue2z1},\\
&q_{\rm s}(X;\rho,\epsilon)=\frac{e^{\frac{X^2}{2\epsilon}}\epsilon^{\frac{1}{2}}(1-(X^2+\epsilon)\rho)}{\sqrt{2\pi}},\quad r_{\rm s}(X;\rho,\epsilon)=0,\\
& w_{\rm s}(X;\rho,\epsilon)=0,\quad v_{\rm s}(X;\rho,\epsilon)=X+\frac{2\epsilon X\rho}{(X^2+\epsilon)\rho-1}.
\end{align}
Solving~\eqref{todaeq1} with the initial value \eqref{inivalue}, we get
\begin{align*}
&w(x;X;\rho,\epsilon)={\epsilon n\frac{(\h_{n-1} (X_{3} )\h_{n} (X_{4} )-\h_{n-1} (X_{4} )\h_{n} (X_{3} ))(\h_{n+1} (X_{3} )\h_{n+2} (X_{4} )-\h_{n+1} (X_{4} )\h_{n+2} (X_{3} ) )}{ (\h_{n} (X_{3} )\h_{n+1} (X_{4} )-\h_{n} (X_{4} )\h_{n+1} (X_{3} ) )^2}},\\
&v(x;X;\rho,\epsilon)
	={(n+1)\frac{ (X\h_{n+1} (X_{3} )-\sqrt{2\epsilon}\h_{n} (X_{3} ) )\h_{n+2} (X_{4} )+ (X\h_{n+1} (X_{4} )-\sqrt{2\epsilon}\h_{n} (X_{4} ) )\h_{n+2} (X_{3} )}{\h_{n+1} (X_{3} )\h_{n+2} (X_{4} )-\h_{n+1} (X_{4} )\h_{n+2} (X_{3} )}}\\
&\qquad\qquad\qquad\,\, {-n\frac{ (X\h_{n} (X_{3} )-\sqrt{2\epsilon}\h_{n-1} (X_{3} ) )\h_{n+1} (X_{4} )+ (X\h_{n} (X_{4} )-\sqrt{2\epsilon}\h_{n-1} (X_{4} ) )\h_{n+1} (X_{3} )}{\h_{n} (X_{3} )\h_{n+1} (X_{4} )-\h_{n} (X_{4} )\h_{n+1} (X_{3} )}},
\end{align*}
where $X_{3}=\frac{1-\sqrt{\rho}X}{\sqrt{2\epsilon\rho}},X_{4}=\frac{-1-\sqrt{\rho}X}{\sqrt{2\epsilon\rho}}$. Using \eqref{defw}--\eqref{defqx}, we have 
\begin{align}
		&q(x;X,0,\dots;\rho,\epsilon)=\frac{\rho\epsilon^{\frac{3}{2}+n}}{2\sqrt{2\pi}}e^{\frac{X^2}{2\epsilon}}n!\frac{\h_{n+1}(X_{3})\h_{n+2}(X_{4})-\h_{n+1}(X_{4})\h_{n+2}(X_{3})}{\h_{n}(X_{3})\h_{n+1}(X_{4})-\h_{n}(X_{4})\h_{n+1}(X_{3})},\\
		&r(x;X0,\dots;\rho,\epsilon)=\frac{2\sqrt{2\pi}}{\rho\epsilon^{\frac{1}{2}+n}(n-1)!}e^{-\frac{X^2}{2\epsilon}}\frac{\h_{n-1}(X_{3})\h_{n}(X_{4})-\h_{n-1}(X_{4})\h_{n}(X_{3})}{\h_{n}(X_{3})\h_{n+1}(X_{4})-\h_{n}(X_{4})\h_{n+1}(X_{3})}.
\end{align}
Setting $ X=0$, we obtain
\begin{align*}
w(x;0;\rho,\epsilon)=&\epsilon n\frac{\h_{n-1}(X_{5})\h_{n+2}(X_{5})}{\h_{n}(X_{5})\h_{n+1}(X_{5})}\\
		=& x -\epsilon\bigl(2\rho x+6\rho^2x^2+20\rho^3x^3+\mathcal{O}\left(x ^4\right)\bigr)+\epsilon^2\bigl(4\rho^2x+36\rho^3x^2+232\rho^4x^3+\mathcal{O}\left(x^4\right)\bigr)+\mathcal{O}\left(\epsilon ^3\right),\\
		v(x;0;\rho,\epsilon)=&0,
\end{align*}
where $X_{5}=\frac{1}{\sqrt{2\epsilon\rho}}$. Using \eqref{todatauf1}--\eqref{todatauf3} and \eqref{jue2z1}, we find
\begin{align*}
	Z_{n}(X,0,\dots;\rho,\epsilon)
	=(-1)^{n+1} 2^{-\frac{3}{2}(n+1)} \pi ^{-\frac{n}{2}} \rho ^{n+\frac{1}{2}} \epsilon ^{\frac{n^2}{2}+n+\frac{5}{12}} e^{\frac{n X^2}{2 \epsilon }}G(n+1)\Big(\h_{n}(X_{3})\h_{n+1}(X_{4})-\h_{n}(X_{4})\h_{n+1}(X_{3})\Big).
\end{align*}
This formula again agrees with Br\'ezin--Hikami's computation~\cite{BH00}. 
Using~\eqref{todarec1}--\eqref{intitoda} one can compute the initial matrix resolvent~\eqref{todamr}, denoted by~$R(x;\lambda;\rho,\epsilon)$.
We have
\begin{align*}
	&R(x;\lambda;\rho,\epsilon)
	\\
	=&\begin{pmatrix}
		1&0\\
		0&0
	\end{pmatrix}+\begin{pmatrix}
		0&-\epsilon n\frac{\h_{n-1}(X_{5})\h_{n+2}(X_{5})}{\h_{n}(X_{5})\h_{n+1}(X_{5})}\\
		1&0
	\end{pmatrix}\frac{1}{\lambda}
	+\begin{pmatrix}
		\epsilon n\frac{\h_{n-1}(X_{5})\h_{n+2}(X_{5})}{\h_{n}(X_{5})\h_{n+1}(X_{5})}&0\\
		0&-\epsilon n\frac{\h_{n-1}(X_{5})\h_{n+2}(X_{5})}{\h_{n}(X_{5})\h_{n+1}(X_{5})}
	\end{pmatrix}\frac{1}{\lambda^2}+\cdots.
\end{align*}
The logarithmic derivatives of $Z_{n} ({\bf s};\rho,\epsilon)$ at ${\bf s}={\bf 0}$, 
denoted by $\langle\mathrm{tr} M^{i_1}\cdots\mathrm{tr} M^{i_k}\rangle_c(n)$, 
can then be computed by using \eqref{todatauf1}, \eqref{kpointtoda}. We list the first few of them here:
\begin{align*}
&\langle \left(\tr M\right)^2\rangle_c(n)=\frac{n}{\epsilon}\frac{\h_{n-1}(X_{5})\h_{n+2}(X_{5})}{\h_{n}(X_{5})\h_{n+1}(X_{5})},\\
&\langle\tr M \tr M^{3}\rangle_c(n)
=\frac{(n-1)\h_{n-2}(X_{5})\h_{n+2}(X_{5})}{(\h_{n}(X_{5}))^2}+n\Big(\frac{\h_{n-1}(X_{5})\h_{n+2}(X_{5})}{\h_{n}(X_{5})\h_{n+1}(X_{5})}\Big)^2+\frac{(n+1)\h_{n-1}(X_{5})\h_{n+3}(X_{5})}{(\h_{n+1}(X_{5}))^2},\\
&\langle\tr M^{2}\left(\tr M\right)^{2}  \rangle_c(n)=\frac{n(n+1)\h_{n-1}(X_{5})\h_{n+3}(X_{5})}{\epsilon(\h_{n+1}(X_{5}))^2}-\frac{n(n-1)\h_{n-2}(X_{5})\h_{n+2}(X_{5})}{\epsilon(\h_{n}(X_{5}))^2}.
\end{align*}

\smallskip

\paragraph{Acknowledgments}
We would like to thank Professors Mattia Cafasso, Xian Liao, Dangzheng Liu for helpful suggestions. 
The work is supported by the National Key R 
and D Program of China 2020YFA0713100 
and NSFC 12061131014.

\begin{appendix}
\section{Review of the MR method to tau-functions for the Toda lattice hierarchy}\label{appa}

Let us give a brief review of the MR method of computing logarithmic derivatives of tau-functions for the Toda lattice hierarchy. 
Denote by $\mathbb{Z}\left[{\bf v},{\bf w}\right]$ of polynomial ring, where ${\bf v}=(v=v_{0},v_{1},v_{-1},v_{2},v_{-2},\dots),{\bf w}=(w=w_{0},w_{1},w_{-1},w_{2},w_{-2},\dots)$. 
Define the shift operator $\Lambda:\mathbb{Z}\left[{\bf v},{\bf w}\right]\to \mathbb{Z}\left[{\bf v},{\bf w}\right]$ via
\begin{align}
	\Lambda( v_{k})=v_{k+1},\quad \Lambda (w_{k})=w_{k+1},\quad \Lambda (fg)=\Lambda(f)\Lambda(g)
\end{align}
for any $ k\in\mathbb{Z}$ and $f,g\in \mathbb{Z}\left[{\bf v},{\bf w}\right]$. Denote by $\mathcal{L}(\lambda)$ the matrix Lax operator for the Toda lattice:
\begin{align}\label{laxtoda}
	\mathcal{L}(\lambda)=
\Lambda+U(\lambda),\quad U(\lambda):=\begin{pmatrix}
		v-\lambda&w\\
		-1&0
	\end{pmatrix}.
\end{align}
Here, $\lambda$ is a parameter. 
It is proved in~\cite{DY17} that there exists a unique series $R(\lambda)$ satisfying
\begin{align}
&R(\lambda)-\begin{pmatrix}
1&0\\0&0\\
\end{pmatrix}
\in\mathrm{Mat} \left(2, \mathbb{Z}\left[{\bf v},{\bf w}\right][\epsilon]\left[\left[\lambda^{-1} \right]\right]\lambda^{-1}\right),\label{todamr}\\
&\Lambda(R(\lambda))U(\lambda)-U(\lambda)R(\lambda)=0, \label{l32}\\
&\mathrm{tr} \, R(\lambda)=1,\quad \mathrm{det}\ R(\lambda)=0. \label{l33}
\end{align}
The unique $R(\lambda)$ is called the {\it basic matrix resolvent} of~$\mathcal{L}(\lambda)$.
Write
\begin{align}
	&R(\lambda)=\begin{pmatrix}
		1+\alpha(\lambda)&\beta(\lambda)\\
		\gamma(\lambda)&\alpha(\lambda)
	\end{pmatrix},  \label{Rleadingterm1023} \\
	&\alpha(\lambda)=\sum_{j\ge 0} \frac{a_{ j}}{\lambda^{j+1}},\quad 
	\beta(\lambda)=\sum_{j\ge 0} \frac{b_{j}}{\lambda^{j+1}},\quad  
	\gamma(\lambda)=\sum_{j\ge 0} \frac{c_{j}}{\lambda^{j+1}},\label{compoR}
\end{align}
where $a_{j},b_{j},c_{j}\in\mathbb{Z}\left[{\bf v},{\bf w}\right]$. 
 Then $a_{j},b_{j},c_{j},j\ge0$, satisfy that 
\begin{align}
	& 		b_{j}=-wc_{j},\quad c_{j+1}=v_{-1}c_{j}+a_{j}+\Lambda^{-1}(a_{j}),\label{todarec1}\\
	& 		a_{j+1}-\Lambda(a_{j+1})+v(\Lambda(a_{j})-a_{j})+w_{1}\Lambda^{2}(c_{j})-wc_{j}=0,\\
	& 		a_{l}=\sum_{i+j=l-1} (wc_{i}\Lambda(c_{ j})-a_{i}a_{j}),\quad l\ge 1, \label{sumtoda}\\
	&       a_{0}=0,\quad  c_{0}=1\label{intitoda}.
\end{align}
The first few terms of $R(\lambda)$ are given by
\begin{align}
	R(\lambda)=\begin{pmatrix}
		1&0\\
		0&0
	\end{pmatrix}+\begin{pmatrix}
		0&-w\\
		1&0
	\end{pmatrix}\frac{1}{\lambda}+\begin{pmatrix}
		w&-vw\\
		v_{-1}&-w
	\end{pmatrix}\frac{1}{\lambda^2}+\cdots.
\end{align}
Define a sequence of derivations on $\epsilon^{-1}\mathbb{Z}\left[{\bf v},{\bf w}\right]$ by 
\begin{align}\label{absfloetoda}
		\mathcal{D}_j(v)=\epsilon^{-1}(\Lambda(a_{j+1})-a_{j+1}),\quad
		\mathcal{D}_j(w)=\epsilon^{-1}w(\Lambda(c_{j+1})-c_{j+1}),
	    \quad j\ge 0,
\end{align}
as well as requiring $\left[\mathcal{D}_j,\Lambda\right]=0$.
We know that from~\cite{DY17,Y20} $(\mathcal{D}_j)_{j\ge 0}$ all commute.
If we think of~$v,w$, as two functions $v(x),w(x)$ of $x$, respectively, and $v_{i},w_{i}$ as $v(x+i\epsilon),w(x+i\epsilon)$, 
then the Toda lattice hierarchy can be written as
\begin{align}\label{Todahierarchy}
		\frac{\partial v(x)}{\partial t_{j}}=\mathcal{D}_j(v)(x),\quad
		 \frac{\partial w(x)}{\partial t_{j}}=\mathcal{D}_j(w)(x),
\end{align}
where $j\ge 0$, and the $\mathcal{D}_j(v)(x),\mathcal{D}_j(w)(x)$ are defined as $\mathcal{D}_j(v)(x),\mathcal{D}_j(w)(x)$ with $v_{i},w_{i}$ replaced by $v(x+i\epsilon),w(x+i\epsilon)$, respectively. For $j\ge 0$, define
\begin{align}
	V _{j}(\lambda):=\Big(\lambda^{j+1}R(\lambda)\Big)_{+}+\begin{pmatrix}
		0&0\\
		0&c_{j+1}
	\end{pmatrix}. \label{defVj1023}
\end{align}
It is proved in~\cite{DY17} that
\begin{align}
	\epsilon\mathcal{D}_{j}(R(\lambda))=[V _{j}(\lambda),R(\lambda)],\quad j\ge 0.
\end{align}
In particular, when $j=0$, we have
\begin{align}
\epsilon\mathcal{D}_0(R(\lambda))=[V _0(\lambda),R(\lambda)],\quad V_{0}(\lambda)=\begin{pmatrix}
	\lambda&-w\\
	1&v_{-1}
\end{pmatrix}. \label{35}
\end{align}
Using~\eqref{35} one can obtain
\begin{align}
\epsilon \mathcal{D}_{0}(c_{j})=2a_{j}+v_{-1}c_{j}-c_{j+1}.\label{dcj}
\end{align}
Define a collection of polynomials  $(\Omega_{i,j})_{ i,j\ge 0}$ in $\mathbb{Z}\left[{\bf v},{\bf w}\right]$ by
\begin{gather}
	\frac{\mathrm{tr}\, R(\lambda)R(\mu)-1}{(\lambda-\mu)^2}=\sum_{i, j\ge 0} \frac{\Omega_{i;j}}{\lambda^{i+2}\mu^{j+2}}.  \label{taustr}
\end{gather}
For example, $\Omega_{0,0}=w$,  $\Omega_{1,0}=\Omega_{0,1}=w(v+v_{1})$.
These $\Omega_{i,j}$ satisfy
\begin{align}
	\Omega_{i;j}\in \mathbb{Z}\left[{\bf v}, {\bf w}\right],\quad \Omega_{i,j}=\Omega_{j,i},\quad \mathcal{D}_l( \Omega_{i,j})=\mathcal{D}_i(\Omega_{l,j}),\qquad \forall\, i,j,l\ge 0. \label{taustr1018}
\end{align}
The polynomials $(\Omega_{i;j})_{i,j\ge 0}$ are called the {\it tau-structrue} of the Toda lattice hierarchy.
For an arbitrary solution $v=v(x; {\bf t};\epsilon), w=w(x;{\bf t};\epsilon)$ to the Toda lattice hierarchy~\eqref{Todahierarchy}, 
let $R(x;{\bf t};\lambda;\epsilon)$ be the associated matrix resolvent. We know from~\eqref{taustr1018} that 
there exists a function $\tau(x;{\bf t};\epsilon)$ such that
\begin{align}
&	\sum_{i, j\ge 0} \frac{1}{\lambda^{i+2}\mu^{j+2}}\epsilon^2\frac{\partial^2\log\tau(x;{\bf t};\epsilon)}{\partial t_{i}\partial t_{j}}=
	\frac{\mathrm{tr}R(x;{\bf t};\lambda;\epsilon)R(x;{\bf t};\mu;\epsilon)-1}{(\lambda-\mu)^2}\label{todatauf1},\\
&	\frac{1}{\lambda}+\sum_{i\ge 0} \frac{1}{\lambda^{i+2}}\epsilon\frac{\partial}{\partial t_{i}}\log\frac{\tau(x+\epsilon; {\bf t};\epsilon)}{\tau(x;{\bf t};\epsilon)}=\left[\Lambda(R(x;{\bf t};\lambda;\epsilon))\right]_{21} \label{211},\\
&	\frac{\tau(x+\epsilon; {\bf t};\epsilon)\tau(x-\epsilon; {\bf t};\epsilon)}{\tau(x;{\bf t};\epsilon)^2}=w(x;{\bf t};\epsilon).\label{todatauf3}
\end{align}
The function $\tau(x;{\bf t};\epsilon)$ is determined uniquely by the solution 
$(v(x;{\bf t};\epsilon),w(x;{\bf t};\epsilon))$ up to 
multiplying by the exponential of a linear function 
\begin{align}
\tau(x;{\bf t};\epsilon)\mapsto e^{b_0+b_{1}x+\sum_{j\ge 0} b_{j+2}t_{j}}\tau(x;{\bf t};\epsilon), \quad b_{0},b_{1},b_{2},\cdots\in\mathbb{C}((\epsilon)).
\end{align}
The function $\tau(x;{\bf t};\epsilon)$ defined by \eqref{todatauf1}--\eqref{todatauf3} is called the {\it tau-function of the solution $(v(x;{\bf t};\epsilon),w(x;{\bf t};\epsilon))$} to the Toda lattice hierarchy. Remarkably, the higher logarithmic derivatives of $\tau(x;{\bf t};\epsilon)$ can all be 
expressed in terms of the matrix resolvent as follows~\cite{DY17}:
\begin{align}\label{kpointtoda}
\sum_{i_1,\dots,i_k\ge 0} \epsilon^k\frac{\partial^k\log\tau(x;{\bf t};\epsilon)}{\partial t_{i_1}\cdots\partial t_{i_k}}\prod_{j=1}^{k}\frac{1}{\lambda_{j}^{i_j+2}}
	=-\sum_{\sigma\in S_k/ C_k}\frac{\mathrm{tr} \left(R(x;{\bf t};\lambda_{\sigma(1)};\epsilon)\cdots R(x;{\bf t};\lambda_{\sigma(k)};\epsilon)\right)}{(\lambda_{\sigma(1)}-\lambda_{\sigma(2)})\cdots(\lambda_{\sigma(k-1)}-\lambda_{\sigma(k)})(\lambda_{\sigma(k)}-\lambda_{\sigma(1)})}.
\end{align}

\end{appendix}

\medskip

\noindent School of Mathematical Sciences, USTC, Hefei 230026, P.R. China

\smallskip

\noindent fuang@mail.ustc.edu.cn, diyang@ustc.edu.cn

\end{document}